\documentclass[11pt,letterpaper]{article}

\usepackage{geometry}
\geometry{
margin = 1.0in
}

\usepackage{natbib}

\usepackage{titlesec}

\usepackage{microtype}
\usepackage{subfigure}
\usepackage{booktabs}
\usepackage{amsmath}
\usepackage{amsfonts}
\usepackage{amssymb}
\usepackage{amsthm}
\usepackage{amsbsy}
\usepackage{graphicx}
\usepackage{mathtools}

\usepackage{verbatim}
\usepackage{url}
\usepackage{bbm}
\usepackage{multirow}
\usepackage{enumitem}
\usepackage{xspace}
\usepackage{footnote}
\usepackage{hyperref}
\usepackage[svgnames]{xcolor}
\usepackage[capitalise,nameinlink]{cleveref}
\hypersetup{colorlinks={true},linkcolor={DarkBlue},citecolor=[named]{DarkGreen},urlcolor=black}

\usepackage{tikz}
\usepackage{algorithm}
\usepackage[noend]{algpseudocode}
\renewcommand{\algorithmiccomment}[1]{\bgroup\hfill\small\textcolor{gray}{//~#1}\egroup}
\algnotext{EndFor}
\algnotext{EndIf}
\algnotext{EndWhile}
\algnotext{EndProcedure}
\makeatletter
\renewcommand{\ALG@name}{Mechanism}
\makeatother

\usepackage{tikz}
\usepackage{pgfplots}
\pgfplotsset{compat=1.18}
\usetikzlibrary{math, calc}
\newcommand{%
	\scalebox{}{\input{}}  
}[2]{%
	\scalebox{#1}{\input{#2}}  
}

\usepackage{bm}

\usepackage{titling}
\usepackage[noblocks]{authblk}

\theoremstyle{plain}
\newtheorem{theorem}{Theorem}[section]

\newtheorem{lemma}[theorem]{Lemma}
\newtheorem{corollary}[theorem]{Corollary}
\theoremstyle{definition}
\newtheorem{definition}[theorem]{Definition}

\newtheorem{example}[theorem]{Example}

\usepackage[tt=false]{libertine}



\usepackage{soul}
\setuldepth{p}


\newcommand{\mms}{\ensuremath{\textrm{\MakeUppercase{mms}}}\xspace}

\newcommand{\efo}{\ensuremath{\textrm{\MakeUppercase{ef}1}}\xspace}

\newcommand{\opt}{\ensuremath{\mathrm{OPT}}}
\newcommand{\optm}{\ensuremath{\mathrm{OPT}^-}}
\newcommand{\fmms}{\ensuremath{\mathrm{FMMS}}}

\DeclareMathOperator*{\argmax}{arg\,max}
\DeclareMathOperator{\E}{\mathbb{E}}

\newcommand{\multisub}{\ensuremath{\textsc{MultiSubmod}}}

\allowdisplaybreaks

\usepackage[skip=.4\baselineskip plus 2.5pt minus 1.0pt]{parskip}
\setlength{\topsep}{8.0pt plus 2.0pt minus 2.0pt}

\begin{document}

\author[1,2]{Georgios Amanatidis}
\affil[1]{Department of Informatics, Athens University of Economics and Business, Greece.}
\affil[2]{Archimedes,  Athena Research Center, Greece.}
\author[3]{Georgios Birmpas}
\affil[3]{Department of Computer Science, University of Liverpool, UK.}
\author[4]{Philip Lazos}
\affil[4]{Independent Researcher. }
\author[5]{\\ Stefano Leonardi}
\affil[5]{Department of Computer, Control, and Management Engineering, Sapienza University of Rome, Italy. }
\author[6]{Rebecca Reiffenh{\"a}user}
\affil[6]{Institute for Logic, Language and Computation, University of Amsterdam, The Netherlands.  }

\title{Algorithmically Fair Maximization of Multiple Submodular Objective Functions and Implications to Constrained Fair Division}

\predate{}
\postdate{}
\date{}

\maketitle

\begin{abstract}
Constrained maximization of submodular functions is a central problem in combinatorial optimization. In many realistic scenarios, multiple agents each need to maximize their own submodular objective over a common ground set, subject to individual combinatorial constraints, with the requirement that their solutions be disjoint. We study this setting through the lens of algorithmic fairness and constrained fair division. Inspired by the fair division literature, we propose and analyze a simple Round-Robin protocol in which agents take turns building their solutions one item at a time; each agent is free to use any internal algorithm, and the protocol itself performs no computation. We show that agents following simple greedy policies enjoy solid guarantees for both monotone and non-monotone objectives subject to combinatorial constraints as general as $p$-systems, which capture cardinality and matroid intersection constraints. For monotone objectives, a greedy agent $i$ with a $p_i$-system constraint achieves a $1/(n+p_i)$ fraction of the best value available when they first get to choose. On instances that are robust to competition---where no single agent's optimal value is greatly affected by losing a bounded number of items to others---these guarantees improve to a $1/\Theta(p_i)$ approximation of the unconstrained optimum, which is asymptotically best-possible in polynomial time. We further establish novel fairness guarantees: greedy agents produce approximately feasible-envy-free-up-to-one-item (FEF1) and approximately feasible-envy-free-towards-unallocated-items (FEFu) allocations for both monotone and non-monotone objectives. Via a simple augmented protocol and a self-contained polynomial-time proxy algorithm, we also obtain the first $\Theta(1/p_i)$-approximate feasible maximin share (FMMS) guarantees for submodular agents with general combinatorial constraints. Finally, although greedy policies may not be individually optimal, we show that consistently improving upon them is NP-hard even in the simplest settings.
\end{abstract}

\section{Introduction}
\label{sec:intro}
Dealing with competing interests of selfish agents poses the main challenge at the heart of many research directions.
In \textit{auction design}, one aims to elicit honest bidding in pursuit of a welfare-maximizing assignment of items; in \textit{social choice}, the goal is to map individual preferences to outcomes that serve society as a whole; in \textit{fair division}, one attempts to serve agents' interests in a way that satisfies some notion of equal treatment.

We study a natural setting that shares a similar flavor: there are $n$ agents, each of whom wants to find a subset of discrete resources that maximizes their own submodular objective function, subject to a combinatorial constraint (such as a cardinality or matroid intersection constraint).
The agents' solutions must be disjoint, so they inevitably compete for resources.
What distinguishes our approach from classic algorithmic design is that we do \emph{not} want to solve a centralized combinatorial optimization problem on behalf of the agents.
Instead, we ask: can a simple, instance-independent \emph{protocol}---one that the agents themselves agree is algorithmically fair---allow each agent to obtain a provably good solution?

As an illustrating example, consider the problem of influence maximization, e.g., as defined in \citet{breuer2020fast}.
Multiple companies want to advertise on the same social network (e.g., Facebook) by selecting disjoint sets of seed users.
Each company solves its own influence maximization problem.
How should one implement a selection process that the companies would agree is fair?
And how should one exploit the fact that, at large scale, an agent's optimal value is not greatly affected by the removal of a few contested items?

We address these questions by investigating the Round-Robin protocol, where agents take turns adding one item at a time to their respective solutions.
Inspired by the fair division literature, this protocol is transparent, non-adaptive (its order does not depend on the agents' valuations or constraints), and delegates all computational work to the agents themselves.
Our main finding is that greedy policies---which are both simple and polynomial-time---yield strong guarantees in this protocol, both from an optimization standpoint and from a fairness standpoint.

From an algorithm or mechanism design point of view, contrary to classic approaches in social choice or auctions, we never collect the valuations, nor are we interested in eliciting truthful reports. 
Instead we opt for the much simpler solution of defining a selection protocol and then show that it is possible for every agent to have a guarantee on the quality of their final solution, even by following a simple policy. 
Note that we do not imply any specific way this should be done; whether the agents indeed realize their potential, or what algorithms are actually used in the background, is of no direct interest. Rather, the aim here is to provide the participants with an agreeable framework that reasonably limits the negative effects of competition and avoids a possible \textit{winner-takes-all} situation.

From an optimization perspective, a clear advantage of our approach is that it allows every agent to get a reasonably good solution.
Assuming there are $n$ agents, each with an objective function $f_i$ over a set of items $M$, note that no algorithmic framework for the problem can guarantee an approximation factor of more than $1/n$ to all the agents with respect to the optimal value they could obtain without competition, even when randomization is allowed. This is straightforward to see in the example where $f_i(S) = |S|$, for all $S\subseteq M$ and $i\in [n]$. The best worst-case guarantee for the least satisfied agent is $1/n$ 
and, thus, we would like an algorithmic framework with comparable guarantees for highly competitive instances like this,  which ideally can significantly improve for ``nicer'' instances.
What we show is that by following a simple greedy policy throughout our Round-Robin protocol, an agent can achieve such guarantees, even when they have strong feasibility constraints. While we are not interested in enforcing any specific policy, we show that improving over these greedy policies is NP-hard for the agents, even in the very simplest of settings. 
Interestingly enough, as one moves towards instances that are more robust to competition---under a \textit{large market} flavored assumption that we call $(\alpha, \beta)$-robustness (Definition \ref{def:robust})---the guarantees of the protocol improve up to the point where constant approximation guarantees can be achieved for everyone. Such instances, where losing an item to another agent has a limited effect, are closer to what is often seen in real-world applications like the aforementioned influence maximization example.

Finally, there is an obvious connection of our problem with fair division, even in the choice of the protocol itself. If the items in $M$ were allocated in a centralized way, one could directly interpret our problem as a fair division problem where the items are indivisible and the agents have submodular valuation functions as well as feasibility constraints over the subsets of goods. Indeed, we explore these connections in depth. On one hand, in Sections \ref{sec:fairness-monotone} and \ref{sec:fairness-non-monotone} we show that our optimization results imply \textit{feasible envy-freeness up to one item} (FEF1) results along with approximate \textit{feasible envy-freeness towards the unallocated items} (FEFu). On the other hand, by interpreting $(\alpha, \beta)$-robustness against the maximin share of an agent instead of their optimal value, in Section \ref{sec:mms} we modify our Round-Robin protocol to allow for maximin share guarantees. To the best of our knowledge, both groups of results are the first of their kind for instances with so rich combinatorial constraints.

\smallskip

\noindent{\textbf{Our Contribution.}}
We initiate the study of \emph{coordinated maximization protocols} for the problem where $n$ agents want to maximize a submodular function each, say $f_i$ for agent $i$, subject to a combinatorial constraint; all the functions are defined over a common ground set $M$ and the agents' solutions should be disjoint.
We suggest and analyze a natural Round-Robin protocol (Protocol~\ref{alg:MRR}) and a randomized variant (Protocol~\ref{alg:RMRR}), as well as an Augmented Round-Robin protocol (Protocol~\ref{alg:AugRR}) designed for maximin share guarantees.
Our main results are:

\begin{itemize}

\item \textbf{Optimization guarantees via greedy policies.}
An agent $i$ with a monotone submodular objective and a $p_i$-system constraint who chooses greedily achieves value at least $\optm_i/(n+p_i)$ (Theorem~\ref{thm:greedy_monotone_general}), improving to $\optm_i/n$ for cardinality constraints (Theorem~\ref{thm:greedy_monotone_cardinality}).
For non-monotone objectives, two simultaneous greedy solutions yield value at least $\optm_i/(4n+4p_i+2)$ (Theorem~\ref{thm:greedy_non-monotone_general}), improving to $\optm_i/(4n+2)$ for cardinality constraints (Theorem~\ref{thm:greedy_non-monotone_cardinality}).

\item \textbf{FEF1 and FEFu fairness guarantees.}
A greedy agent $i$ with a monotone objective and a $p_i$-system constraint achieves a $1/(p_i+2)$-FEF1 guarantee with respect to any other agent's bundle (Theorem~\ref{thm:greedy_monotone_ef1}), and a $1/(p_i+1)$-FEFu guarantee with respect to unallocated items.
For cardinality constraints this improves to $0.5$-FEF1 and $0.5$-FEFu (Theorem~\ref{thm:greedy_monotone_cardinality_ef1}).
Simulating Round-Robin with greedy policies for all agents efficiently produces a $\frac{1}{p+2}$-FEF1 maximal allocation (Corollary~\ref{cor:fairness_p-system}).
Analogous, though weaker, results hold for non-monotone objectives (Section~\ref{sec:fairness-non-monotone}).

\item \textbf{Maximin share guarantees.}
Via the Augmented Round-Robin protocol and a self-contained polynomial-time proxy (Algorithm~\ref{alg:maximin_proxy}), each agent $i$ with a monotone objective and a $p_i$-system (resp.\ cardinality) constraint achieves value at least $\fmms_i/(2p_i+5)$ (resp.\ $\fmms_i/5$), where $\fmms_i$ is agent $i$'s feasible maximin share (Theorem~\ref{thm:greedy_monotone_general_MMS}).
The first $\Theta(1/p_i)$-approximate FMMS guarantees for non-monotone objectives are also established (Theorem~\ref{thm:greedy_non-monotone_general_MMS}).

\item \textbf{Improved guarantees on robust instances.}
On instances that are $(\Omega(n),O(1))$-robust with respect to every agent, greedy policies achieve $1/\Theta(p_i)$ of $\opt_i$ (Corollaries~\ref{cor:greedy_monotone_general_robust_2} and~\ref{cor:greedy_non-monotone_general_robust_2}), which is asymptotically best-possible in polynomial time.
Randomizing over the agent ordering (Protocol~\ref{alg:RMRR}) yields comparable ex-ante guarantees with respect to $\opt_i$ even on non-robust instances (Theorem~\ref{thm:randomized}).
The robustness-based analysis is also the key ingredient for the maximin share results above.

\item \textbf{Hardness of improvement.}
Even when all other agents have additive objectives and follow greedy policies, consistently achieving a $(1+\varepsilon)$-improvement over the greedy policy is NP-hard for any small $\varepsilon$ (Theorem~\ref{thm:second_hardness}).
\end{itemize}

The analysis of greedy policies in the Round-Robin protocol requires novel combinatorial mappings that associate items added to an agent's solution with items ``lost'' due to myopic choices or competition, simultaneously for multiple disjoint reference sets (Lemmata~\ref{lem:mapping-monotone-general} and~\ref{lem:mapping-non-monotone-general}).
These mappings are the technical foundation for all our results.

\smallskip

\noindent{\textbf{Technical Challenges.}}
The central technical difficulty is that in the Round-Robin protocol an agent may fail to include an item in their solution for two fundamentally different reasons: either their own greedy choices have rendered it infeasible (because its addition would violate the independence constraint), or another agent has already taken it. Existing analyses of greedy algorithms for constrained submodular maximization handle only the former, and naively combining both sources of loss would yield an approximation factor of order $\Theta(np_i)$ rather than the $\Theta(n + p_i)$ we obtain. The key technical result that overcomes this is a novel item-mapping lemma (Lemma~\ref{lem:mapping-monotone-general} and its counterpart for non-monotone objectives, Lemma~\ref{lem:mapping-non-monotone-general}). At a high level, we partition the items an agent fails to collect into two groups: those lost to other agents between two consecutive choices of agent~$i$ (at most $n - 1$ per round, each with marginal value no better than the next greedy pick), and those that became infeasible due to the agent's own previous choices. Bounding the latter group requires a careful inductive argument using the $p$-system definition: after $r$ greedy picks, at most $rp_i$ items could have been rendered infeasible, allowing us to map at most $p_i$ of them to each greedy choice of agent~$i$ without violating the marginal-value ordering. The lemma is stated for $a$ disjoint feasible reference sets (not just the single optimal solution), a generalization that may seem purely technical at first but turns out to be the key to both the robust-instance results and the maximin share analysis, where $a = \Theta(n)$ reference sets arise naturally from the  definitions of  robustness and of the feasible maximin share.

The maximin share results (Section~\ref{sec:mms}) rest on a further structural observation: if no available singleton is worth at least $\fmms_i/\theta$, then the remaining instance is automatically $\Omega(n)$-robust (a maximin share defining partition provides $n$ disjoint feasible sets of significant value), so the robust-instance machinery directly applies. When $\fmms_i$ is unknown---computing it is a hard problem in general---we show that an agent can efficiently compute a proxy value in $[\fmms_i/\theta, \fmms_i]$ by simulating $O(n)$ executions of the greedy Round-Robin protocol on copies of themselves, incurring only a constant-factor loss in the final guarantee. 

\smallskip

\noindent{\textbf{Related Work.}}
There is a vast literature on optimizing a submodular function, with or without constraints, dating back to the seminal works of \citet{NemhauserWF78,FNemhauserW78}. For an overview of the main variants of the problem and the corresponding results, we refer the interested reader to the recent survey of \citet{BuchbinderF18} and the references therein. 
For maximizing a monotone submodular function subject to a $p$-system constraint, in particular, the simple greedy algorithm that adds to the solution the item with the highest marginal value in each step achieves an approximation factor of $1/(p+1)$ \citep{FNemhauserW78}, which improves to $1-1/e$ for the case of a cardinality constraint \citep{NemhauserWF78}. The latter is the best possible approximation factor, assuming that $\mathrm{P} \neq \mathrm{NP}$ \citep[see Theorem \ref{thm:Feige-hardness}]{Feige98}. 
Moreover, \citet{BadanidiyuruV14} showed that even for maximizing an \textit{additive} function subject to a $p$-system constraint, achieving an approximation factor of $1/(p-\varepsilon)$ for any fixed $\varepsilon > 0$, requires
exponentially many independence oracle queries.
The state-of-the-art for non-monotone objective functions subject to a $p$-system constraint is much more recent.
\citet{GuptaRST10} introduced a \textit{repeated greedy} framework, which can achieve an approximation factor of $1/2p$, as shown by  \citet{MirzasoleimanBK16}. The best known factor is $1/(p+\sqrt{p\,})$ by \citet{FeldmanHK23}, using the \textit{simultaneous greedy} framework. Simultaneous greedy algorithms bypass non-monotonicity by constructing \emph{multiple} greedy solutions \emph{at the same time}. This idea was first introduced by \citet{AmanatidisKS22} for a knapsack constraint and a $p$-system constraint in a mechanism design setting, and has been used successfully in a number of variants of the problem since \citep{Kuhnle19,HanCCW20,SunZZZ22,PhamT0T23,ChenK23}. 
The impressive versatility of this approach, however, became apparent in the recent work of \citet{FeldmanHK23}, who fully develop a framework that achieves the best known factors for a $p$-system constraint and multiple knapsack constraints combined.

Fair division, both with divisible and with indivisible resources, has been an area of growing popularity over the last years. For a recent survey on fair division of indivisible items, see \citet{AmanatidisABFLMVW23}. The Round-Robin algorithm, which underlies our protocols, is a fundamental
procedure encountered throughout the fair division literature \citep{GW17,CaragiannisKMPS19,ManurangsiS21}, often modified \citep[e.g.,][]{AMNS17,BarmanK20,AzizCIW22} or as a subroutine of more complex algorithms \citep[e.g.,][]{Kurokawa17,ANM2019}. 
The problem of fairly allocating goods subject to cardinality, matroid, or even more general, constraints has been studied before; see \citet{Suksompong21} for a recent survey. 
Nevertheless, to the best of our knowledge, these works only focus on additive valuation functions (or their restrictions with respect to the constraints) and do not study constraints as general as $p$-systems, with the exception of \citet{LiV21} who, however, assume that the same constraint is common to all the agents.
What is probably more relevant to this work is the rich line of work on discrete fair division beyond additive valuation functions, e.g., \citep{BarmanK20,GhodsiHSSY22,AmanatidisBL0R23,GargKK23,UziahuF23,ChrisCMS25,ChekuriKKM24}. In particular, \citet{BarmanK20} showed the existence and polynomial
time computability of $0.21$-MMS allocations for agents with submodular valuation functions. 
This approximation factor was later improved to $1/3$ by \cite{GhodsiHSSY22} and to $10/27$ by \citet{UziahuF23}. The only work that systematically studies non-monotone, non-additive valuation functions is the very recent work of \citet{BarmanV26}. Although in that work the authors study much more general functions than non-monotone submodular, this scarcity of fair division results for non-monotone objectives is indicative of the challenging nature of the problem.

\smallskip

\noindent{\textbf{Comparison to the Conference Version.}} Compared to its conference version \citep{ABLLR25conf}, this work has been thoroughly revised and significantly expanded. The main new contributions are the fairness results: the FEF1 and FEFu guarantees for non-monotone objectives (Section~\ref{sec:fairness-non-monotone}) and the feasible maximin share guarantees via the Augmented Round-Robin protocol and the Maximin Share Proxy algorithm (Section~\ref{sec:mms}) are entirely new. To accommodate the stronger results in Section~\ref{sec:mms}, the definition of $(\alpha,\beta)$-robustness has been revised to involve a sum over $\alpha$ feasible sets rather than a worst-case bound, and the analyses in Sections~\ref{sec:robust_monotone} and~\ref{sec:robust_non-monotone} have been updated accordingly.

\section{Preliminaries}
\label{sec:prelims}

Here we introduce some notation and give the relevant definitions and some basic facts.
Let  $M = [m] = \{1, 2, ..., m\}$ be a set of $m$ items. 
For a function $f: 2^M \rightarrow \mathbb{R}$ and any sets $S\subseteq T \subseteq M$ we use the shortcut $f(T\,|\,S)$ for 
the \emph{marginal value of\,\ $T$ with respect to $S$}, i.e., $f(T\,|\,S)=f(T \cup  S) - f(S)$. If $T = \{i\}$ we simply write $f(i\,|\,S)$.

\begin{definition}
The function $f$ 
is 	\emph{submodular} if and only if $f(i\,|\,S) \geq f(i\,|\,T)$ for all $S\subseteq T \subseteq M$ and $i\not\in T$.   
\end{definition}

If $f$ is non-decreasing, i.e., if $f(S) \le f(T)$ for any $S \subseteq T \subseteq M$, we just refer to it as being \emph{monotone} in this context. We 
consider normalized (i.e., 
$f(\emptyset)=0$), non-negative submodular objective functions, both monotone (Sections \ref{sec:positive_monotone}, \ref{sec:hardness_monotone}, \ref{sec:randomizedRR}) and non-monotone (Sections \ref{sec:non-monotone}, \ref{sec:randomizedRR}).

Our algorithmic goal is to maximize multiple submodular functions at the same time, each one with its own constraint, through simple protocols.  As we imply in the Introduction,  the term \textit{protocol} here refers to a procedure that (a) does not take any general input about the valuation functions or the constraints of the agents but is, possibly, allowed to ask a limited number of simple queries, (b) performs little to no computation itself, and (c) chooses one agent at a time and allows them to add a number of items to their solution, possibly from a subset of the available items.

The constraints we consider can be as general as $p$-system constraints. 

\begin{definition}\label{def:p-systems}
Given a set $M$, an \emph{independence system} for $M$ is a family $\mathcal{I}$ of subsets of $M$, whose members are called the \emph{independent sets} of $M$ and satisfy \emph{(i)} $\emptyset \in \mathcal{I}$, and \emph{(ii)} if $B \in \mathcal{I}$ and $A \subseteq B$, then $A \in \mathcal{I}$.
We call $\mathcal{I}$ a \emph{matroid} if it is an independence system and it also satisfies the exchange property \emph{(iii)} if $A, B \in \mathcal{I}$ and $|A| < |B|$,
then there exists $x \in  B\setminus A$ such that $A \cup \{x\} \in \mathcal{I}$. 
\end{definition}

Given a set $S \subseteq M$, a maximal independent set contained in $S$ is called a \emph{basis} of $S$. The upper rank $\textrm{ur}(S)$ (resp.~lower rank $\textrm{lr}(S)$) is defined as the largest (resp.~smallest) cardinality of a basis of $S$. 

\begin{definition}
A \emph{$p$-system} for $M$ is an independence system for $M$, such that $\displaystyle \max_{S\subseteq M} \textrm{ur}(S)/\textrm{lr}(S) \le p$.    
\end{definition}

Many combinatorial constraints are special cases of $p$-systems for small values of $p$. A \emph{cardinality} constraint, i.e., feasible solutions contain up to a certain number of items, induces a $1$-system. A \emph{matroid} constraint, i.e., feasible solutions belong to a given matroid, also induces a 1-system; in fact, a cardinality constraint is a special case of a matroid constraint.  More generally, constraints imposed by the intersection of $k$ matroids, i.e., feasible solutions belong to the intersection of $k$ given matroids, induce a $k$-system; matching constraints are examples of such constraints for $k=2$.
\vspace{0pt}

\noindent\emph{Constrained Maximization of Multiple Submodular Objectives} ($\multisub$ for short): Let $f_i : 2^M \rightarrow \mathbb{R}$, $i\in[n]$, be a submodular function and $\mathcal{I}_i\subseteq 2^M$, $i\in[n]$, be a $p_i$-system. Find disjoint subsets of $M$, say $S_1, \ldots, S_n$, such that $S_i\in \mathcal{I}_i$ and $f_i(S_i) = \max_{S\in \mathcal{I}_i} f_i(S) \vcentcolon= \opt_i$.
\vspace{4pt}

We think of $f_i$ and $\mathcal{I}_i$ as being associated with an agent $i\in[n]$, i.e., they are $i$'s objective function and combinatorial constraint, respectively. Of course, maximizing all the functions at once may be impossible as  these objectives could be  competing with each other. Naturally, we aim for approximate solutions, i.e., for $S_1, \ldots, S_n$, such that $f_i(S_i) \ge \rho\, \opt_i$, for all $i\in[n]$ and a common approximation ratio $\rho$ (possibly a function of $n$). As a necessary compromise in our setting, we often use $\optm_i$ instead of $\opt_i$ as the benchmark for agent $i$ in the worst-case.
We will revisit and formalize this benchmark in Section \ref{sec:benchmarks} but, essentially, if some items have already been allocated right before anything is added to $S_i$, then $\optm_i$ is 
the value of an 
optimal solution for $i$ still remaining available at that time.

As we mentioned in the Introduction, we also intend to  evaluate our protocols on instances that are more robust to competition. We want to capture the behavior one would expect in large-scale applications of our setting, e.g., in our running example of multiple firms competing to maximize their influence on a vast social network. That is, the value of an optimal solution of an agent should not be greatly affected by the removal of a reasonably sized subset of items. The next definition formalizes this idea. 

\begin{definition}\label{def:robust}
    Let $\alpha \in \mathbb{N}, \beta \in \mathbb{R}_+$. An instance of $\multisub$ is \emph{$(\alpha, \beta)$-robust with respect to agent $i$} if there are $\alpha$ disjoint subsets of $M$, say $O_{i1}, \ldots, O_{i \alpha}$, such that $O_{ij}\in \mathcal{I}_i$ and \[\frac{1}{\alpha} \cdot \sum_{j\in [\alpha]}f_i(O_{ij}) \ge \frac{\opt_i}{\beta}\,. \] 
\end{definition}
That is, if an instance is $(\alpha, \beta)$-robust with respect to agent $i$, then it contains at least $\alpha$ independent solutions of average value within a factor of $\beta$ from $i$'s optimal value. Clearly, any instance is $(1, 1)$-robust with respect to any agent. When we refer to instances that are more robust to competition, we essentially mean instances that are $(\Omega(n), O(1))$-robust with respect to everyone.

Besides the definition of submodularity given above, there are alternative equivalent definitions that will be useful later. These are summarized in the following result of \citet{NemhauserWF78}.

\begin{theorem}[\citet{NemhauserWF78}]\label{thm:SM-general}\label{thm:SM-monotone}
	A  function $f:2^M \rightarrow \mathbb{R}$ is submodular if and only if, for all $S, T \subseteq M$, 
  \[f(T) \le f(S) + \sum_{i \in T \setminus S} f(i\,|\,S) - \sum_{i \in S 
		\setminus T} f(i\,|\,S\cup T \setminus \{i\}) \,.\]
Further, $f$ is monotone submodular if and only if, for all $S, T \subseteq M$, 
\[f(T) \le f(S) + \sum_{i \in T \setminus S} f(i\,|\,S)\,.\]
\end{theorem}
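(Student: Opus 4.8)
The plan is to derive both equivalences directly from the diminishing-returns definition of submodularity given in the paper, using a single workhorse: telescope a sum of marginal values along a chain of sets, and then replace each conditioning set by a larger (or smaller) one, which by submodularity can only decrease (or increase) the corresponding marginal.

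For the forward direction of the general statement, I would pass from $S$ to $T$ in two phases. First, enumerate $T\setminus S=\{j_1,\dots,j_k\}$ and consider the chain $S\subseteq S\cup\{j_1\}\subseteq\dots\subseteq S\cup T$; telescoping gives $f(S\cup T)-f(S)=\sum_{\ell}f(j_\ell\,|\,S\cup\{j_1,\dots,j_{\ell-1}\})$, and since $S$ is contained in every conditioning set here, submodularity bounds each term from above by $f(j_\ell\,|\,S)$, yielding $f(S\cup T)\le f(S)+\sum_{i\in T\setminus S}f(i\,|\,S)$. Second, enumerate $S\setminus T=\{k_1,\dots,k_r\}$ and build a chain from $T$ up to $S\cup T$ by inserting the $k_\ell$ one at a time; the same identity gives $f(S\cup T)-f(T)=\sum_{\ell}f(k_\ell\,|\,T\cup\{k_1,\dots,k_{\ell-1}\})$, where now each conditioning set is \emph{contained} in $S\cup T\setminus\{k_\ell\}$, so submodularity bounds each term from \emph{below} by $f(k_\ell\,|\,S\cup T\setminus\{k_\ell\})$. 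Subtracting the second bound from the first (being careful with the sign) produces exactly the claimed inequality. The monotone forward direction is just the first phase followed by $f(T)\le f(S\cup T)$, which holds by monotonicity.

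For the reverse directions, I would recover the pairwise form of submodularity from well-chosen instances of the inequalities and then bootstrap it to the full definition. In the general case, applying the inequality with $S=X\cup\{a,b\}$ and $T=X$ annihilates the additive sum (as $T\setminus S=\emptyset$) and leaves $f(X)\le f(X\cup\{a,b\})-f(a\,|\,X\cup\{b\})-f(b\,|\,X\cup\{a\})$, which rearranges to $f(X\cup\{a,b\})+f(X)\le f(X\cup\{a\})+f(X\cup\{b\})$. In the monotone case the cleaner choice $S=X$, $T=X\cup\{a,b\}$ gives this same pairwise inequality in one line, while the degenerate choice $T\subseteq S$ (so $T\setminus S=\emptyset$) yields $f(T)\le f(S)$ and hence monotonicity. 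Finally, the pairwise inequality is precisely the single-element statement $f(a\,|\,X)\ge f(a\,|\,X\cup\{b\})$, and telescoping this along a chain from $S$ up to any $T\supseteq S$ (with $i\notin T$) recovers $f(i\,|\,S)\ge f(i\,|\,T)$, the definition used in the paper.

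The only place demanding real care is the bookkeeping in the general forward direction: the two phases condition on different families of sets, and the replacement step runs in \emph{opposite} directions (upper-bounding marginals in the first phase, lower-bounding them in the second). I therefore expect the main obstacle to be verifying that each conditioning set lies on the correct side of the relevant containment — in particular that $T\cup\{k_1,\dots,k_{\ell-1}\}\subseteq S\cup T\setminus\{k_\ell\}$ — and checking that the inequality signs combine correctly once the second bound is subtracted from the first.
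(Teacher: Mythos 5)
Your proof is correct. Note that the paper itself gives no proof of this statement --- it is imported verbatim from \citet{NemhauserWF78} as a known characterization --- so there is no in-paper argument to compare against; your derivation (telescoping marginals along a chain from $S$ to $S\cup T$ and from $T$ to $S\cup T$, bounding each term via the diminishing-returns definition, and recovering the pairwise inequality $f(X\cup\{a,b\})+f(X)\le f(X\cup\{a\})+f(X\cup\{b\})$ for the converse) is the standard one and all the containments and sign bookkeeping you flag as delicate do check out, including $T\cup\{k_1,\dots,k_{\ell-1}\}\subseteq S\cup T\setminus\{k_\ell\}$.
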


As it is common in the submodular optimization literature, we assume oracle access to the functions via value queries, i.e., for  $i\in[n]$, we assume the existence of a polynomial-time value oracle that returns $f_i(S)$ when given as input a set $S$. 
Similarly, we assume the existence of independence oracles for the constraints, i.e., for  $i\in[n]$, we assume there is a polynomial-time algorithm that, given as input a set $S$, decides whether $S\in \mathcal{I}_i$ or not.

\subsection{Our Round-Robin Framework} 
\label{sec:round-robin}
We present our simple protocol, which is aligned with how the majority of submodular maximization algorithms work (i.e., building one or more solutions one item at a time): here, the agents take turns according to a fixed ordering and in each step the active agent chooses (at most) one available item to add to their solution. 
Note that we do not impose how this should be done; it is the agents' task to decide how an item will be chosen, whether their solution should remain feasible throughout the protocol or they maintain a feasible solution within a larger chosen set, etc.  
We stress again that this approach has the significant advantage of delegating any computationally challenging task to the agents.

Note that when we refer to the \textit{policy} of an agent, we mean their overall algorithmic strategy; how they make their algorithmic choices, given full information about other agents' objective functions, constraints, and current solutions. So when we write $\mathcal{A}_i(S_i\,; Q)$ in line \ref{line:rr4} of the description of Protocol \ref{alg:MRR}, in general, $\mathcal{A}_i(\cdot)$ can be a function of all that information. Later, in Section \ref{sec:hardness_monotone}, when we make the distinction and talk about the \textit{algorithm} of an agent, we typically consider other agents' objective functions and policies fixed. We do not formalize this further, as the main policies we consider in this work are independent of any information about other agents.

\makeatletter
\renewcommand{\ALG@name}{Protocol}
\makeatother

\begin{algorithm}[H]
		\caption{Round-Robin$({\mathcal{A}}_1, \ldots, {\mathcal{A}}_n)$ \\{\small {(For $i\in [n]$, ${\mathcal{A}}_i$ is the policy of   agent $i$.)}}}
		\begin{algorithmic}[1]
			\State $Q=M$\textbf{;} $k = \lceil m/n\rceil$ 
			\For{$r = 1, \dots, k$} 
			\For{$i = 1, \dots, n$} 
			\State $j = \mathcal{A}_i(S_i\,; Q)$ {\small\hfill (where $j$ could be a \textit{dummy} item)} \label{line:rr4}
			\State $Q = Q\setminus \{j\}$ \label{line:rr6}
			\EndFor
			\EndFor
		\end{algorithmic}
		\label{alg:MRR}
\end{algorithm}

We could have any ordering fixed by Protocol \ref{alg:MRR} at its very beginning, i.e., a permutation $s_1, \ldots, s_n$  of $[n]$, such that $s_i$ is the $i$-th agent to choose their first item. To simplify the presentation, we assume that $s_i = i$, for all $i\in[n]$. This is without loss of generality, as it only involves a renaming of the agents before the main part of the protocol begins. We revisit this convention in Section \ref{sec:randomizedRR} where we randomize over all possible agent permutations.

Also, in Section \ref{sec:mms} we augment Protocol \ref{alg:MRR} by adding a simple initial phase that allows us to argue about maximin share guarantees; see Protocol \ref{alg:AugRR}.

\subsection{A Refined Benchmark and Fairness Notions}
\label{sec:benchmarks}
Note that even having a guarantee of $\opt_i / n$, for all $i\in [n]$, is not always possible (even approximately) in the worst case. Indeed, let us modify the example from the introduction so that there are a few very valuable items: for all $i\in [n]$, let 
$f_i(S\cup T) = |S| + L|T|$, 
for any $S\subseteq M_1$ and $T\subseteq M_2$, where $M= M_1\cup M_2$ and $L\gg |M|$. 
When $|M_2| < n$, 
then the best possible value for the least satisfied agent is a $1/(n-|M_2|)$ fraction \textit{not of\, $\opt_i$ but of $i$'s optimal value in a reduced instance where the items in $M_2$ are already gone}.

With this in mind, we are going to relax the $\opt_i$ benchmark a little. As mentioned in Section \ref{sec:prelims}, we define $\optm_i$ to be 
the value of an
optimal solution available to agent $i$, given that $i-1$ items have been lost before $i$ gets to pick their first item. 
Note that, for $i\in[n]$,
\begin{equation}\label{eq:optm_pessimistic}
\optm_i \ge \min_{M'\in \binom{M}{m-i+1}}\max_{S\in \mathcal{I}_i |M'}f_i(S) \,,
\end{equation}
i.e., $\optm_i$ is always at least as large as the pessimistic prediction that agents before $i$ will make the worst possible choices for $i$. 
The notation $\binom{M}{x}$ used here denotes the set of subsets of $M$ of cardinality $x$ and $\mathcal{I}|A$ denotes (the independence system induced by) the restriction of $\mathcal{I}$ on $A$, i.e., $\mathcal{I}|A = \{X\cap A:\, X\in \mathcal{I}\}$.
By inspecting Definition \ref{def:p-systems},
it is easy to see that the restriction of a $p$-system is a $p$-system.  

It is worth mentioning that, while $\optm_i$ is defined with respect to Protocol \ref{alg:MRR} here, its essence \textit{is not an artifact} of how Round-Robin works.  No matter which sequential protocol (or even algorithm with full information) one uses, if there are $n$ agents, then it is unavoidable that someone will get their first item after $r$ items are already gone, for any $r<n$; this is inherent to any setting with agents competing for resources.

\noindent\textbf{Maximin share fairness.} Still, in instances like the aforementioned example, $\optm_i$ might be a fairly weak guarantee. Indeed, from agent $i$'s perspective, the only fair outcome would be that they receive at least one item from $M_2$ or the entire $M_1$. A possible way to quantify this is to resort to the notion of a \emph{maximin share} (MMS) from the fair division literature, which was introduced by \citet{Budish11} and was popularized by \cite{ProcacciaW14} and subsequent works. Intuitively, the maximin share of an agent is the value they could guarantee to themselves by partitioning all the items in $n$ subsets and taking their least favorite among those. Of course, for our setting, one would have to adjust this thought experiment in order for the resulting notion to be meaningful; an agent should imagine that they create $n$ disjoint subsets that are \emph{feasible} for them and getting their least favorite among them. 

\begin{definition}\label{def:mmshare}
	Given  a subset  $M'\subseteq M$, the $n$-\textit{feasible maximin share} of agent $i$ with respect to $M'$ is
	\[ \fmms_i(n, M') = \displaystyle\max_{\mathcal{T}\in\mathcal{F}_n(M')}\ \min_{T_j\in \mathcal{T}} f_i(T_j)\,,\]
	where $\mathcal{F}_n(M') = \{(S_1, \ldots, S_n) \,:\, S_j\in \mathcal{I}_i, \text{ for all $j$, and } S_j\cap S_k = \emptyset, \text{ for all distinct $j$ and $k$}\}$  is the set of all $n$-tuples of disjoint subsets of $M'$ which are feasible for agent $i$. We say that $\mathcal{T} \in \mathcal{F}_n(M')$ is an $(n, M')$-\textit{maximin share defining tuple} for agent $i$, if $\min_{T_j\in \mathcal{T}} f_i(T_j) = \fmms_i(n, M')$.
\end{definition}

When $M'=M$, we call $\fmms_i(n, M)$ the \textit{feasible maximin share} of agent $i$ and simply write $\fmms_i$. 

One may note that the right-hand side of \eqref{eq:optm_pessimistic} has a similar flavor to the mental experiment behind the definition of \fmms.  However, $\optm_i$ and $\fmms_i$ are not related in a straightforward way. It is easy to see that $\fmms_i \le \optm_i$ (and occasionally this holds with equality) but $\optm_i$ can be \textit{arbitrarily larger} than  $\fmms_i$. Even among $\optm_i / n$ and $\fmms_i$ (that we essentially use as our benchmarks), 
neither one is always a stronger guarantee than the other.
In the above example for $|M_1| = n$ and $|M_2| = n-1$, we have $\fmms_i = |M_1| = n  = \optm_i = n\cdot \optm_i / n$, but for $|M_1| = n-i-1$ and $|M_2| = i$, we have $\fmms_i = 0$ and $\optm_i / n = (L +n -i-1)/n$, for $i\le n-1$.

In Section \ref{sec:mms}, we show that when agent $i$ has a $p_i$-system constraint, following a simple greedy policy in Protocol \ref{alg:AugRR} (a modification of Protocol \ref{alg:MRR}) can guarantee them value of at least $\fmms_i / \Theta(p_i)$. Toward this, and our other \fmms-related results, we will need the following monotonicity property which generalizes (the $k=1$ case of) Lemma 1  of  \citet{BouveretL16}.

\begin{lemma}\label{lem:monotonicity}
	For any $n\ge 3$, any subset $M'\subseteq M$ and any $x \in M'$, it holds that 
	\[\fmms_i(n-1, M'\setminus \{x\}) \geq \fmms_i(n, M')\,.\]
\end{lemma}

\begin{proof}
	Consider an $(n, M')$-maximin share defining tuple for agent $i$, say $\mathcal{T}= (T_1, \ldots,T_n)$. Now, $x$ may or may not belong to one of the sets of $\mathcal{T}$; if it does, we may assume, without loss of generality, that $x \in T_n$. 
	In either case, consider the remaining partition $\mathcal{T}'=(T_1,\ldots,T_{n-1})$. Clearly, $\mathcal{T}' \in \mathcal{F}_{n-1}(M'\setminus \{x\})$ and, moreover, $f_i(T_j) \ge \fmms_i(n, M')$ for any $T_j \in \mathcal{T}'$. Thus,
	we have $\fmms_i(n-1, M'\setminus \{x\}) \geq \fmms_i(n, M')$.
\end{proof}

\noindent\textbf{Envy-freeness up to an item.} Next we focus  on  appropriate generalizations of  \textit{envy-freeness up to an item} (EF1)  and of \textit{maximin share fairness} (MMS). Both the original notions were introduced by \citet{Budish11} (and, implicitly, by \citet{LMMS04} a few years earlier in the case of EF1).  
An \emph{allocation} is a tuple of disjoint subsets of $M$, $\mathcal{A}= \allowbreak (A_1,\ldots,A_n)$, such that each agent $i \in [n]$ receives the set $A_i$. Note that here we do not assume the allocation to be complete, because of the presence of the feasibility constraints in our setting; that is, $\bigcup_{i \in [n]}A_i$ may be a strict subset of $M$.

\begin{definition} \label{def:ef1}
An allocation $\mathcal{A}$ is {\em $\rho$-approximate envy-free up to one item ($\rho$-EF1)} if, for every pair of agents $i,j \in [n]$, either $A_j = \emptyset$, or there is some $g \in A_j$, such that $f_i(A_i) \geq \rho\, f_i(A_j \setminus \{g\})$.
\end{definition}

When $\rho = 1$, we just refer to EF1 allocations. Round-Robin, if implemented as an algorithm with all agents choosing greedily, is known to produce EF1 allocations when agents have additive valuation functions \citep{CaragiannisKMPS19} and $0.5$-EF1 allocations when  agents have submodular valuation functions \citep{AmanatidisBL0R23}, assuming no constraints at all.
Although it is possible to have EF1 allocations under cardinality constraints for agents with additive valuation functions \citep{BiswasB18}, this definition that ignores feasibility is way too strong for our general constraints. What we need here is the notion of \textit{feasible EF1} introduced recently by \citet{DrorFS23} and others (see, e.g., \citet{Barman0SS23}), or rather its approximate version.

\begin{definition} \label{def:f-ef1}
An allocation $\mathcal{A}$ is {\em $\rho$-approximate feasible envy-free up to one item ($\rho$-FEF1)} if, for every pair of agents $i,j \in [n]$, either $A_j = \emptyset$, or  there is some $g \in A_j$, such that $f_i(A_i) \geq \rho\, f_i(A'_j)$ for any $A'_j \subseteq A_j \setminus \{g\}$ that is feasible for agent $i$.
\end{definition}

Since we are dealing with allocations that may not be complete here, in order for the notion of FEF1 to be meaningful, we also need the agents to have a similar guarantee with respect to the set of unallocated items.

\begin{definition} \label{def:f-ef-pool}
An allocation $\mathcal{A}$ is {\em $\rho$-approximate feasible envy-free towards unallocated items ($\rho$-FEFu)} if, for every agent $i \in [n]$ we have $f_i(A_i) \geq \rho\, f_i(S)$ for any $S \subseteq M \setminus \bigcup_{j=1}^{n} A_j$ that is feasible for agent $i$.
\end{definition}

In the next sections, we show that very simple greedy policies in Protocol \ref{alg:MRR} can guarantee value of at least $\optm_i / \Theta(n +p_i)$ to any agent $i\in[n]$ who follows them. This fact implies novel \efo-type results in constrained fair division but also allows for a randomized protocol where the corresponding ex-ante worst-case guarantees are in terms of $\opt_i$ instead. 
Furthermore, it allows us to go beyond worst-case analysis and obtain much stronger guarantees for $(\Omega(n), O(1))$-robust instances that are almost best-possible in polynomial time. 
The approach of these latter results can lead to solid \textit{maximin share} guarantees with small modifications to Protocol \ref{alg:MRR} and to the greedy policies.

\section{The Effectiveness of Greedy Policies for Monotone Objectives}
\label{sec:positive_monotone} 
 We first turn to monotone submodular objective functions; we deal with non-monotonicity in Section \ref{sec:non-monotone}.
 Simple greedy algorithms, where in every step a feasible item of maximum marginal utility is added to the current solution, have found extreme success in a wide range of submodular maximization problems. Hence, it is only natural to consider the following questions:
\begin{quote}
\emph{What value can an agent guarantee for themselves if they always choose greedily? \\ How do agents compare each other's bundles of items in this case? }
\end{quote}
Next we show that an agent $i$ can achieve strong bounds with respect to $\optm_i$. ``Strong'' here refers to the fact that both Theorems \ref{thm:greedy_monotone_general} and \ref{thm:greedy_monotone_cardinality} achieve constant factor approximations to $\optm_i$ for constant $n$,  but also that, for $n=1$, Theorem \ref{thm:greedy_monotone_general} recovers the best-known guarantee of the greedy algorithm for the standard algorithmic problem \citep{FNemhauserW78}, which is  almost best-possible for polynomially many queries \citep{BadanidiyuruV14}. 
Further, for $(\alpha, \beta)$-robust instances these guarantees improve the larger $\alpha$ becomes. In particular,
if an instance is $(\Omega(n), O(1))$-robust with respect to agent $i$, then $i$ achieves a $1/\Theta(p_i)$ fraction of their \textit{optimal} value $\opt_i$ instead, which is asymptotically best-possible.

Formally, when we say that agent $i$ \textit{chooses greedily} we mean that they choose according to the policy $\mathcal{G}_i$ below.

\makeatletter
\renewcommand{\ALG@name}{Policy}
\makeatother

\setcounter{algorithm}{0}

\begin{algorithm}[ht]
		\caption{Greedy policy $\mathcal{G}_i(S_i\,; Q)$ of agent $i$. \\ {\small {($S_i$\,: current solution of agent $i$ (initially $S_{i} =\emptyset$)\,; $Q$\,: the current set of available items)}}}
		\begin{algorithmic}[1]
			\vspace{2pt}\State $A =\{x\in Q\,:\, S_i \cup \{x\} \in \mathcal{I}_i\}$
            \If{$A \neq \emptyset$}
            \State $S_i = S_{i} \cup \{j\}$, where $j\in \argmax_{z\in A} f_i(z\,|\,S_i)$
            \State \Return{$j$}
			\Else
			\State \Return{ a dummy item} {\small\hfill (i.e., return nothing)}
			\EndIf
		\end{algorithmic}
		\label{alg:greedy_policy}
\end{algorithm}\medskip

\begin{theorem}\label{thm:greedy_monotone_general}
Any agent $i$ with a $p_i$-system constraint, who chooses greedily in the Round-Robin protocol, builds a solution $S_i$ such that $f_i(S_i) \ge {\optm_i}/{(n + p_i)}$.
\end{theorem}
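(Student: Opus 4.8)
The plan is to compare the greedy solution $S_i$ with a fixed optimal set $O$ attaining $\optm_i$ via the monotone submodularity inequality, and then charge the ``lost'' value of $O$ against agent $i$'s greedy marginals, using the Round-Robin structure to control the loss. Note first that $O\in\mathcal{I}_i$ and $O$ is disjoint from the $i-1$ items removed before $i$'s first turn, so every $x\in O$ is available to $i$ at its first turn and, since $\{x\}\subseteq O\in\mathcal{I}_i$, feasible there. Let $g_1,g_2,\dots$ be the items Policy \ref{alg:greedy_policy} adds over the (at most $k$) rounds, write $S^{r}$ for the prefix after round $r$, and set $\rho_r=f_i(g_r\,|\,S^{r-1})$ (with $\rho_r=0$ on dummy rounds), so that $f_i(S_i)=\sum_r\rho_r$ by telescoping. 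Theorem \ref{thm:SM-monotone} gives $\optm_i=f_i(O)\le f_i(S_i)+\sum_{x\in O\setminus S_i}f_i(x\,|\,S_i)$, so it suffices to bound the latter sum by $(n-1+p_i)\,f_i(S_i)$.

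The second step is a per-element marginal bound. For $x\in O\setminus S_i$ let $r(x)$ be the last round in which $x$ is \emph{simultaneously} available to $i$ and feasible, i.e.\ $S^{r-1}\cup\{x\}\in\mathcal{I}_i$. Both properties are ``prefix'' properties of the rounds: availability is lost only once (items never return), and, by downward-closure of $\mathcal{I}_i$, once $S^{r-1}\cup\{x\}$ is dependent it stays dependent as $S^{r-1}$ grows; hence the eligible rounds form exactly $\{1,\dots,r(x)\}$, and $r(x)\ge1$ by the observation above. For any $r\le r(x)$, $x$ lies in the feasible-and-available set from which greedy picks the maximizer $g_r$, so $\rho_r\ge f_i(x\,|\,S^{r-1})\ge f_i(x\,|\,S_i)$ by submodularity (as $S^{r-1}\subseteq S_i$).

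The core step is then an \emph{assignment} argument. I would build a bipartite graph with left vertices $O\setminus S_i$, right vertices the rounds, each round given capacity $n-1+p_i$, and an edge $x\sim r$ whenever $r\le r(x)$. I claim Hall's condition holds: for every $r^\ast$, $|\{x\in O\setminus S_i:r(x)\le r^\ast\}|\le(n-1+p_i)\,r^\ast$. An element counted here is, by round $r^\ast+1$, either gone or blocked. The ``gone'' elements were removed by competitors within the first $r^\ast$ windows between $i$'s turns, each window accounting for at most $n-1$ removals, giving at most $(n-1)r^\ast$. The ``blocked'' optimal elements $O''=\{x\in O:\,S^{r^\ast}\cup\{x\}\notin\mathcal{I}_i\}$ are bounded by the $p$-system rank inequality of Definition \ref{def:p-systems}: $S^{r^\ast}$ is a basis of $S^{r^\ast}\cup O''$, so $|O''|\le\mathrm{ur}(S^{r^\ast}\cup O'')\le p_i\,\mathrm{lr}(S^{r^\ast}\cup O'')\le p_i|S^{r^\ast}|\le p_i r^\ast$. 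Summing the two bounds yields the claim. By (the capacitated version of) Hall's theorem there is an assignment $\phi$ with $\phi(x)\le r(x)$ and at most $n-1+p_i$ elements per round; combining with the marginal bound gives $\sum_{x\in O\setminus S_i}f_i(x\,|\,S_i)\le\sum_r|\phi^{-1}(r)|\,\rho_r\le(n-1+p_i)\sum_r\rho_r=(n-1+p_i)\,f_i(S_i)$, and hence $\optm_i\le(n+p_i)\,f_i(S_i)$.

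The main obstacle is precisely what forces the flexible assignment rather than the textbook Fisher--Nemhauser--Wolsey charging. Under competition, agent $i$ may exhaust its $k$ turns while optimal elements remain both available and feasible, so $S_i$ need not be a maximal independent set of $S_i\cup O$; the classical argument, which bounds $|O\setminus S_i|$ by $p_i|S_i|$ using maximality, breaks for these ``surviving'' elements (and indeed their number can exceed $p_i|S_i|$). The key idea is that such survivors should be absorbed by the \emph{competition} budget, not the $p$-system budget: because a survivor is feasible in \emph{every} round it may be assigned to any of them, and the single combined Hall condition with capacity $n-1+p_i$ is exactly what lets the $(n-1)$ term pay for both competitor removals and survivors while the $p_i$ term pays only for genuine constraint-blocking. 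Verifying this combined Hall inequality—cleanly splitting the count into competitor-removed versus constraint-blocked elements—is where the real work lies; the submodularity and telescoping steps are then routine.
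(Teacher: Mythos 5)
Your overall strategy is the paper's charging argument recast through a capacitated Hall condition instead of an explicit mapping: the paper constructs a map $\delta:O_i^-\setminus S_i\to S_i$ by partitioning the lost optimal elements into those newly blocked by the constraint after each greedy pick (at most $p_i$ per pick, cumulatively) and those taken by competitors between consecutive picks (at most $n-1$ per window), and your per-round capacity $n-1+p_i$ encodes exactly the same two budgets. The reduction via Theorem \ref{thm:SM-monotone}, the per-element inequality $f_i(x\,|\,S_i)\le f_i(x\,|\,S^{r-1})\le\rho_r$ for $r\le r(x)$, and the observation that prefix neighborhoods make the sets $\{x:r(x)\le r^\ast\}$ the only binding Hall constraints are all correct; your direct rank bound $|O''|\le\mathrm{ur}(S^{r^\ast}\cup O'')\le p_i\,\mathrm{lr}(S^{r^\ast}\cup O'')\le p_i|S^{r^\ast}|$ is in fact a cleaner route to what the paper derives as $|\bigcup_{j\le r}X_{(j)}|\le rp_i$ via a minimal-counterexample argument.

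There is, however, a gap at exactly the boundary you flag but never close. The justification of Hall's condition --- ``an element counted here is, by round $r^\ast+1$, either gone or blocked'' --- fails when $r^\ast$ is the \emph{last} round and there are survivors: elements of $O\setminus S_i$ that are still available and still feasible for $S_i$ when the protocol ends. These have $r(x)$ equal to the last round, so they are counted in the set for $r^\ast=k$, yet they are neither competitor-removed nor constraint-blocked, so your two sub-counts do not cover them; the closing assertion that ``the $(n-1)$ term pays for both competitor removals and survivors'' is the right idea but is nowhere proved. The missing step is short. A survivor can exist only if agent $i$ picked in every round: once the greedy policy returns a dummy, every item is permanently unavailable or permanently infeasible (by downward closure, since $S^{r-1}$ only grows), so nothing survives; hence $|S_i|=k$ in this case. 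Since the $i-1$ items lost before $i$'s first turn, the items of $S_i$, the items taken by competitors afterwards, and the items still available at the end are pairwise disjoint subsets of $M$, we get $(\text{gone})+(\text{survivors})\le m-k-(i-1)\le nk-k=(n-1)k$, which together with $(\text{blocked})\le p_ik$ gives the Hall inequality at $r^\ast=k$. With that counting observation added, your proof is complete; notably, this survivor case is also the most delicate point in the paper's own proof (the bound $|Y_{(s)}|\le n-1$ for $r=s$), so you have correctly identified where the real difficulty lies.
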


\begin{proof}
Let $O_i^-$ be an optimal solution for agent $i$ on the set $M_i$ of items still available after $i-1$ steps, i.e., right before agent $i$ chooses their very first item; by definition, $\optm_i = f_i(O_i^-)$. From $i$'s perspective, it makes sense to consider the $k$-th round to last from when they get their $k$-th item until right before they choose their $(k+1)$-th item. We rename the items of $M_i$ accordingly as $x_1^i, x_1^{i+1}, \ldots, x_1^{i-1}, x_2^i, \ldots$, i.e., item $x_j^{\ell}$ is the $j$-th item that agent $\ell$ chooses from the moment when agent $i$ is about to start choosing; any items not picked by anyone (due to feasibility constraints) are arbitrarily added to the end of the list.
Also, let $S_i^{(r)}$ denote the solution of agent $i$ right before item $x_r^i$ is added to it. Finally, set $s\vcentcolon= |S_i|$.
We are going to need the existence of a mapping $\delta: O_i^-\setminus S_i \to S_i$ with ``nice'' properties, as described in the following lemma for $a = 1$ and $Q_{i1} = O_i^-$; the general form of Lemma \ref{lem:mapping-monotone-general} is needed for the proof of Theorem \ref{thm:greedy_monotone_general_robust}.

\begin{lemma}\label{lem:mapping-monotone-general}
    Let $Q_{i1}, \ldots, Q_{ia} \in \mathcal{I}_i$ be disjoint feasible sets for agent $i$. There is a mapping $\delta:  \bigcup_{j\in[a]} Q_{ij} \setminus S_i \to S_i$ with the following two properties, for all $x_r^i \in S_i, x\in  \bigcup_{j\in[a]} Q_{ij} \setminus S_i$:
\begin{enumerate}[leftmargin=20pt]
    \item if $\delta(x) = x_r^i$, then $f_i(x \,|\, S_i^{(r)}) \le f_i(x_r^i \,|\, S_i^{(r)})$, i.e., $x$ is not as attractive as $x_r^i$ when the latter is chosen;\vspace{5pt}
    \item $|\delta^{-1}(x_r^i)| \le n + a p_i - 1$, i.e., at most $n + a p_i - 1$ items of\, $ \bigcup_{j\in[a]} Q_{ij} \setminus S_i$ are mapped to each item of $S_i$.
\end{enumerate}
\end{lemma}

For now, we assume the lemma (its proof is given right after the current proof) 
and apply the second part of Theorem \ref{thm:SM-monotone} for $O_i^-$ and $S_i$:
\begin{align*}
f_i(O_i^-)  &\le f_i(S_i) + \sum_{x \in O_i^- \setminus S_i} f_i(x\,|\,S_i) 
            = f_i(S_i) + \sum_{r=1}^{s}\sum_{x \in \delta^{-1}(x_r^i)} f_i(x\,|\,S_i)\\
            &\le f_i(S_i) + \sum_{r=1}^{s}\sum_{x \in \delta^{-1}(x_r^i)} f_i(x\,|\,S_i^{(r)})
            \le f_i(S_i) + \sum_{r=1}^{s} (n + p_i - 1) f_i(x_r^i \,|\, S_i^{(r)})\\
            &= f_i(S_i) + (n + p_i - 1) f_i(S_i) 
            = (n + p_i)  f_i(S_i) \,,
\end{align*}
where the first equality follows from observing that $\bigcup_{r=1}^{s} \delta^{-1}(x_r^i) = O_i^-\setminus S_i$, the second inequality follows from submodularity, and, finally, the third inequality follows from the second property of Lemma \ref{lem:mapping-monotone-general}.
We conclude that $f_i(S_i) \ge \optm_i / (n + p_i)$.
\end{proof}

In the proof of Theorem \ref{thm:greedy_monotone_general} one has to deal with the fact that an agent may miss “good” items not only by their own choices (that make such items infeasible), but also because others take such items in a potentially adversarial way. This complication calls for very careful mapping in Lemma \ref{lem:mapping-monotone-general} which concentrates most of the technical difficulty of the proof of the theorem. Note that the lemma allows for a mapping from the union of multiple disjoint feasible subsets of an agent $i$, not just $O_i^-$. This will come handy for proving Theorem \ref{thm:greedy_monotone_general_robust} but adds an extra layer of complexity in keeping track of different kinds of items in its proof.

Before giving the proof of Lemma \ref{lem:mapping-monotone-general}, it is useful to outline the general idea, at least for the special case of a single set, $O_i^-$, as it is used in the proof of Theorem \ref{thm:greedy_monotone_general}. An agent $i$ may miss items from their optimal solution $O_i^-$ for two reasons: (i) because of greedy choices that make them infeasible, and (ii) because other agents take them. It is easy to see that between any two greedy choices of $i$, there are at most $n-1$ such lost items of type (ii) and all have marginals that are no better than the marginal of the former choice. The next step is to show that after the first $\ell$ greedy choices, for any $\ell$, there are at most $\ell p_i$ lost items of type (i). Finally, we argue that this allows us to map at most $p_i$ of them to each greedy choice of $i$ so that they never have a larger marginal value than that greedy choice.

\begin{proof}[\textbf{Proof of Lemma \ref{lem:mapping-monotone-general}}]
We keep the notation and terminology as in the proof of Theorem \ref{thm:greedy_monotone_general}. We are going to define a partition of  $\bigcup_{j\in[a]} Q_{ij}\setminus S_i$. First, let 
$M_i^{(r)} = \{x_r^i, x_r^{i+1}, \ldots, x_r^{i-1}, x_{r+1}^i, \ldots\}$ be the set of available items right before $x^i_r$ is added to $S_i^{(r)}$. Now, for $r \in [s]$, let
\begin{align*}
X_{(r)} =  \big\{&x\in \big(\textstyle \bigcup_{j\in[a]} Q_{ij}\setminus S_i\big) \cap M_i^{(r)} :\, S_i^{(r)}\cup\{x\} \in \mathcal{I}_i 
          \text{\ \ and\ \ } S_i^{(r)}\cup\{x^i_r, x\} \notin \mathcal{I}_i \big\}\,,
\end{align*}
i.e., $X_{(r)}$ contains all items of $ \bigcup_{j\in[a]} Q_{ij}\setminus S_i$ that were available and could be added to $S_i^{(r)}$ right before $x^i_r$ was added, but become infeasible right after. 

Also, for $r \in [s]$ (using the convention that $M_i^{(s+1)} = \emptyset$), let 
\begin{align*}
Y_{(r)} =  \big\{&x\in \big( \big(\textstyle \bigcup_{j\in[a]} Q_{ij}\setminus S_i\big) \cap M_i^{(r)}\big) \setminus \big(X_{(r)} \cup M_i^{(r+1)} \big):\, 
           S_i^{(r+1)}\cup\{x\} \in \mathcal{I}_i \big\}\,, 
\end{align*}
i.e., $Y_{(r)}$ contains all the items of $ \bigcup_{j\in[a]} Q_{ij}\setminus S_i$ that were available and could be added to $S_i^{(r)}\cup\{x^i_r\}$ right after $x^i_r$ was added but were picked by other agents before agent $i$ got to choose for the $(r+1)$-th time. It is easy to see that all these sets are disjoint and that 
\[
\textstyle \bigcup\limits_{j\in[a]} Q_{ij}\setminus S_i = \textstyle \bigcup\limits_{r=1}^{s}(X_{(r)} \cup Y_{(r)})
\,.\]
Also, for any $r \in [s]$, because any $x\in X_{(r)} \cup Y_{(r)}$ was available and feasible when $x_r^i$ was chosen, we have $f_i(x \,|\, S_i^{(r)}) \le f_i(x_r^i \,|\, S_i^{(r)})\le f_i(x_j^i \,|\, S_i^{(j)})$ for any $j\le r$. Our plan is to define $\delta(x)$ so that it maps all $x\in X_{(r)} \cup Y_{(r)}$ to one of $x_1^i,\ldots,x_r^i$ and directly get the first claimed property of $\delta$. To do so in a way that guarantees the second claimed property, we first need to carefully argue about the cardinalities of these sets.

We claim that $|Y_{(r)}|\le n-1$ for all $r \in [s]$. Of course, this is straightforward for $r< s$. For $r = s$, note that
$|Y_{(s)}|\ge n$ would mean that
in the next round of the protocol there will be at least one item left for agent $i$ to add to $S_i$,
contradicting the fact that $s$ is the cardinality of $i$'s solution.

A bound on the cardinality of $X_{(r)}$ is less straightforward. We claim that $|\bigcup_{j=1}^{r} X_{(j)}| \le r a p_i$ for all $r \in [s]$. Suppose not, and let $k$ be the smallest index for which $|\bigcup_{j=1}^{k} X_{(j)}| > k a p_i$. By the  pigeonhole principle, there exists some $\ell\in[a]$, such that $|\bigcup_{j=1}^{k} (X_{(j)}\cap Q_{i\ell})| > k p_i$. 
Recall that $S_i^{(k+1)} = S_i^{(k)}\cup x^i_k$ and that $|S_i^{(k+1)}|=k$. 
By construction, $S_i^{(k+1)}\in \mathcal{I}_i$. Moreover,  by the definition of the $X_{(r)}$'s, $S_i^{(k+1)}$ must be maximally independent in the set $S_i^{(k+1)} \cup \bigcup_{j=1}^{k} (X_{(j)} \cap Q_{i\ell})$. 
Indeed, each $x \in \bigcup_{j=1}^{k} (X_{(j)} \cap Q_{i\ell})$ belongs to some $X_{(j)}$ for $j\in[k]$ and, thus, 
$S_i^{(j)}\cup\{x^i_j, x\} \notin \mathcal{I}_i$. Since $S_i^{(j)}\cup\{x^i_j\} \subseteq S_i^{(k+1)}$, we also have that $S_i^{(k+1)}\cup\{x\} \notin \mathcal{I}_i$. This implies that 
\[\textrm{lr}\big(S_i^{(k+1)} \cup \textstyle \bigcup\limits_{j=1}^{k} (X_{(j)} \cap Q_{i\ell})\big) \le |S_i^{(k+1)}|=k\,. \]
On
the other hand, $\bigcup_{j=1}^{k} (X_{(j)} \cap Q_{i\ell}) \in \mathcal{I}_i$ because $\bigcup_{j=1}^{k} (X_{(j)} \cap Q_{i\ell}) \subseteq Q_{i\ell} \in \mathcal{I}_i$. As a result 
\[\textrm{ur}\big(S_i^{(k+1)} \cup \textstyle \bigcup\limits_{j=1}^{k} (X_{(j)} \cap Q_{i\ell}) \big) \ge \Big|\bigcup\limits_{j=1}^{k} (X_{(j)} \cap Q_{i\ell})\Big|> kp_i\,,\] 
contradicting the fact that $\mathcal{I}_i$ is a $p_i$-system for $M$. We conclude that $|\bigcup_{j=1}^{r} X_{(j)}| \le r a p_i$ for all $r \in [s]$.

Now, we can define $\delta$. We first rename the elements of $\bigcup_{j=1}^{s} X_{(j)}$ so that all items in $X_{(j)}$ are lexicographically before any item in $X_{(j')}$ for $j<j'$. We map the first $a p_i$ items of $\bigcup_{j=1}^{s} X_{(j)}$ (with respect to the ordering we just mentioned) as well as all the items in $Y_{(1)}$ to $x_1^i$, the next $a p_i$ items of $\bigcup_{j=1}^{s} X_{(j)}$ as well as all the items in $Y_{(2)}$ to $x_2^i$, and so forth, until the items of $ \bigcup_{j\in[a]} Q_{ij}\setminus S_i$ are exhausted. Since at most $a p_i+n-1$ items are mapped to each element of $S_i$, both the desired properties of $\delta$ hold.
\end{proof}

Specifically for cardinality constraints (which are $1$-system constraints and include the unrestricted case when the cardinality constraint is at least $\lceil m/n \rceil$), we can get a slightly stronger version of Theorem \ref{thm:greedy_monotone_general}. Note that Theorem \ref{thm:greedy_monotone_cardinality} assumes that $n\ge 2$, and thus it does not contradict the known inapproximability result of \citet{Feige98} (see {Theorem \ref{thm:Feige-hardness})} for the case of a single objective function.
Here the analysis of the greedy policy for Protocol \ref{alg:MRR} is tight for any $i$, as can be seen from the example from the introduction (e.g., for $m=n$).

\begin{theorem}\label{thm:greedy_monotone_cardinality}
For $n\ge 2$, any agent $i$ with a cardinality constraint, who chooses greedily in the Round-Robin protocol, builds a solution $S_i$ such that $f_i(S_i) \ge \optm_i / n$.   
\end{theorem}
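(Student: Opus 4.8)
The plan is to adapt the proof of Theorem \ref{thm:greedy_monotone_general} to the special case $p_i = 1$ (cardinality, and more generally matroid, constraints), and to exploit the extra structure of a cardinality constraint to shave off the ``$+p_i$'' term entirely, arriving at a denominator of $n$ rather than $n+1$. Keeping all the notation from the proof of Theorem \ref{thm:greedy_monotone_general}, I would again fix an optimal remaining solution $O_i^-$ with $\optm_i = f_i(O_i^-)$, write $s = |S_i|$, and aim to construct a mapping $\delta: O_i^-\setminus S_i \to S_i$ satisfying property (1) from Lemma \ref{lem:mapping-monotone-general} (each mapped item is no more attractive than its image at the time the image is chosen), but now with the \emph{stronger} counting bound $|\delta^{-1}(x_r^i)| \le n-1$ instead of $n+p_i-1 = n$. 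Once such a $\delta$ exists, the identical telescoping computation via the monotone half of Theorem \ref{thm:SM-monotone} yields $f_i(O_i^-) \le f_i(S_i) + (n-1)f_i(S_i) = n\, f_i(S_i)$, giving $f_i(S_i) \ge \optm_i/n$.

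The key simplification is that under a cardinality constraint the sets $X_{(r)}$ from the lemma's proof are essentially empty for the relevant rounds: a greedy choice $x_r^i$ never renders a previously feasible item infeasible unless the cardinality bound itself has been reached. First I would observe that if agent $i$'s solution is capped at cardinality $c$, then adding $x_r^i$ keeps $S_i^{(r+1)}$ of size $r \le c$, so any single item $x$ with $|S_i^{(r)} \cup \{x\}| \le c$ still satisfies $|S_i^{(r+1)} \cup \{x\}| \le c+1$; thus $x$ becomes infeasible only when $S_i^{(r+1)}$ has exactly reached the cardinality bound, i.e.\ only possibly at the very last greedy step. Concretely, $X_{(r)} = \emptyset$ for all $r < s$, and $X_{(s)}$ can be nonempty only when $|S_i| = c$. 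The only lost items of ``type (i)'' are therefore those still available but infeasible after $i$'s final pick, and these I would fold into the $Y$-type accounting rather than needing a separate budget of $p_i$ per greedy choice.

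The main technical step is then the revised cardinality count on $|\delta^{-1}(x_r^i)|$. For $r < s$ I would argue, exactly as in the lemma, that $|Y_{(r)}| \le n-1$ (between consecutive turns of agent $i$, at most $n-1$ other items can be taken, each with marginal no larger than $x_r^i$'s at the time of its selection), and since $X_{(r)} = \emptyset$ these contribute at most $n-1$ preimages each. The delicate case is $r = s$: here I would combine $Y_{(s)}$, which has $|Y_{(s)}| \le n-1$ by the same ``else $i$ could still pick'' argument used for $r=s$ in the original proof, with the items of $X_{(s)}$ that became infeasible only because the cardinality bound was saturated. The crux is to show that the \emph{total} number of items mapped to $x_s^i$ is still at most $n-1$: since $S_i^{(s+1)} = S_i$ has size $c$ and $O_i^-$ is also cardinality-feasible (so $|O_i^- \setminus S_i| \le c - |O_i^- \cap S_i|$), a direct counting of how many optimal items can remain unmatched after filling each of $x_1^i,\dots,x_{s-1}^i$ with $n-1$ items each caps the residue at $x_s^i$ by $n-1$ as well. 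I expect this final balancing — reconciling the saturated last round with the $(n-1)$-per-image budget, and verifying the edge case where $s$ is strictly less than the cardinality bound because feasibility never bound but items simply ran out — to be the principal obstacle, and the reason the statement carries the hypothesis $n \ge 2$.
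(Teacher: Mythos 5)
Your proposal follows the paper's proof essentially verbatim: specialize Lemma \ref{lem:mapping-monotone-general} to a cardinality constraint, observe that $X_{(r)}=\emptyset$ for every round before the cardinality bound is saturated (so only the final pick can render optimal items infeasible), and use $|O_i^-\setminus S_i|\le k=s$ together with the per-round bound $|Y_{(r)}|\le n-1$ to build a mapping with at most $n-1$ preimages per greedy pick, which is exactly where $n\ge 2$ enters. The one imprecision is your suggestion to dump the leftover $X_{(s)}$ items onto $x_s^i$ as a ``residue'' --- that image may already carry $|Y_{(s)}|=n-1$ preimages --- whereas the paper sends the $k'=|X_{(s)}|$ items \emph{backward} to picks $x_{y_j}^i$ with $Y_{(y_j)}=\emptyset$ (at least $k'$ such indices exist precisely by your counting inequality, and property~(1) survives because items of $X_{(s)}$ were available and feasible at every earlier step); this is a local repair rather than a different argument.
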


\begin{proof}
The proof closely follows that of Theorem \ref{thm:greedy_monotone_general}, so we are just highlighting the differences here.
Consider an agent $i$ with a cardinality constraint of $k$. The second property of the mapping 
$\delta$
now is stronger. In particular, $|\delta^{-1}(x_r^i)| \le \max(n  - 1,a)$ (rather than having $n+ a -1$ on the right-hand side as expected from the proof of Theorem \ref{thm:greedy_monotone_general}). This means that, when $a = 1$ with $Q_{i1} = O_i^-$, applying Theorem \ref{thm:SM-monotone} for $O_i^-$ and $S_i$, we get $f_i(O_i^-)  \le f_i(S_i) + (n  - 1) f_i(S_i)= n\,  f_i(S_i)$ exactly the same way as we got the approximation guarantee in  the proof of Theorem \ref{thm:greedy_monotone_general}. The only thing remaining is the existence of $\delta$, which we show below. 
Note that again we do this for the case of multiple  sets, i.e., $\delta:  \bigcup_{j\in[a]} Q_{ij} \setminus S_i \to S_i$, as we will  need it for  $a = n$ in Section \ref{sec:mms}. 

Like before, set $s\vcentcolon= |S_i|$. For $r \in [s]$, we define $X_{(r)}$ and $Y_{(r)}$ like before, but now notice that $X_{(r)} = \emptyset$ if $r< k$.
Recall that for all $r \in [s]$, any $x\in X_{(r)} \cup Y_{(r)}$ was available and feasible when $x_r^i$ was chosen, so we have $f_i(x \,|\, S_i^{(r)}) \le f_i(x_r^i \,|\, S_i^{(r)})\le f_i(x_j^i \,|\, S_i^{(j)})$ for any $j\le r$. As long as $\delta$ maps each $x\in X_{(r)} \cup Y_{(r)}$ to one of $x_1^i,\ldots,x_r^i$, we only need to worry about keeping $|\delta^{-1}(x_r^i)|$ bounded.

If $X_{(s)} = \emptyset$, then $\bigcup_{j\in[a]} Q_{ij}\setminus S_i = \bigcup_{r=1}^{s}(X_{(r)} \cup Y_{(r)}) = \bigcup_{r=1}^{s} Y_{(r)}$
and $\delta$ is obvious: just mapping all items of $Y_{(r)}$ to $x_r^i$ for $r \in [s]$, suffices in order to get both desired properties. 
So, suppose that $|X_{(s)}| = k' \in [ak]$. Clearly, this means that $X_{(s)} \neq \emptyset$ and thus, $s=k$. 
We begin with setting $\delta(x) = x_r^i$ if $x\in Y_{(r)}$, for all $r\in[k]$; so far at most $n-1$ items are mapped to each $x_r^i$. For the items in $X_{(k)}$, say $z_1, z_2, \ldots, z_{k'}$, 
we map the first $\max(n  - 1,a) - |Y_{(1)}|$ such items to $x_1^i$, the next $\max(n  - 1,a) - |Y_{(2)}|$ items to $x_2^i$, and so on, until they are exhausted. The fact that they are all mapped by the time we map at most $\max(n  - 1,a) - |Y_{(k)}|$ items of $X_{(k)}$ to $x_k^i$ follows by observing that we can map up to $k\max(n  - 1,a)$ items this way and we have at most $|\bigcup_{j\in[a]} Q_{ij}| \le ak$ items to map.
Clearly, we never map more than $\max(n  - 1,a)$ items to a single element of $S_i$, completing the proof.
\end{proof}

\subsection{Implications to Constrained Fair Division}
\label{sec:fairness-monotone}
A reasonable question at this point is whether this is a strong benchmark in general. The answer is that \textit{it depends} on our objective.
On one hand, as we will show below, we get an EF1-type guarantee for any agent who chooses greedily, no matter what other agents do. 
On the other hand, 
it is possible for an agent to improve by a linear factor
by choosing items more carefully; we postpone this discussion until Example \ref{ex:opt_vs_greedy}.

Despite the fact that the Round-Robin allocation protocol is known to produce EF1 allocations
for agents with additive valuation functions and no constraints, Theorems \ref{thm:greedy_monotone_ef1}
and \ref{thm:greedy_monotone_cardinality_ef1} below are  far from trivial and have immediate implications to constrained
fair division; see Corollaries \ref{cor:fairness_p-system} and \ref{cor:fairness_cardinality} below.  
Note that Theorem \ref{thm:greedy_monotone_cardinality_ef1}
is tight, as it is known that Round-Robin as an algorithm cannot guarantee more than $0.5$-EF1 for submodular
valuation functions even without constraints \citep{AmanatidisBL0R23}.

\begin{theorem}\label{thm:greedy_monotone_ef1}
Let $i$ be an agent with a $p_i$-system constraint $\mathcal{I}_i$ who chooses greedily in the Round-Robin protocol. 
Also, let $j$ be any agent and $g$ be the first item added to $S_j$. Then, $f_i(S_i) \ge \max_{S\in \mathcal{I}_i|S'_j}f_i(S) / (p_i +2)$, 
where $S'_j = S_j$ if $i<j$, and $S'_j = S_j\setminus \{g\}$ otherwise.
Furthermore, if $M' = M\setminus \cup_{j\in [n]}S_j$, then $f_i(S_i) \ge \max_{S\in \mathcal{I}_i|M'}f_i(S) / (p_i + 1)$.
\end{theorem}

\begin{proof}
Both parts of the statement are corollaries of Theorem \ref{thm:greedy_monotone_general}, albeit with slightly different setup. 

First, let $j$ be any agent other than $i$ and $S'_j$ be as given in the statement. We can view $S_i$ and $S_j$ as the result 
of running the Round-Robin protocol on $S_i \cup S_j$ with just the two agents $i$ and $j$ following the same order and the exact same choices as before. 
From the point of view of agent $i$, all the choices are consistent with the greedy policy $\mathcal{G}_i$. As a result, by 
Theorem \ref{thm:greedy_monotone_general}, 
\[f_i(S_i) \ge \max_{S\in \mathcal{I}_i|S_i \cup S_j}f_i(S) / (p_i + 2) \ge \max_{S\in \mathcal{I}_i|S'_j}f_i(S) / (p_i +2) \,,\] 
where the last inequality follows from the fact that $S'_j \subseteq S_i \cup S_j$.

For the second part of the statement, we do something similar but assuming the existence of an auxiliary \textit{dummy} agent $i'$,
who has the trivial independence system constraint $\mathcal{I}_{i'} = \{\emptyset\}$. Then, we can view $S_i$ and $M'$ as the result 
of running the Round-Robin protocol on $S_i \cup M'$ with just  $i$ and $i'$ following (just for $i$) the same choices as before. 
From the point of view of $i$, these choices are consistent with the policy $\mathcal{G}_i$. Now, however, the sets $Y_{(r)}$ in the proof
of Lemma \ref{lem:mapping-monotone-general} are all empty (as $i'$ never picks any items) rather than having cardinality up to $n-1 = 1$. 
Thus, part \textit{2.}~of Lemma \ref{lem:mapping-monotone-general}, for $n = 2$ and $a=1$, becomes $|\delta^{-1}(x_r^i)| \le  p_i $ instead of $|\delta^{-1}(x_r^i)| \le p_i + 1$, improving the overall factor of Theorem \ref{thm:greedy_monotone_general} by $1$ in this case. 
That is, by Theorem \ref{thm:greedy_monotone_general}, 
\[f_i(S_i) \ge \max_{S\in \mathcal{I}_i|S_i \cup M'}f_i(S) / (p_i + 1) \ge \max_{S\in \mathcal{I}_i|M'}f_i(S) / (p_i + 1) \,,\] 
where again the last inequality is due to $M' \subseteq S_i \cup M'$.
\end{proof}

\begin{theorem}\label{thm:greedy_monotone_cardinality_ef1}
Let $i$ be an agent with a cardinality constraint $\mathcal{I}_i$ who chooses greedily in the Round-Robin protocol. Also, let $j$ be any agent and $g$ be the first item added to $S_j$. Then, $f_i(S_i) \ge 0.5 \max_{S\in \mathcal{I}_i|S'_j}f_i(S)$, where $S'_j = S_j$ if $i<j$, and $S'_j = S_j\setminus \{g\}$ otherwise. 
\end{theorem}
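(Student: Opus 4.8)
The plan is to mirror the proof of Theorem~\ref{thm:greedy_monotone_ef1}, keeping the same renaming of the items of $M_i$, the notation $S_i^{(r)}$ and $s\vcentcolon=|S_i|$, and a maximizer $T=\{t_1,\dots,t_k\}\in\argmax_{S\in\mathcal{I}_i|S'_j}f_i(S)$. Since $\mathcal{I}_i$ is a cardinality constraint, say with bound $c$, we have $T\subseteq S'_j$ with $|T|\le c$, and agent $i$ takes exactly one item in each of the consecutive rounds $1,\dots,s$ (greedy adds an item whenever $A\neq\emptyset$, and $A$ becomes empty only when $S_i$ is full or $Q$ is exhausted, both irreversible). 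The factor $1/2$—improving on the $1/3$ that Theorem~\ref{thm:greedy_monotone_ef1} yields for $p_i=1$—will come entirely from replacing the mapping that packs $p_i+1=2$ items of $T$ into each greedy choice by an \emph{injective} mapping $\delta\colon T\to S_i$, written $\delta(t)=x_{r(t)}^i$, that still satisfies marginal dominance: $f_i(t\,|\,S_i^{(r(t))})\le f_i(x_{r(t)}^i\,|\,S_i^{(r(t))})$. The leverage for injectivity is the simple fact that agent $j$ adds \emph{at most one} item to $S_j$ per round.

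To build $\delta$, I would let $\rho(t)$ be the round in which agent $j$ picks $t$, for each $t\in T\subseteq S'_j$; these rounds are pairwise distinct. The natural target is $r(t)\vcentcolon=\rho(t)$ when $i<j$ and $r(t)\vcentcolon=\rho(t)-1$ when $i>j$; in the latter case $r(t)\ge 1$ precisely because $g$, the item $j$ takes in round $1$, is excluded from $S'_j$. In either ordering, $t$ is still available at agent $i$'s turn in round $r(t)$—when $i<j$ because $i$ moves before $j$ within the round, and when $i>j$ because $t$ survives until the next round—and it is feasible there, since $|S_i^{(r(t))}|<c$ whenever $r(t)\le s$. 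Hence the greedy choice of $x_{r(t)}^i$ over $t$ gives marginal dominance, and as the values $r(t)$ are distinct, $\delta$ is injective as long as every $r(t)\le s$.

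The one case needing extra care—and the step I expect to be the main obstacle—is when some $r(t)$ exceeds $s$. I would argue that this can only happen when agent $i$ has exhausted its budget, i.e. $s=c$: if instead $s<c$, then agent $i$ stopped only because $Q$ became empty at its turn in round $s+1$, which forces $j$ to stop picking by round $s$ (if $i<j$) or by round $s+1$ (if $i>j$), so in both situations $r(t)\le s$. In the remaining case $s=c$ we have $|T|\le c=s$, and any ``late'' item $t$ with $r(t)>s$ is taken by $j$ only after round $s$; hence it was available at agent $i$'s turn in \emph{every} round $r\in[s]$ and feasible there (as $|S_i^{(r)}|=r-1<c$), so it can be reassigned to one of the rounds of $[s]$ left free by the early items, of which there are enough because $|T|\le s$. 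This produces an injective $\delta\colon T\to S_i$ with marginal dominance in all cases.

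It then remains to run the same estimate as in Theorem~\ref{thm:greedy_monotone_ef1} via the second part of Theorem~\ref{thm:SM-monotone}; using that $T\cap S_i=\emptyset$ (the solutions of distinct agents are disjoint),
\[
f_i(T)\le f_i(S_i)+\sum_{t\in T} f_i(t\,|\,S_i)\le f_i(S_i)+\sum_{t\in T} f_i\big(t\,|\,S_i^{(r(t))}\big)\le f_i(S_i)+\sum_{r=1}^{s} f_i(x_r^i\,|\,S_i^{(r)})=2f_i(S_i),
\]
where the second inequality is submodularity ($S_i^{(r)}\subseteq S_i$), the third uses injectivity of $\delta$ together with marginal dominance (each round $r\in[s]$ receives at most one item of $T$), and the final equality telescopes. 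Rearranging gives $f_i(S_i)\ge 0.5\,\max_{S\in\mathcal{I}_i|S'_j}f_i(S)$, as claimed.
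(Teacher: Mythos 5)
Your proof is correct and follows essentially the same route as the paper's: both rest on an injective assignment of the items of $T$ to agent $i$'s greedy picks under which each $t$ has no larger marginal value than its image at the moment the latter is chosen, which is exactly what the observation that $j$ picks at most once per round (plus the exclusion of $g$ when $i>j$) provides. The only cosmetic difference is the final accounting: the paper telescopes over the tails $T_r$ as in the classical greedy analysis, while you invoke the second part of Theorem~\ref{thm:SM-monotone} and pad the sum with the nonnegative unused marginals; both yield the factor $2$.
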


\begin{proof}
The proof closely follows the first part of the proof of Theorem \ref{thm:greedy_monotone_ef1}. In particular, the setup and the notation are the same. 
The only difference is that when we compare $S_i$ and the best feasible subset of $S_i \cup S_j$ from $i$'s point of view, instead of
Theorem \ref{thm:greedy_monotone_general}, we invoke Theorem \ref{thm:greedy_monotone_cardinality} (for $n=2$) and get 
\[f_i(S_i) \ge \max_{S\in \mathcal{I}_i|S_i \cup S_j}f_i(S) / 2 \ge 0.5 \max_{S\in \mathcal{I}_i|S'_j}f_i(S) \,,\] 
where again the second inequality is due to $S'_j \subseteq S_i \cup S_j$.
\end{proof}

By simulating a run of Round-Robin$(\bm{\mathcal{G}})$, i.e., our Protocol \ref{alg:MRR} with greedy policies for all agents, Theorems \ref{thm:greedy_monotone_ef1} and \ref{thm:greedy_monotone_cardinality_ef1} give us the following direct corollaries. \textit{Maximality} in the statements refers to the fact that adding any unallocated item to any agent's set is infeasible.

\begin{corollary}\label{cor:fairness_p-system}
    For agents with monotone submodular valuation functions and $p$-system constraints, we can efficiently find a $\frac{1}{p+2}$-FEF1 and $\frac{1}{p+1}$-FEFu maximal allocation.
\end{corollary}

\begin{corollary}\label{cor:fairness_cardinality}
    For agents with monotone submodular valuation functions and cardinality constraints, we can efficiently find a $0.5$-FEF1 and $0.5$-FEFu maximal allocation. 
\end{corollary}

Similarly to Theorem \ref{thm:greedy_monotone_cardinality_ef1}, Corollary \ref{cor:fairness_cardinality} is tight even without constraints \citep{AmanatidisBL0R23}. We suspect that the corresponding tight result for Corollary \ref{cor:fairness_p-system} would be $p + 1$ instead.

\subsection{Improved Guarantees for Robust Instances}
\label{sec:robust_monotone}
A natural next question now is: \textit{is it possible for an agent to choose items optimally---or at least, in a way that significantly improves over choosing greedily?} In Section \ref{sec:hardness_monotone}, we show that doing so is computationally hard. Before doing so, however, we divert the question to \textit{under what circumstances} is it possible for an agent to choose items (approximately) optimally by choosing greedily. We turn our attention to the special case of robust instances, for which the results of Theorems \ref{thm:greedy_monotone_general} and \ref{thm:greedy_monotone_cardinality} can be significantly improved.

We begin with a parameterized result about a specific agent, which will imply, as a corollary, a strong guarantee for instances that are $(\Omega(n), O(1))$-robust with respect to everyone (Corollary \ref{cor:greedy_monotone_general_robust_2}). Moreover, Theorem \ref{thm:greedy_monotone_general_robust} will be pivotal for obtaining guarantees with respect to the \textit{maximin shares} of the agents in Section \ref{sec:mms}.
Note that $\gamma$ (or even $\beta$) in the following statement can be functions of $n$; in particular, $\gamma n$ is assumed to be integer.

\begin{theorem}\label{thm:greedy_monotone_general_robust}
    Assume that right before agent $i$ is allowed to pick their first item, the remaining instance is $(\gamma n, \beta)$-robust with respect to $i$ who has a $p_i$-system (resp.~cardinality) constraint. If $i$ chooses greedily in the Round-Robin protocol, they build a solution $S_i$ such that $f_i(S_i) \ge {\opt_i}/{\beta(p_i  + 1 + 1/\gamma)}$ (resp.~$f_i(S_i) \ge {\opt_i}/{\beta \max(2, 1 + 1/\gamma)}$).
\end{theorem}

\begin{proof}
We will prove the theorem for a $p_i$-system constraint and discuss the case of a cardinality constraint briefly at the end of the proof.
Our notation will be consistent with the one introduced in the proof of Theorem \ref{thm:greedy_monotone_general}, 
i.e., the set $M_i$ contains the items still available right before agent $i$ chooses their very first item, item $x_j^{\ell}$ 
is the $j$-th item that agent $\ell$ chooses from the moment when agent $i$ is about 
to start choosing, $S_i^{(r)}$ is the solution of agent $i$ right before item $x_r^i$ is added to it, and 
$s\vcentcolon= |S_i|$.

Since the instance is $(\gamma n, \beta)$-robust with respect to $i$ right before $i$ gets to pick their first item, 
at that time there are at least $a \vcentcolon= \gamma n$ disjoint subsets of $M_i$, say $Q_{i1}, \ldots, Q_{i a}$, 
such that $Q_{ij}\in \mathcal{I}_i$ and $\beta \sum_{j\in[a]} f_i(Q_{ij}) \ge a\,\opt_i$.

Here we invoke the full power of Lemma \ref{lem:mapping-monotone-general}, as we apply it to all $Q_{i1}, \ldots, Q_{i a}$ and assume the existence of $\delta$ as in the statement of the lemma.  Further, we apply the second part of Theorem \ref{thm:SM-monotone} for $S_i$ and each $Q_{ij}$, and we add up these $a$ inequalities:
\begin{align*}
\sum_{j=1}^{a} f_i(Q_{ij})  &\le \sum_{j=1}^{a} f_i(S_i) + \sum_{j=1}^{a}\sum_{x \in Q_{ij} \setminus S_i} \!\! f_i(x\,|\,S_i)   
            = a  f_i(S_i) + \!\! \sum_{x \in ( \bigcup_{j\in[a]} Q_{ij}) \setminus S_i} \!\!\! f_i(x\,|\,S_i) \\
            &\le a f_i(S_i) + \sum_{r=1}^{s}\sum_{x \in \delta^{-1}(x_r^i)} f_i(x\,|\,S_i)
            \le a f_i(S_i) + \sum_{r=1}^{s}\sum_{x \in \delta^{-1}(x_r^i)} f_i(x\,|\,S_i^{(r)})\\
            &\le a f_i(S_i) + \sum_{r=1}^{s} (n + a p_i - 1) f_i(x_r^i \,|\, S_i^{(r)})
            = \gamma n f_i(S_i) + (n + \gamma n p_i - 1) f_i(S_i) \\[5pt]
            &\le n (\gamma + 1+ \gamma p_i)  f_i(S_i) \,,
\end{align*}
where the second inequality follows from the fact that $\bigcup_{r=1}^{s} \delta^{-1}(x_r^i) = \big( \bigcup_{j\in[a]} Q_{ij} \big) \setminus S_i$, the third follows from submodularity, and the fourth from the second property of Lemma \ref{lem:mapping-monotone-general}. On the other hand, 
\[
\sum_{j=1}^{a} f_i(Q_{ij})  \ge \frac{a\, \opt_i}{\beta}  = \frac{\gamma n}{\beta} \opt_i\,.
\]
We conclude that
$f_i(S_i) \ge  {\opt_i}/{\beta(p_i  + 1 + 1/\gamma)}$, as claimed.

In the case of a cardinality constraint, by using the analog of Lemma \ref{lem:mapping-monotone-general} implied in the proof of Theorem \ref{thm:greedy_monotone_cardinality}, the proof is exactly the same as above except that in the fourth inequality we get $\max(n-1, a)$ instead of $n + a p_i - 1$. Then the factor is a matter of simple calculations.
\end{proof}

\begin{corollary}\label{cor:greedy_monotone_general_robust_1}
    Assume that an instance is $(\gamma n +i -1, \beta)$-robust with respect to agent $i$ 
    who has a $p_i$-system (resp.~cardinality) constraint. If $i$ chooses greedily in the Round-Robin protocol, they build a solution $S_i$ such that $f_i(S_i) \ge {\opt_i}/{\beta(p_i  + 1 + 1/\gamma)}$ (resp.~$f_i(S_i) \ge {\opt_i}/{\beta \max(2, 1 + 1/\gamma)}$).
\end{corollary}

\begin{proof}
Since the instance is $(\gamma n +i -1, \beta)$-robust with respect to $i$, 
by the time $i$ gets to pick their first item, there are at least $a \vcentcolon= \gamma n$ 
disjoint subsets of $M$, say $Q_{i1}, \ldots, Q_{i a}$, such that $Q_{ij}\in \mathcal{I}_i$ 
and $\beta \sum_{j\in[a]} f_i(Q_{ij}) \ge a\, \opt_i$, \textit{and} no item in 
$\cup_{j=1}^{a} Q_{ij}$ has been allocated to any of the agents $1$ through $i-1$ yet.
That is, the remaining instance is $(\gamma n, \beta)$-robust with respect to $i$ 
and the corollary follows by applying Theorem \ref{thm:greedy_monotone_general_robust}.
\end{proof}

Although Theorem \ref{thm:greedy_monotone_general_robust} and Corollary \ref{cor:greedy_monotone_general_robust_1} are 
agent specific, in applications like our influence maximization running example one would expect that typical real-world 
instances are extremely robust with respect to everyone. The rationale here is that the set $M$ contains much more value 
than what an agent can extract given their constraint. The following corollary gives strong guarantees with respect to all 
agents for such scenarios. Recall that these guarantees are almost the best one could hope for in polynomial 
time \citep{BadanidiyuruV14}.

\begin{corollary}\label{cor:greedy_monotone_general_robust_2}
    Assume that an instance is $(\lceil(1+\gamma) n\rceil , \beta)$-robust with respect to every agent for constant $\beta, \gamma \in \mathbb{R}_+$. Any agent $i$ with a $p_i$-system constraint who chooses greedily in the Round-Robin protocol, achieves a $1/\Theta(p_i)$ fraction of\, $\opt_i$. In particular, for $\beta = \gamma  = 1$, any agent $i$ achieves $\opt_i/(p_i + 2)$ for a $p_i$-system constraint (hence, $\opt_i/3$ for a cardinality constraint).
\end{corollary}

\begin{proof}
We have that $\lceil(1+\gamma) n\rceil = n + \lceil \gamma n\rceil = n + n\zeta(n) $, where $\zeta(n) \vcentcolon= \lceil \gamma n\rceil / n$, for any $n\in \mathbb{N}$, and thus  $\gamma - 1/n < \zeta(n) \le \gamma$.
By Definition \ref{def:robust}, if an instance is $((1+\zeta(n)) n , \beta)$-robust with respect to agent $i$, then it is also $(\zeta(n) n +i -1, \beta)$-robust with respect to $i$. By applying Corollary \ref{cor:greedy_monotone_general_robust_1}, we get $\opt_i \le  \beta(p_i  + 1 + 1/\zeta(n)) f_i(S_i)$, and since $\beta$ and $\gamma$ are constant, $\beta(p_i  + 1 + 1/\zeta(n)) = \Theta(p_i)$. In particular, $\beta(p_i  + 1 + 1/\zeta(n)) = p_i + 2$, for $\beta = \gamma  = 1$. 
\end{proof}

As we mentioned above, our definition of $(\alpha, \beta)$-robustness and Theorem \ref{thm:greedy_monotone_general_robust} lead to simple protocols with approximate \textit{maximin share} guarantees. In particular, in Section \ref{sec:mms}, we show that variants of Algorithm 2 of \citet{AMNS17} and Algorithm 3 of \citet{BarmanK20} (which essentially fall under our notion of a protocol) imply $1/\Theta(p_i)$-approximate guarantees with respect to each agent's feasible maximin share.

\section{It is Hard to Optimize Over Others}
\label{sec:hardness_monotone}
So far, there was no particular need to talk about computational efficiency. The greedy policy clearly runs in polynomial time, assuming polynomial-time value and independence oracles, and the Round-Robin protocol itself delegates any non-trivial computational task to the agents themselves. Here, we need to clarify our terminology a bit. 
When we refer to polynomial-time algorithms in the statements of the next two theorems we mean  algorithms that determine the choices of a single agent, who has full information about all the objective functions, and whose number of steps and number of queries to submodular function value oracles are bounded by a polynomial in the number of agents. Within their proofs, however, we reduce the existence of such algorithms to polynomial-time approximation algorithms for constrained submodular maximization, whose  number of steps and number of queries to a submodular function value oracle are bounded by a polynomial in the size of the ground set over which the submodular function is defined.

When we say that an algorithm $\mathcal{A}_i$ is a \emph{$\rho$-improvement of agent $i$ over the greedy policy $\mathcal{G}_i$ (given the objective functions and policies of all other agents)}, we mean that running $\mathcal{A}_i$, while the other agents stick to their corresponding policies, results in agent $i$ receiving a set $\rho$ times more valuable than the one they receive by choosing greedily, whenever this is possible.

The main result here is Theorem \ref{thm:second_hardness}. On a high level,  we show that in this setting the inapproximability result of \citet{Feige98} is amplified to the point where consistently doing \textit{slightly} better than the greedy policy is NP-hard, even when other agents pick elements in the most predictable way. 
At a first glance, it might not be obvious that the theorem says something nontrivial; our setting generalizes the problem of maximizing a submodular function subject to a cardinality constraint, so it is expected that it is impossible to beat the performance of a greedy solution in polynomial time.

However, this hardness result holds 
even when there are \textit{no individual feasibility constraints} and when all other agents (i.e., everyone except the agent to whom the hardness applies) use strategies as simple as choosing greedily.  That is, the hardness does not stem from having strong constraints or hard to analyze policies for everyone else, but rather indicates the inherent computational challenges of the problem. 
We find this to be rather counter-intuitive.
The proof of Theorem \ref{thm:second_hardness} is not straightforward either. To prove the theorem we do not construct a general reduction as usual, but rather we deal with a number of cases algorithmically and we only use a reduction for instances with very special structure.

\begin{theorem}\label{thm:second_hardness}
Assume $n\ge 2$. Let $\varepsilon \in (0, 0.2)$ be any small constant and $j\in[n]$ be any agent.
Even in instances where all agents in $[n]\setminus\{j\}$ have additive objective functions and greedy policies, there is no polynomial-time algorithm that is a $(1+\varepsilon)$-improvement of agent $j$ over the greedy policy $\mathcal{G}_j$,
unless $\mathrm{P} = \mathrm{NP}$.
\end{theorem}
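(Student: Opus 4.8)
The plan is to reduce from the inapproximability of monotone submodular maximization under a cardinality constraint (Theorem~\ref{thm:Feige-hardness}): for every $\delta>0$ it is \textrm{NP}-hard to compute, given a monotone submodular $g$ and a budget $k$, a set of value exceeding $(1-1/e+\delta)\max_{|S|\le k} g(S)$. I would embed such an instance into \multisub as follows. Agent $j$ receives the objective $f_j=g$ and no feasibility constraint, while each of the remaining $n-1$ agents is additive and assigns value $0$ to every item relevant for $g$ and large value to its own private pool of ``junk'' items; the numbers of junk items and the total size $m$ are chosen so that $\lceil m/n\rceil=k$. Because the junk pools are disjoint and worthless to $j$, every additive agent playing greedily always grabs junk and never touches an item that $g$ cares about, regardless of what $j$ does. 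Consequently $j$'s deviations never alter the others' choices: $j$ effectively faces the single-agent problem of picking any $k$ items to maximize $g$, the greedy policy $\mathcal{G}_j$ reduces to the standard greedy algorithm for $g$, and the best value $j$ can secure equals $\opt\vcentcolon=\max_{|S|\le k}g(S)$.

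The heart of the argument is then a two-case split that turns a $(1+\varepsilon)$-improvement into a too-good approximation of $\opt$. Suppose, for contradiction, that a polynomial-time algorithm $\mathcal{A}_j$ is a $(1+\varepsilon)$-improvement of $j$ over $\mathcal{G}_j$. Run both $\mathcal{G}_j$ and $\mathcal{A}_j$ on the embedded instance, let $G$ be greedy's value and $A$ be $\mathcal{A}_j$'s value, and output the better of the two sets, of value $V\vcentcolon=\max\{G,A\}$. If a $(1+\varepsilon)$-improvement is possible, i.e.\ $\opt\ge (1+\varepsilon)G$, then by definition $A\ge(1+\varepsilon)G$, and since the standard greedy guarantee gives $G\ge(1-1/e)\opt$, we get $V\ge(1+\varepsilon)(1-1/e)\opt$. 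If no such improvement is possible, i.e.\ $\opt<(1+\varepsilon)G$, then $V\ge G>\opt/(1+\varepsilon)$. For $\varepsilon<0.3$ we have $1/(1+\varepsilon)>1-1/e$, so both $(1+\varepsilon)(1-1/e)$ and $1/(1+\varepsilon)$ strictly exceed $1-1/e$; choosing $\delta$ below their common excess yields $V>(1-1/e+\delta)\opt$ in either case. Since we never needed to know which case holds, this is a polynomial-time $(1-1/e+\delta)$-approximation for constrained monotone submodular maximization, contradicting Theorem~\ref{thm:Feige-hardness} and forcing $\mathrm{P}=\mathrm{NP}$.

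The main obstacle is not the inequality chain above but making the conditional ``whenever this is possible'' clause interact cleanly with the reduction, which is exactly what the two-case split resolves: one case (greedy already within $1/(1+\varepsilon)$ of optimal) is settled \emph{algorithmically} by greedy itself, and only the complementary case invokes the assumed improvement, so the reduction needs to bite only on the structured Feige instances. The remaining care lies in the embedding: one must verify that the interleaving of the round-robin rounds lets $j$ obtain a full budget of $k$ relevant items while the additive agents provably exhaust only their junk pools (so that $\mathcal{G}_j$ coincides with textbook greedy and the attainable optimum is exactly $\opt$), and dispose of degenerate parameter regimes---e.g.\ when $k$ is comparable to the number of relevant items or when the forced interleaving leaves $j$ with fewer than $k$ turns---by direct polynomial-time computation rather than through the reduction. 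The constant $0.3$ is dictated solely by the requirement $1/(1+\varepsilon)>1-1/e$ (equivalently $\varepsilon<e/(e-1)-1\approx 0.582$), with the slack absorbing the embedding losses and the gap parameter $\delta$.
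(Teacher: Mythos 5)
Your two-case split is sound and is, at its core, the same mechanism the paper uses: the ``no improvement possible'' branch is certified algorithmically by greedy itself, the other branch invokes the assumed improvement, and the maximum of the two outputs beats $1-1/e$ by a constant, contradicting Theorem~\ref{thm:Feige-hardness}. The inequalities you write are correct, and it is a fair observation that once the embedding cleanly identifies agent $j$'s best attainable value with $\max_{|S|\le k}g(S)$ and greedy's value with the textbook greedy value, no amplification is needed --- $(1+\varepsilon)(1-1/e)$ and $1/(1+\varepsilon)$ both exceed $1-1/e$ for $\varepsilon<0.3$.

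The gap is in the embedding, and it is not confined to a degenerate regime --- it kills exactly the regime where the hardness lives. You give agent $j$ no feasibility constraint and enforce the budget by setting $\lceil m/n\rceil=k$, while each of the other $n-1$ agents must survive all $k$ of their turns on a private junk pool (otherwise, once an additive agent's pool is empty, its greedy policy still picks \emph{some} available item of zero marginal value, possibly one of $g$'s relevant items, and your identity ``attainable optimum $=\max_{|S|\le k}g(S)$'' breaks). This forces each pool to have size at least $k$, hence $m\ge N+(n-1)k$ where $N$ is the number of relevant items; combined with $m\le nk$ this gives $N\le k$, i.e., the budget covers the whole relevant ground set and the Feige instance is trivial. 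So the reduction only ever produces instances you could solve directly, and never touches the hard ones with $N\gg k$. The natural patch of giving agent $j$ an explicit cardinality-$k$ constraint (which the theorem statement permits) runs into a second issue: the protocol does not police feasibility, so ``the set agent $j$ receives'' may be a large superset from which extracting the best feasible $k$-subset is itself the hard problem. The paper's construction is designed around precisely these two obstacles: the other agents assign \emph{positive} (not zero) value to the relevant items, smaller than the value of the $(k-1)n$ decoys in $D$, so that they provably sweep up all of $A\setminus\textsc{Gre}$ in round $k$; this enforces the budget implicitly (agent $j$ can hold at most $k$ relevant items, all acquired in the first $k$ rounds, so the relevant part of any solution is an explicit feasible set) and yields the stronger claim that the hardness holds with no feasibility constraints at all. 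The paper additionally pads with the low-value set $L$ to push the improvement threshold up to roughly $e/(e-1)$, handles $k\le 6$ by brute force, and case-splits on $f(\textsc{Gre})/\mu$ to keep the instance polynomial; your analysis shows the $L$-padding is not logically necessary for the contradiction, but the round-$k$ mop-up of the relevant items (or an equivalent device) is, and your proposal is missing it.
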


\begin{proof}
Fix some $\varepsilon\in (0, 0.2)$.
For simplicity, we are going to show the result for $j = 1$, i.e., for agent $1$; the proof is analogous for general $j$. 
We show that the existence of such an algorithm, call it $\mathcal{A}_1$ 
(which is also a function of $\mathcal{A}_2, \ldots, \mathcal{A}_n$), 
implies the existence of a polynomial-time $(1- \frac{1}{e} + \frac{\varepsilon}{10})$-approximation algorithm for maximizing a submodular function subject to a cardinality constraint, directly contradicting the classic hardness result of \citet{Feige98}:

\begin{theorem}[\citet{Feige98}]\label{thm:Feige-hardness}
    Unless\hspace{.1em} $\mathrm{P} = \mathrm{NP}$, for any constant $\delta > 0$, there is no polynomial-time $(1- {1}/{e} + \delta)$-approximation algorithm for maximizing a submodular function subject to a cardinality constraint.
\end{theorem}

Given a normalized, monotone submodular function $f:2^A\to \mathbb{R}_{\ge0}$, where $|A| = n\ge 2$, and a cardinality constraint $k\in [n]$, let $\textsc{Opt} \in \argmax_{S\subseteq A, |S|= k} f(S)$ and $\textsc{Gre}$ be the solution constructed by the greedy algorithm that starts with $\textsc{Gre} = \emptyset$ and for $k$ steps selects an item with maximum marginal value with respect to $\textsc{Gre}$ and adds it therein (ties broken lexicographically). It is known that $f(\textsc{Gre})\ge (1- \frac{1}{e}) f(\textsc{Opt})$ \citep{NemhauserWF78}. 
We are going to exploit a simple property of the greedy algorithm on monotone submodular functions, namely, the fact that either the greedy solution is optimal, or the $k$-th element added to it had marginal value at least ${(f(\textsc{Opt})-f(\textsc{Gre}))}/{k}$. This is easy to see: right before the greedy algorithm picked the $k$-th element, $j_k$, there was at least as much available value as $f(\textsc{Opt}\cup\textsc{Gre})-f(\textsc{Gre}\setminus\{j_k\})\ge f(\textsc{Opt})-f(\textsc{Gre})$ split among the items of $\textsc{Opt}\setminus(\textsc{Gre}\setminus\{j_k\})$ (the number of which is at most $k$).

 If $k\le6$ then the problem can be solved optimally in polynomial time with polynomially many value queries by checking all possible subsets of cardinality $k$, so assume that $k\ge 7$.
Consider the  marginal value $\mu$ of the $k$-th element added to $\textsc{Gre}$; as discussed above, this marginal value should be at least $(f(\textsc{Opt})-f(\textsc{Gre}))/{k}$. Of course, if $\mu = 0$, then $\textsc{Gre}$ is an optimal solution. Otherwise (i.e., when $\mu \neq 0$), we consider two cases, depending on how large the ratio $f(\textsc{Gre})/\mu$ is. This distinction allows us later to construct a $\multisub$ instance that is polynomially related to the original instance.

If $\left\lfloor f(\textsc{Gre})/\mu \right \rfloor > n^2$, then
\[f(\textsc{Gre})>   n^2 \mu \ge   n^2 \frac{f(\textsc{Opt})-f(\textsc{Gre})}{k} \ge n (f(\textsc{Opt})-f(\textsc{Gre}))\,,\]
and, thus,
\[f(\textsc{Gre})> \frac{n}{n+1} f(\textsc{Opt}) \ge  \frac{2}{3} f(\textsc{Opt}) > \big(1- \frac{1}{e} + \frac{\varepsilon}{10}\big ) f(\textsc{Opt})\,,\]
where the second inequality follows from the monotonicity of $\frac{x}{x+1}$ and the third inequality reflects the simple fact that $2/3 > 1- e^{-1} + 0.025$.

If $\left\lfloor f(\textsc{Gre})/\mu \right \rfloor \le n^2$, we are going to construct a $\multisub$ instance. In what follows, we assume that $f(\textsc{Gre})/\mu$ is integer. While this simplifies things, it is without loss of generality: since, by definition, $f(\textsc{Gre})\ge \mu$, we have that $f(\textsc{Gre})/\mu >1$ and we could use $\hat{\mu} = f(\textsc{Gre}) / \left\lfloor f(\textsc{Gre})/\mu \right \rfloor$ instead of $\mu$ in the definition of $f_1$ below without changing anything else, besides replacing $f(\textsc{Gre})/\mu $ with $\left\lfloor f(\textsc{Gre})/\mu \right \rfloor$ and minimally modifying the lower bounding of $f_1(S_1^+[k]\cap A)$.

So, consider a $\multisub$ instance with $n$ agents and $m=kn+2kn\lambda  f(\textsc{Gre})/\mu$ items, where $\lambda = \big\lfloor (\varepsilon(e-1))^{-1}-2\big\rfloor$.
Let the set of items be $A\cup D\cup L$, where $D$ is going to be a set of $(k-1)n$ auxiliary items of low value for agent $1$ and $L$ is a set of $2kn\lambda  f(\textsc{Gre})/\mu $ auxiliary items of low value for everyone. For the objective functions, we have
\[f_1(S)  =  f(S\cap A) +\frac{\mu}{2k}|S\cap L|\,
\text{\ \ \ and\ \ \ \,}
f_i(S)  =  2|S\cap D| + |S\cap A| + \frac{1}{2} |S\cap L| \,, \text{ for } 2\le i\le n \,.\]
Note that $f_1$ is normalized, monotone submodular and $f_2, \ldots, f_n$ are additive.
We further assume that items in $A$ lexicographically precede items in $D$ (this is only relevant when $\textsc{Gre}$ contains items of zero marginal value).

Now, if every agent $i\in[n]$ uses their greedy policy $\mathcal{G}_i$ here, then for the allocation $(S_1, S_2, \ldots, \allowbreak S_n)$ returned by the protocol Round-Robin$(\bm{\mathcal{G}})$, where $\bm{\mathcal{G}} = (\mathcal{G}_1,  \ldots, \mathcal{G}_n)$, we have:
\begin{enumerate}[leftmargin=23pt]
    \item All items in $A\cup D$ are allocated during the first $k$ rounds, whereas all items in $L$ are allocated during the subsequent $2k \lambda f(\textsc{Gre})/\mu$ rounds.
    \item $S_1 = \textsc{Gre}\cup S_L$, where $S_L\subseteq L$ and $|S_L| = 2k \lambda f(\textsc{Gre})/\mu$, and therefore, $f_1(S_1) = (\lambda +1) f(\textsc{Gre})$;
    \item All items in $A\setminus \textsc{Gre}$ are allocated to agents $2, \ldots, n$ in round $k$, after the whole $\textsc{Gre}$ has already been allocated to agent $1$.
\end{enumerate}
The idea here is to allow agent $1$ to get any $k$ items from $A$ before these get allocated to the other agents during round $k$; then everybody builds more value via the low-valued items of $L$ for the rest of the execution of Round-Robin$(\bm{\mathcal{G}})$.
If $f_1(\textsc{Gre}) = f(\textsc{Gre}) < f(\textsc{Opt})$, then any algorithm $\mathcal{A}'_1$ that chooses the items in $\textsc{Opt}$ before any other items is an improvement for agent $1$ over $\mathcal{G}_1$; Running Round-Robin$(\mathcal{A}'_1, \bm{\mathcal{G}}_{-1})$---where $\bm{\mathcal{G}}_{-1}$ denotes $(\mathcal{G}_2,  \ldots, \mathcal{G}_n)$ as usual---results in agent $1$ receiving $\textsc{Opt}$ along with $2k \lambda f(\textsc{Gre})/\mu$ items from $L$ and, thus, obtaining value $f(\textsc{Opt}) + \lambda  f(\textsc{Gre})$ instead.

If $f(\textsc{Opt}) + \lambda  f(\textsc{Gre}) \le (1 + \varepsilon) (\lambda +1) f(\textsc{Gre})$, then
\begin{align*}
f(\textsc{Opt}) \le (1+ \lambda \varepsilon +\varepsilon) f(\textsc{Gre}) 
		\le \big(1+ \frac{1}{e-1}-2\varepsilon +\varepsilon\big) f(\textsc{Gre}) 
		\le \big(\frac{e}{e-1} - \varepsilon\big) f(\textsc{Gre}) \,,
\end{align*}
where the second inequality directly follows by the definition of $\lambda$. Therefore,
\[f(\textsc{Gre}) \ge \big(\frac{e}{e-1} - \varepsilon\big)^{-1} f(\textsc{Opt}) > \big(1- \frac{1}{e} + \frac{\varepsilon}{10}\big ) f(\textsc{Opt})\,,\]
where it is easy to check that the second inequality holds for all $\varepsilon \in (0, 0.2)$.

Finally, if $f(\textsc{Opt}) + \lambda  f(\textsc{Gre}) > (1 + \varepsilon) (\lambda +1) f(\textsc{Gre})$, then  $\mathcal{A}'_1$ is a $(1+\varepsilon)$-improvement of agent $1$ over $\mathcal{G}_1$.
In such a case, by its definition, $\mathcal{A}_1$ is also a (possibly different) $(1+\varepsilon)$-improvement of agent $1$ over $\mathcal{G}_1$. Running Round-Robin$(\mathcal{A}_1, \bm{\mathcal{G}}_{-1})$ results in agent $1$ receiving a solution $S_1^+$, such that
\begin{align*}
f_1(S_1^+) 	&\ge (1+\varepsilon) (\lambda +1) f(\textsc{Gre}) \\
		&\ge \lambda  f(\textsc{Gre}) + (1+\lambda\varepsilon+ \varepsilon)\Big(1- \frac{1}{e}\Big) f(\textsc{Opt}) \,,
\end{align*}
where the first inequality follows from the choice of $\mathcal{A}_1$ and $S_1^+$, and the second from simple algebra and the guarantee of the greedy algorithm. Note that the second term of the right-hand side must correspond to the items added to $S_1^+$ in the first $k$ rounds of Round-Robin$(\mathcal{A}_1, \bm{\mathcal{G}}_{-1})$, as the greedy choices of agents $2, \ldots, n$ guarantee that from round $k+1$ onward only items from $L$ will be allocated (resulting in the $\lambda  f(\textsc{Gre})$ term of the right-hand side).  Let us denote  by $S_1^+[k]$ these items added to $S_1^+$ in the first $k$ rounds. We are interested in the value of $S_1^+[k]\cap A$. Note that, trivially, $|S_1^+[k]\cap L|\le k$, and thus
\begin{align*}
f_1(S_1^+[k]\cap A)   &\ge (1+\lambda\varepsilon+ \varepsilon)\,\frac{e-1}{e}\, f(\textsc{Opt}) - f_1(S_1^+[k]\cap L) \\
                        &\geq \Big(1+ \Big( \frac{1}{\varepsilon(e-1)}-3 \Big) \varepsilon + \varepsilon\Big) \,\frac{e-1}{e} \, f(\textsc{Opt}) - k\, \frac{\mu}{2k} \\
                        &\geq \Big(\frac{e}{(e-1)}-2 \varepsilon \Big) \,\frac{e-1}{e} \, f(\textsc{Opt}) - \frac{f(\textsc{Opt})}{2k} \\
                        &= \Big(1- \frac{2 \varepsilon(e-1)}{e} - \frac{1}{2k}\Big) \, f(\textsc{Opt}) \\
                        &\geq \Big(1- \frac{2 (e-1)}{5e} - \frac{1}{14}\Big) \, f(\textsc{Opt}) \\
                        &\geq \big(1- \frac{1}{e} + \frac{\varepsilon}{10}\big ) f(\textsc{Opt}) \,,
\end{align*}
where the second inequality follows by using $(\varepsilon(e-1))^{-1}-3$ as a lower bound for $\lambda$, the third inequality uses a trivial upper bound for $\mu$, namely $\mu\le f(\textsc{Opt})/k$, the fourth inequality uses the fact that $\varepsilon<1/5$ and $k\ge 7$, and finally it is a matter of simple calculations to check that the last inequality holds for all $\varepsilon\in (0, 0.2)$.

Thus, besides the case where $k$ is small and we optimally solve the problem using brute force, either $\textsc{Gre}$ or $S_1^+[k]$ guarantee a value of at least  $(1- \frac{1}{e} + \frac{\varepsilon}{10}) f(\textsc{Opt})$, leading to the following polynomial-time $(1- \frac{1}{e} + \frac{\varepsilon}{10})$-approximation algorithm for maximizing a submodular function $f$, subject to a cardinality constraint $k$:
\begin{itemize}
    \item If $k\le 6$ use brute force to solve the problem optimally by considering all $k$-subsets of $A$;
    \item Otherwise, run the greedy algorithm to find a solution $\textsc{Gre}$ and use it to calculate $\mu$. If $\mu = 0$ or $f(\textsc{Gre})/\mu > n^2$, then return $\textsc{Gre}$;
    \item Otherwise, construct the $\multisub$ instance described above;
    \item Run Round-Robin$(\mathcal{A}_1, \bm{\mathcal{G}}_{-1})$ to obtain the solution $S_1^+$, and thus, $S_1^+[k]$;
    \item If $f(S_1^+[k]\cap A) >  f(\textsc{Gre})$, then return $S_1^+[k]$, else return $\textsc{Gre}$.
\end{itemize}
As there is no such algorithm \citep{Feige98}, there is no polynomial-time algorithm that returns a $(1+\varepsilon)$--improvement for agent $1$ over $\mathcal{G}_1$ whenever such an improvement exists.
\end{proof}

Despite Theorem \ref{thm:second_hardness}, as we mentioned earlier and as it is illustrated by the following example, 
there are instances where an agent can improve their value by a factor of order $\Omega(n)$ relative to the greedy solution
by choosing items more carefully and taking other agents' objective functions into consideration. 
Another way to interpret Example \ref{ex:opt_vs_greedy} is that Protocol \ref{alg:MRR}, viewed as a mechanism, is not truthful and that its \textit{incentive ratio} (i.e., the ratio of the optimal value an agent can achieve by deviating to the value they get by reporting truthfully) is $\Omega(n)$.

\begin{example}\label{ex:opt_vs_greedy}
    Consider an instance with $n\ge 3$ agents and a set $M=\{g_1, g_2, \ldots, g_m\}$ of $m=n^2+1$ items. The agents here have no combinatorial constraints. We begin by describing the objective function $f_1(\cdot)$ of agent 1. We define $f_1(\cdot)$ to be additive, and set $f_1(g_j)=1$, if $g_j \in \{g_1,\ldots, g_{2n}\}$, and $0$ otherwise. Let $1 \gg \epsilon_1>\epsilon_2>\epsilon_3>\epsilon_4>0$.  For the rest of the agents, i.e., for every agent $i \neq 1$,  we define their objective functions as follows:
\[
	f_i(g_j)=  
	\begin{cases}
	1+\epsilon_1, & \text{if $j=i-1$} \\
	1 + \epsilon_2, & \text{if $j=i$}  \\
	1+ \epsilon_3, & \text{if $j=2 i$} \\
        1 + \epsilon_4, & \text{if $j \geq 3 n+1$} \\
        0, & \text{otherwise}
        \end{cases} 
\]
Moreover, for any set $Z\subseteq  M\setminus \{g_{i-1}\}$ or $Z\subseteq M\setminus \{g_{2  i}\}$, we have that $f_i(Z)=\sum_{j \in Z}f_i(j)$. Otherwise, i.e., when $\{g_{i-1}, g_{2  i}\} \subseteq Z$, we have $f_i(Z)=\sum_{j \in Z} f_i(j)- \epsilon_3$. Functions $f_i(\cdot)$ are now well-defined. 

We continue by proving that for every $i \neq 1$, the objective function $f_i(\cdot)$ is submodular. For this, consider arbitrary sets $S \subseteq T \subseteq M$, and distinguish the following cases:
\begin{enumerate}[leftmargin=23pt]
    \item $\{g_{i-1}, g_{2  i}\} \cap S=\emptyset$ and $|\{g_{i-1}, g_{2  i}\} \cap T|=1$: If $g_{i-1}\in T$ (resp.~$g_{2i}\in T$), then $f_i(g_{2i}\,|\,S)> f_i(g_{2i}\,|\,T)$ (resp.~$f_i(g_{i-1}\,|\,S)> f_i(g_{i-1}\,|\,T)$), and $f_i(x\,|\,S)= f_i(x\,|\,T)$, for any \emph{other} $x \notin T$.
    \item Any other configuration of $S$, $T$, $g_{i-1}$, and $g_{2  i}$: it is not hard to see that $f_i(x\,|\,S)= f_i(x\,|\,T)$, for any  $x \notin T$.
\end{enumerate}

Now our focus will be on agent $1$. Consider a run of the Round-Robin protocol where every agent plays greedily, and let every agent break ties lexicographically.\footnote{\ This assumption is without loss of generality. The given instance can be easily modified (at the expense of simplicity) to one with no ties, and thus no specific tie breaking rule is needed.} It is easy to verify that, in this case, agent 1 will get only items $g_1$, and $g_{n+1}$ from $\{g_1,\ldots, g_{2  n}\}$, something that implies that their total attained value in the end will be $f_1(S_1) = 2$.

On the other hand, a seemingly naive upper bound on the value that agent $1$ could secure through any policy is $n+1$ (and, clearly, $\opt_1=2n$). 
To see this, simply notice that agent $1$ can get $n+1$ items and the most optimistic scenario is that all of them have a marginal value of $1$.
Actually, and somewhat surprisingly, agent $1$ can attain this value, given that the other agents choose their items greedily, by implementing the following policy: In the first round agent $1$ picks item $g_n$ instead of $g_1$, thus leading every agent $i \neq 1$ to get as their first choice item $g_{i-1}$ instead of $g_i$. By the way function $f_i(\cdot)$ is defined, 
for $S \supseteq \{g_{i-1}\}$, if $x \in \{g_{2  n},\ldots, g_{3   n-1}\}$ we have that $f_i(x\,|\,S)$ is either $1$ or $0$, while if $x \in \{g_{3  n},\ldots,g_m\}\setminus S$ we have that $f_i(x\,|\,S)=1+\epsilon_4$. The latter implies that no agent $i\neq 1$ will ever choose something from $\{g_{2  n},\ldots, g_{3   n-1}\}$, leaving all these items available to agent $1$.
This allows agent $1$ to obtain a solution $S$ of maximum possible value, namely $f_1(S) = n+1$. \qed
\end{example}

\section{Dealing With Non-Monotone Objectives}
\label{sec:non-monotone}

When trying to maximize a non-monotone submodular function---with or without constraints---a naively greedy solution may be arbitrarily far from a good solution even in the standard setting with a single agent  (for such an example, see, e.g., \citet{AFLLR22}). 
Several approaches have been developed to deal with non-monotonicity, but 
the recently introduced \emph{simultaneous greedy} approach \citep{AmanatidisKS22,FeldmanHK23} 
seems like a natural choice here. 
The high-level idea is to bypass the complications of non-monotonicity by simultaneously constructing multiple greedy solutions. 
Here we consider building only \emph{two} greedy solutions, not just for the sake of simplicity but also because the technical analysis for multiple solutions does not translate well into our setting.
Although we suspect that a more elaborate approach with more solutions could shave off a constant factor asymptotically, we see the results of this section as a proof of concept that a simple greedy policy can work well in the non-monotone version of our problem as well. 

Coming back to building two greedy solutions, an agent $i$
maintains two sets $S_{i1}, S_{i2}$ and every time that it is their turn to pick an item, they pick a single item that maximizes the marginal value with respect to either $S_{i1}$ or $S_{i2}$ among the items that are still available and for which adding them to the respective solution is feasible. 
Formally, saying that agent $i$ chooses greedily, now means that they choose according to the following policy $\mathcal{G}^+_i$:

\begin{algorithm}[ht]
		\caption{Simultaneous Greedy policy $\mathcal{G}^+_i(S_{i1}, S_{i2}\,; Q)$ of agent $i$. \\ {\small {($S_{i1}, S_{i2}$\,: current solutions of $i$ (initially $S_{i1} = S_{i2} =\emptyset$)\,; $Q$\,: the current set of available items)}}}
		\begin{algorithmic}[1]
            \vspace{2pt}\State $A=\{(x,y)\in Q\times \{1,2\}\,:\, S_{iy} \cup \{x\} \in \mathcal{I}_i\}$
			\If{$A\neq \emptyset$}
            \State Let $(j, \ell)\in \argmax_{(z,w)\in A} f(z\,|\,S_{iw})$
            \State $S_{i\ell} = S_{i\ell} \cup \{j\}$ 
            \State \Return{$j$}
			\Else
			\State \Return{ a dummy item} {\small\hfill (i.e., return nothing)}
			\EndIf
		\end{algorithmic}
		\label{alg:simultaneous_greedy_policy}
\end{algorithm}

\begin{theorem}\label{thm:greedy_non-monotone_general}
Any agent $i$ with a non-monotone objective and a $p_i$-system constraint, who chooses greedily in
the Round-Robin protocol, builds two solutions $S_{i1}, S_{i2}$ so that 
$\max_{t\in\{1, 2\}} f_i(S_{it}) \ge {\optm_i}/{(4n + 4p_i +2)}$.
\end{theorem}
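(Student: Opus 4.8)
The plan is to reduce the statement to a counting argument in the spirit of Lemma~\ref{lem:mapping-monotone-general}, after first paying for non-monotonicity by exploiting that the policy builds \emph{two} disjoint solutions. Write $O \vcentcolon= O_i^-$ for an optimal solution still available to $i$ right before their first pick, so that $f_i(O) = \optm_i$, and abbreviate $S_1 = S_{i1}$, $S_2 = S_{i2}$. Since an item leaves $Q$ once picked, the two solutions are disjoint, hence $(O\cup S_1)\cap(O\cup S_2) = O$, and submodularity together with non-negativity of $f_i$ gives
\[f_i(O) \le f_i(O\cup S_1) + f_i(O\cup S_2) - f_i(O\cup S_1\cup S_2) \le f_i(O\cup S_1) + f_i(O\cup S_2)\,.\]
This is the only place the non-monotone machinery is needed, and it is exactly what the second greedy solution buys. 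For each $t\in\{1,2\}$, subadditivity of submodular marginals (which requires no monotonicity) yields $f_i(O\cup S_t) \le f_i(S_t) + \sum_{x\in O\setminus S_t} f_i(x\,|\,S_t)$, so it remains to bound the two marginal sums by a constant multiple of $f_i(S_1)+f_i(S_2)$.

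To do this I would mirror the proof of Theorem~\ref{thm:greedy_monotone_general}. Index agent $i$'s picks $1,\dots,s$ across both solutions, letting $g_r$ be the $r$-th picked item, $\ell_r\in\{1,2\}$ the solution it joins, $S_t^{(r)}$ the state of $S_t$ just before turn $r$, and $M_r \vcentcolon= f_i(g_r\,|\,S_{\ell_r}^{(r)})$. Two facts drive everything. First, the chosen marginals are non-increasing, $M_1\ge\dots\ge M_s$: any pair $(x,w)$ feasible and available at turn $r+1$ was already feasible and available at turn $r$ (availability only shrinks, feasibility is downward closed, and $S_w$ only grows), so $f_i(x\,|\,S_w^{(r+1)})\le f_i(x\,|\,S_w^{(r)})\le M_r$ by submodularity. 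Second, telescoping the picks of each solution separately gives $\sum_{r=1}^s M_r = f_i(S_1)+f_i(S_2)$.

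The hard part is the charging. For each $t$ and each $x\in O\setminus S_t$ I would locate a \emph{critical turn} $r_t(x)$ at which $x$ was available and feasible for $S_t$, giving $f_i(x\,|\,S_t)\le f_i(x\,|\,S_t^{(r_t(x))})\le M_{r_t(x)}$; such a turn exists because $x$ eventually leaves, relative to solution $t$, for one of three reasons: (i) a pick into $S_t$ renders it infeasible; (ii) another agent takes it while it is still feasible for $S_t$; or (iii) agent $i$ itself takes it into the \emph{other} solution. As in Lemma~\ref{lem:mapping-monotone-general}, the $p_i$-system (with $O\in\mathcal{I}_i$) guarantees via the upper/lower rank comparison that the type-(i) items made infeasible for $S_t$ after its first $c$ picks number at most $p_i c$; between consecutive turns of $i$ at most $n-1$ items are removed by others, bounding type (ii) per solution; and type (iii) contributes at most one item per turn. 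Summing these cumulative bounds over both solutions shows that the number of pairs $(x,t)$ with critical turn at most $r$ is at most $(2n+2p_i)r$. Using the non-increasing property to reassign each item to a turn no later than its critical turn—exactly the redistribution at the end of Lemma~\ref{lem:mapping-monotone-general}—one may distribute them so that each turn $r$ is charged at most $2n+2p_i$ items, each of marginal at most $M_r$, whence $\sum_t\sum_{x\in O\setminus S_t} f_i(x\,|\,S_t) \le (2n+2p_i)\sum_r M_r = (2n+2p_i)\big(f_i(S_1)+f_i(S_2)\big)$.

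Combining with the first paragraph, $f_i(O) \le (2n+2p_i+1)\big(f_i(S_1)+f_i(S_2)\big) \le (4n+4p_i+2)\,\max_{t}f_i(S_{it})$, which is the claim. The main obstacle is precisely this counting step: unlike the monotone case, greedy now compares marginals across both solutions at once, so the classification of lost optimal items and the per-turn charge must be carried out jointly for $S_1$ and $S_2$. One must also handle the final turn with care, since the ``at most $n-1$ items per gap'' bound for type (ii) needs the same edge-case treatment as in the proofs of Theorems~\ref{thm:greedy_monotone_general} and~\ref{thm:greedy_monotone_cardinality}; everything else is routine once the two structural inequalities of the first paragraph are in place.
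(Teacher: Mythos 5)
Your proposal is correct and follows essentially the same route as the paper: the same two structural inequalities (paying for non-monotonicity via the two disjoint solutions, then bounding the marginal sums by a per-turn charging argument driven by the $p_i$-system rank bound, the $n-1$ competitors per gap, and the cross-solution picks), with the same final accounting $(2n+2p_i+1)(f_i(S_{i1})+f_i(S_{i2}))\le(4n+4p_i+2)\max_t f_i(S_{it})$. The only substantive substitution is that where the paper invokes the sampling lemma of Buchbinder et al.\ (with $p=1/2$ over the two solutions) to get $f_i(O)\le f_i(O\cup S_{i1})+f_i(O\cup S_{i2})$, you derive the same inequality directly from $f(A)+f(B)\ge f(A\cup B)+f(A\cap B)$ together with disjointness of $S_{i1},S_{i2}$ and non-negativity; for two sets these are the same fact, so nothing is gained or lost. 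You also omit the paper's set $Z$ of optimal items with negative marginals with respect to both final solutions; this is harmless, since your per-pair charges $f_i(x\,|\,S_t)\le M_{r}$ remain valid for negative marginals, and your proof shares with the paper's the same implicit assumptions (non-negativity of the picked marginals $M_r$ in the step that multiplies the per-turn capacity by $\sum_r M_r$, and the final-turn counting, which you correctly flag).
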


\begin{proof}
The initial setup of the proof is similar to the proof of Theorem \ref{thm:greedy_monotone_general}, but the mapping $\delta$ and the analysis are more involved.
So, let $O_i^-$ be an optimal solution for agent $i$ on the set $M_i$ of items still available after $i-1$ steps; clearly $\optm_i = f_i(O_i^-)$. We rename the items of $M_i$ as $x_1^i, x_1^{i+1}, \ldots, x_1^{i-1}, x_2^i, \ldots$, so that item $x_j^{\ell}$ is the $j$-th item that agent $\ell$ chooses from the moment when agent $i$ is about to start choosing; any items not picked by anyone (due to feasibility constraints) are arbitrarily added to the end of the list. 
Also, let $S_{it}^{(r)}, t\in\{1,2\}$, denote the solutions of agent $i$ right before item $x_r^i$ is added to one of them. 
In particular, $x^i_r$ is added to $S_{i,\tau(r)}^{(r)}$, i.e.,  $\tau(r)$ denotes the index of the solution augmented with $x^i_r$ for any $r\le s$, where $s \vcentcolon= |S_{i1}| +|S_{i2}|$.
Finally, let $Z = \{x\in O_i^- :\, f_i(x\,|\, S_{it}) < 0, \text{\ for\ } t \in \{1, 2\}\}$ contain all the items of $O_i^-$ that have negative marginal values with respect to both final solutions of $i$.

Like in Theorems \ref{thm:greedy_monotone_general} and \ref{thm:greedy_monotone_cardinality}, we need a mapping $\delta$ with similar properties. This is given in the following lemma, which is proved right after the current proof.

\begin{lemma}\label{lem:mapping-non-monotone-general}
    Let $Q_{i1}, \ldots, Q_{ia} \in \mathcal{I}_i$ be disjoint feasible sets for agent $i$, and $P = \{x\in \bigcup_{j\in[a]} Q_{ij} :\, \text{\ there is\ } t \in \{1, 2\} \text{\ such that\ } f_i(x\,|\, S_{it}) \ge 0\}$ be the set of items in $\bigcup_{j\in[a]} Q_{ij}$ that do not have negative marginal values with respect to both final solutions of $i$. There is a mapping $\delta:  P \to S_{i1} \cup S_{i2}$ with the following two properties,  for all $x_r^i \in S_{i1} \cup S_{i2}$ and $x\in P$:
\begin{enumerate}[leftmargin=23pt,itemsep=-1pt]
    \item if $\delta(x) = x_r^i$, then 
    $f_i(x \,|\, S_{it}^{(r)}) \le f_i\big(x_r^i \,|\, S_{i,\tau(r)}^{(r)}\big)$, 
    i.e., $x$ is not as attractive as $x_r^i$ when the latter is chosen;\vspace{5pt}
    \item $|\delta^{-1}(x_r^i)| \le n + a p_i$, i.e., at most $n + a p_i$ items of $P$ are mapped to each item of $S_{i1} \cup S_{i2}$.
\end{enumerate}
\end{lemma}

For now, we need Lemma \ref{lem:mapping-non-monotone-general} for $a = 1$, $Q_{i1} = O_i^-$, and $P = O_i^- \setminus Z$; the general form of the lemma is needed for the proof of Theorem \ref{thm:greedy_non-monotone_general_robust}.
Let $\tau \in \argmax_{t\in[2]} f_i(S_{it})$, $S_i\vcentcolon= S_{i\tau}$, and apply the first part of Theorem \ref{thm:SM-general} for $O_i^-\cup S_{it}$ and $S_{it}$, $t\in\{1,2\}$:
\begin{align*}\label{eq:S_i1}
f_i(O_i^-\cup S_{it})   \le f_i(S_{it}) + \sum_{x \in O_i^- \setminus S_{it}} f_i(x\,|\,S_{it})
                         \le f_i(S_{i}) + \sum_{x \in O_i^-} f_i(x\,|\,S_{it}) \,,
\end{align*}
where we padded the sum with extra terms of zero value. 
We combine those two inequalities to get
\begin{align*}
\sum_{t\in[2]} f_i(O_i^-\cup S_{it}) &\le 2f_i(S_{i}) + \sum_{x \in O_i^-} \sum_{t\in[2]} f_i(x\,|\,S_{it}) 
            \le 2f_i(S_{i}) + \sum_{x \in O_i^-\setminus Z} \sum_{t\in[2]} f_i(x\,|\,S_{it}) \\
            &\le 2f_i(S_{i}) + \sum_{r=1}^{s}\sum_{x \in \delta^{-1}(x_r^i)} \sum_{t\in[2]} f_i(x\,|\,S_{it}) 
            \le 2f_i(S_{i}) + \sum_{r=1}^{s}\sum_{x \in \delta^{-1}(x_r^i)} \sum_{t\in[2]} f_i(x\,|\,S^{(r)}_{it}) \\
            &\le 2f_i(S_{i}) + \sum_{r=1}^{s} 2 (n + p_i ) f_i\big(x_r^i \,|\, S_{i,\tau(r)}^{(r)}\big)  
            = 2f_i(S_{i}) + 2 (n + p_i ) (f_i(S_{i1}) +f_i(S_{i2}))  \\[5pt]
            &\le 2f_i(S_i) + 4(n + p_i) f_i(S_i) 
            = (4n + 4p_i + 2)  f_i(S_i) \,.
\end{align*}
where the second inequality follows by the definition of $Z$, the third inequality follows from observing that $\bigcup_{r=1}^{s} \delta^{-1}(x_r^i) = O_i^-\setminus Z$, the fourth follows from submodularity, and the fifth from applying both parts of Lemma \ref{lem:mapping-non-monotone-general}.
We conclude that $f_i(S_i) \ge \sum_{t\in[2]} f_i(O_i^-\cup S_{it}) / (4n + 4p_i + 2)$.

We still need to relate $\sum_{t\in[2]} f_i(O_i^-\cup S_{it})$ to $f_i(O_i^-)$.
For this, we are going to need the following classic lemma that connects sampling and 
submodular maximization \citep{FeigeMV11,BuchbinderFNS14}. 
\begin{lemma}[Lemma 2.2 of \citet{BuchbinderFNS14}]
	\label{lem:sampling}
	Let $g: 2^M \to \mathbb{R}$ be a (possibly not normalized) submodular set function, let $ X \subseteq 
	M$ and let $X(p)$ be a sampled subset, where each element of $X$ appears 
	with probability at most $p$ (not necessarily independent). Then
	$\mathbb{E} \left[ g(X(p)) \right] \geq (1-p)g(\emptyset)$.
\end{lemma}

We are going to follow the ``derandomization'' of the lemma as done by \citet{FeldmanHK23}. 
We begin by setting $g(S) = f_i(S\cup O_i^-)$, whereas our random set $X(p)$ is going to be one of $S_{i1}$ and $S_{i2}$, uniformly at random. Clearly, $p=0.5$ here and, therefore, applying Lemma \ref{lem:sampling} gives us the desired approximation
\begin{align*}
\optm_i &\le  f_i(O_i^-) =  g(\emptyset) 
            \le 2\, \mathbb{E} \left[ g(X(p)) \right] = \sum_{t\in[2]} f_i(O_i^-\cup S_{it})
            \le  (4n + 4p_i + 2) f_i(S_i)\,.\qedhere
\end{align*}
\end{proof}


\begin{proof}[\textbf{Proof of Lemma \ref{lem:mapping-non-monotone-general}}]
Similar to what we did in the proof of Theorem \ref{thm:greedy_monotone_general}, we are going to define a partition of $P$. First, let 
$M_i^{(r)} = \{x_r^i, x_r^{i+1}, \ldots, x_r^{i-1}, x_{r+1}^i, \ldots\}$ be the set of available items right before $x^i_r$ is added to $S_{i,\tau(r)}^{(r)}$, where $\tau(r)\in \{1,2\}$ denotes the index of the solution which was augmented with $x^i_r$ for any $r\le s$ (recall that $s= |S_{i1}| +|S_{i2}|$). Now, for $r \in [s]$, let
\begin{align*}
X^{t}_{(r)} =  \big\{x\in P\cap M_i^{(r)} :\, S_{it}^{(r)}\cup\{x\} \in \mathcal{I}_i 
          \text{\ \ and\ \ } S_{it}^{(r)}\cup\{x^i_r, x\} \notin \mathcal{I}_i \big\}\,,
\end{align*}
if $t = \tau(r)$, and $X^{t}_{(r)} = \emptyset$, if $t\in \{1,2\}\setminus\{\tau(r)\}$.
That is, $X^{\tau(r)}_{(r)}$ contains all items of $P$ that were available and could be added to $S_{i,\tau(r)}^{(r)}$ right before $x^i_r$ was added, but become infeasible right after. Also, for $r \in [s]$ (using the convention that $M_i^{(|S_i|+1)} = \emptyset$), let 
\begin{align*}
Y_{(r)} =  \big\{x\in \big(P\cap M_i^{(r)}\big) \setminus \big(X_{(r)} \cup M_i^{(r+1)} \big):\, 
          S_{it}^{(r)}\cup\{x\} \in \mathcal{I}_i \text{\ \ for both\ \ } t\in\{1, 2\} \big\}\,, 
\end{align*}
i.e., $Y_{(r)}$ contains all the items of $P \setminus X_{(r)}$ that were available and could be added to \emph{both} solution sets right \emph{before} $x^i_r$ was added but were picked before agent $i$ got to choose for the $(r+1)$-th time. Note that $Y_{(r)}$ could contain $x^i_r$ itself.
It is easy to see that 
\[P = \textstyle \bigcup\limits_{r=1}^{s}(X^1_{(r)} \cup X^2_{(r)} \cup Y_{(r)}) \,.\]
Similarly to the proof of Lemma \ref{lem:mapping-monotone-general}, it is clear that $|Y_{(r)}|\le n$ for all  $r< s$ (since now $x_r^i$ may also be included in $Y_{(r)}$). 
For $r = s$ note that
$|Y_{(r)}|\ge n+1$ would mean that agent $i$ still has items that they could add to at least one of $S_{i1}$ or $S_{i2}$ in the next round of the protocol, contradicting the fact that $s$ is the sum of the cardinalities of $i$'s solutions.

We further claim that $|\bigcup_{j=1}^{r} X^t_{(j)}| \le a p_i|S_{it}^{(r+1)}| $ for all $r \in [s]$ and $t\in\{1,2\}$. Suppose not, and let $k$ be the smallest index for which $|\bigcup_{j=1}^{k} X^t_{(j)}| > a p_i|S_{it}^{(k+1)}|$ for some $t\in\{1,2\}$. 
By the the pigeonhole principle, there exists some $\ell\in[a]$, such that $|\bigcup_{j=1}^{k} (X^t_{(j)}\cap Q_{i\ell})| > p_i|S_{it}^{(k+1)}|$.
Also, because of $k$'s minimality, it must be the case that $S_{it}^{(k+1)} = S_{it}^{(k)}\cup x^i_k$ and, of course, $S_{it}^{(k+1)}\in \mathcal{I}_i$.
Moreover, by the definition of the $X^t_{(r)}$'s, $S_{it}^{(k+1)}$ must be maximally independent in the set $S_{it}^{(k+1)} \cup \bigcup_{j=1}^{k} (X^t_{(j)} \cap Q_{i\ell})$. 
Indeed, each $x \in \bigcup_{j=1}^{k} (X^t_{(j)} \cap Q_{i\ell})$ belongs to some $X^t_{(j)}$ for $j\in[k]$ and, thus, 
$S_{it}^{(j)}\cup\{x^i_j, x\} \notin \mathcal{I}_i$. Since $S_{it}^{(j)}\cup\{x^i_j\} \subseteq S_{it}^{(k+1)}$, we also have that $S_{it}^{(k+1)}\cup\{x\} \notin \mathcal{I}_i$. This implies that 
\[\textrm{lr}\big(S_{it}^{(k+1)} \cup \textstyle \bigcup\limits_{j=1}^{k} (X^t_{(j)} \cap Q_{i\ell})\big) \le |S_{it}^{(k+1)}|\,.\] 
On the other hand, $\bigcup_{j=1}^{k} (X^t_{(j)} \cap Q_{i\ell}) \in \mathcal{I}_i$ because $\bigcup_{j=1}^{k} (X^t_{(j)} \cap Q_{i\ell}) \subseteq Q_{i\ell} \in \mathcal{I}_i$. As a result 
\[\textrm{ur}\big(S_{it}^{(k+1)} \cup \textstyle \bigcup\limits_{j=1}^{k} (X^t_{(j)} \cap Q_{i\ell})\big) \ge \Big|\textstyle \bigcup\limits_{j=1}^{k} (X^t_{(j)} \cap Q_{i\ell})\Big|> p_i |S_{it}^{(k+1)}| \,,\]
contradicting the fact that $\mathcal{I}_i$ is a $p_i$-system for $M$. We conclude that $|\bigcup_{j=1}^{r} X^t_{(j)}| \le a p_i|S_{it}^{(r+1)}| $ for all $r \in [s]$ and $t\in\{1,2\}$.

Now, we can define $\delta$. Recall that $\tau(r), r\in[s]$, denotes the index of the solution augmented with $x^i_r$. For $t\in\{1,2\}$, we rename the elements of $\bigcup_{j=1}^{|S_i|} X^t_{(j)}$ so that all items in $X^t_{(j)}$ are lexicographically before any item in $X^t_{(j')}$ for $j<j'$. 
We map to $x_1^i$ the first $a p_i$ items of $\bigcup_{j=1}^{|S_i|} X^{(\tau(1))}_{(j)}$ (with respect to the aforementioned ordering) as well as all the items in $Y_{(1)}$, we then map to $x_2^i$ the first $a p_i$ \emph{unmapped} items of $\bigcup_{j=1}^{|S_i|} X^{(\tau(2))}_{(j)}$ as well as all the items in $Y_{(2)}$, and so forth, until the items of $P$ are exhausted. Since at most $p_i+n$ items are mapped to each element of $S_{i1} \cup S_{i2}$, the second claimed property of $\delta$ directly holds.

For the first property, we start with any $x\in Y_{(r)}$ for $r \in [s]$. Because $x$ was available and feasible to both $S_{i1}^{(r)}, S_{i2}^{(r)}$ when $x_r^i$ was chosen, we have $f_i(x \,|\, S_{it}^{(r)}) \le f_i(x_r^i \,|\, S_{i,\tau(r)}^{(r)})$ as desired, given that $\delta(x) = x_r^i$ in this case.
Next, consider any $x\in P\setminus \bigcup_{r=1}^{s}Y_{(r)}$ and notice that it either belongs to a single set $X^t_{(r)}$, $t\in\{1,2\}$, or to two sets $X^1_{(j)}$ and $X^2_{(j')}$; in the latter case, let $r = \min\{j, j'\}$. The way we define $\delta$, and because of the upper bound we derived on $|\bigcup_{j=1}^{r} X^t_{(j)}|$, it is guaranteed that $x$ is mapped to some $x_{\kappa}^i$ with $\kappa\le r$. This means that $x$ was available and feasible to both $S_{i1}^{(\kappa)}, S_{i2}^{(\kappa)}$ when $x_{\kappa}^i$ was chosen, thus $f_i(x \,|\, S_{it}^{(\kappa)}) \le f_i(x_r^i \,|\, S_{i,\tau(\kappa)}^{(\kappa)})$ as desired.
\end{proof}

Specifically for cardinality constraints, we can get a somewhat stronger version of Theorem \ref{thm:greedy_non-monotone_general}, like we did in the monotone case.  

\begin{theorem}\label{thm:greedy_non-monotone_cardinality}
Any agent $i$ with a non-monotone objective and a cardinality constraint, 
who chooses greedily in 
the Round-Robin protocol, builds two solutions $S_{i1}, S_{i2}$ such that $\max_{t\in\{1, 2\}} f_i(S_{it}) \ge {\optm_i}/{(4n +2)}$. 
\end{theorem}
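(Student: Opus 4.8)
The plan is to follow the proof of Theorem \ref{thm:greedy_non-monotone_general} essentially verbatim, changing only the cardinality bound in the mapping from $n+p_i$ to $n$ — exactly as Theorem \ref{thm:greedy_monotone_cardinality} sharpened Theorem \ref{thm:greedy_monotone_general} in the monotone setting. I would keep the entire setup: the renaming of the items of $M_i$, the set $Z$ of items with negative marginals to both final solutions, the indices $\tau(r)$ and the sets $S_{it}^{(r)}$, and the two-set partition $O_i^-\setminus Z = \bigcup_{r=1}^s (X^1_{(r)}\cup X^2_{(r)}\cup Y_{(r)})$ from Lemma \ref{lem:mapping-non-monotone-general}. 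I claim it suffices to produce a mapping $\delta:O_i^-\setminus Z \to S_{i1}\cup S_{i2}$ satisfying the first (attractiveness) property of Lemma \ref{lem:mapping-non-monotone-general} together with the strengthened bound $|\delta^{-1}(x_r^i)|\le n$. Indeed, substituting $|\delta^{-1}(x_r^i)|\le n$ into the displayed chain of inequalities in the proof of Theorem \ref{thm:greedy_non-monotone_general} replaces the factor $2(n+p_i)$ by $2n$, so that $\sum_{t\in[2]} f_i(O_i^-\cup S_{it}) \le 2f_i(S_i) + 2n\big(f_i(S_{i1})+f_i(S_{i2})\big) \le (4n+2)f_i(S_i)$, and the sampling argument (Lemma \ref{lem:sampling}) then yields $f_i(S_i)\ge \optm_i/(4n+2)$ with no further change.

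The improvement in the mapping comes from the special structure of a cardinality constraint, mirroring the monotone case. If $k$ is agent $i$'s cardinality bound, then the condition $S_{it}^{(r)}\cup\{x_r^i,x\}\notin\mathcal{I}_i$ in the definition of $X^t_{(r)}$ forces $|S_{it}^{(r)}|=k-1$; hence $X^t_{(r)}\neq\emptyset$ only at the unique round that fills solution $S_{it}$ to capacity, and only if that solution reaches capacity at all. Thus there are at most two nonempty $X$-sets, one per solution, rather than the $p_i$ infeasible items potentially lost at \emph{every} round in the general analysis. As in the proof of Theorem \ref{thm:greedy_monotone_cardinality}, I would map each $Y_{(r)}$ to $x_r^i$ (which already respects attractiveness and loads a target with at most $|Y_{(r)}|\le n$ items) and scatter the few items of $X^1$ and $X^2$ onto rounds whose $Y$-set is empty, so that no target receives more than $n$ items. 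The supporting count is the global estimate $\sum_r |Y_{(r)}| = |O_i^-\setminus Z| - |X^1| - |X^2| \le k - |X^1| - |X^2|$ combined with $s=|S_{i1}|+|S_{i2}|\ge k$ (which holds whenever some solution fills, i.e. whenever there is any $X$-item to place), giving at least $|X^1|+|X^2|$ rounds with empty $Y$-set.

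The main obstacle, and where the two-solution structure genuinely complicates the monotone argument, is verifying attractiveness for the relocated $X$-items. When an item $x\in X^t_{(r_t)}$, with $r_t$ the fill round of solution $t$, is reassigned to an earlier target $x_\kappa^i$, the final computation needs $f_i(x\,|\,S_{iw}^{(\kappa)})\le f_i\big(x_\kappa^i\,|\,S_{i,\tau(\kappa)}^{(\kappa)}\big)$ for \emph{both} $w\in\{1,2\}$, which requires $x$ to have been available and feasible to \emph{both} solutions when $x_\kappa^i$ was chosen. Availability and feasibility to solution $t$ hold for any $\kappa\le r_t$, but feasibility to the other solution fails once that solution is itself full; hence the usable empty-$Y$ rounds are precisely those preceding both fill events, i.e. $\kappa\le\min\{r_1,r_2\}$. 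The delicate part is therefore to refine the global count above to the prefix of rounds up to $\min\{r_1,r_2\}$ and confirm that enough empty-$Y$ rounds survive there. I would handle separately the easy case where only one solution reaches capacity: then the other solution is never full, the cross-solution feasibility issue disappears, and the single-set redistribution and counting of Theorem \ref{thm:greedy_monotone_cardinality} apply directly.
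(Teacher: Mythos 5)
Your proposal is correct and follows essentially the same route as the paper: keep the setup and partition of Lemma \ref{lem:mapping-non-monotone-general}, observe that for a cardinality constraint the $X$-sets can be nonempty only at the (at most two) fill rounds, map each $Y_{(r)}$ to $x_r^i$, and relocate the few $X$-items to rounds with empty $Y$-sets using the count $\sum_r |Y_{(r)}| \le k - |X^1| - |X^2|$. The ``delicate part'' you flag is resolved exactly as in the paper's proof by placing the relocated items at empty-$Y$ rounds within $[k]$ (of which there are at least $|X^1|+|X^2|$, since the global bound on $\sum_r|Y_{(r)}|$ applies a fortiori to this prefix): during the first $k$ rounds both solutions have size strictly below $k$, so every available item is feasible for both, which gives the attractiveness property for both $w\in\{1,2\}$.
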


\begin{proof}
The proof closely follows that of Theorem \ref{thm:greedy_non-monotone_general}, so we are just highlighting the differences here.
Consider an agent $i$ with a cardinality constraint of $k$. The second property of the mapping $\delta$ now is slightly stronger. In particular, $|\delta^{-1}(x_r^i)| \le n $ (rather than having $n+1$ on the right-hand side as expected from the proof of Theorem \ref{thm:greedy_non-monotone_general}). This means that after we apply Theorem \ref{thm:SM-general} for $O_i^-\cup S_{it}$ and $S_{it}$, and we sum up for $t\in\{1, 2\}$, we have $\sum_{t\in[2]} f_i(O_i^- \cup S_{it}) \le (4n + 2)  f_i(S_i) $ and, through Lemma \ref{lem:sampling}, we finally get $\optm_i \le (4n + 2) f_i(S_i)$ the same way that we got the approximation guarantee in the proof of Theorem \ref{thm:greedy_non-monotone_general}. The only thing remaining is the existence of $\delta$. The main idea is what we did for the mapping in Lemma \ref{lem:mapping-non-monotone-general}, 
but, like in Theorem \ref{thm:greedy_monotone_cardinality}, we only do it for the case of a single set. The reason is that this mapping is only applied here for $a = 1$, $Q_{i1} = O_i^-$ and $P = O_i^-\setminus Z$, and we do not need it for general $a$; i.e., $\delta: O_i^-\setminus Z \to S_{i1} \cup S_{i2}$.  Nevertheless, such a $\delta$ exists for any $a\le n$.

Let $s \vcentcolon= |S_{i1}| +|S_{i2}|$.
For $r \in [s]$, we define $X^1_{(r)}$, $X^2_{(r)}$, and $Y_{(r)}$ like before, and notice that if $\max\{|S^{(r)}_{i1}|, |S^{(r)}_{i2}|\} < k$, then $X^1_{(r)} =X^2_{(r)} = \emptyset$. 

If $\bigcup_{r=1}^{s}Y_{(r)} = O_i^-\setminus Z$, then $\delta$ just maps all items of $Y_{(r)}$ to $x_r^i$ for all $r \in [s]$.
Next, suppose $|(O_i^-\setminus Z)\setminus \bigcup_{r=1}^{s}Y_{(r)}|= k' \in [k]$. Clearly, this means that $\bigcup_{r=1}^{s} (X^1_{(r)} \cup X^2_{(r)}) \neq \emptyset$ and thus, $s \ge \max\{|S_{i1}|, |S_{i2}|\} \ge k$. 
Just by counting, we have that there are at least $k'$ values of $r$ in $[k]$, such that $Y_{(r)} = \emptyset$; call $r_1, r_2, \ldots, r_{k'}$ the smallest $k'$ such values, and let $z_1, z_2, \ldots, z_{k'}$ be the items in $(O_i^-\setminus Z)\setminus \bigcup_{r=1}^{s}Y_{(r)}$. Now, set $\delta(z_j) = x_{r_j}^i$, for $j\in[k]$, whereas if $x\in Y_{(r)}$ then $\delta(x) = x_r^i$  for all $r\in[k]$. Clearly, in both cases, we never map more than $n$ items to a single element of $S_{i1} \cup S_{i2}$, which proves the second desired property of $\delta$.

The first property, both in the case of some $x\in Y_{(r)}$, $r \in [s]$, and in the case of an $x\in (O_i^-\setminus Z)\setminus \bigcup_{r=1}^{s}Y_{(r)}$  follows exactly like in the proof of Lemma \ref{lem:mapping-non-monotone-general}.
\end{proof}

\subsection{Implications to Constrained Fair Division}
\label{sec:fairness-non-monotone}
Surprisingly---given the scarcity of similar results for non-monotone objectives---we can get the analogs of Theorems \ref{thm:greedy_monotone_ef1} and \ref{thm:greedy_monotone_cardinality_ef1}. Note, however, that here the set $R_j$ in the statements of Theorems \ref{thm:greedy_nonmonotone_ef1} and \ref{thm:greedy_nonmonotone_cardinality_ef1} below is the set of \textit{all} items that agent $j$ received and not necessarily a feasible set for $j$.  For instance, if $j$ chooses greedily, then $R_j = S_{j1} \cup S_{j2}$.
This results in weaker results from a fair division point of view; see Corollaries \ref{cor:fairness_p-system_nm} and \ref{cor:fairness_cardinality_nm} below.

\begin{theorem}\label{thm:greedy_nonmonotone_ef1}
Let $i$ be an agent with a non-monotone objective and a $p_i$-system constraint $\mathcal{I}_i$ who chooses greedily in the Round-Robin protocol. Also, let $j$ be any agent, $R_j$ be the set of all items $j$ got, and $g$ be the first item added to $R_j$. Then, 
$\max_{t\in\{1, 2\}} f_i(S_{it})
\ge \max_{S\in \mathcal{I}_i|R'_j}f_i(S) / (4p_i + 10)$, where $R'_j = R_j$ if $i<j$, and $R'_j = R_j\setminus \{g\}$ otherwise.
Moreover, if $M' = M\setminus \cup_{j\in [n]} R_j$, then $\max_{t\in\{1, 2\}} f_i(S_{it}) \ge \max_{S\in \mathcal{I}_i|M'}f_i(S) / (4p_i + 6)$.
\end{theorem}

\begin{proof}
The proof is similar to the proof of Theorem \ref{thm:greedy_monotone_ef1}. Now both parts of the statement are corollaries of Theorem \ref{thm:greedy_non-monotone_general}. 

Let $j$ be any agent other than $i$ and $R'_j$ be as given in the statement. We can view $R_i = S_{i1} \cup S_{i2}$ and $R_j$ as the result 
of running the Round-Robin protocol on $R_i \cup R_j$ with just $i$ and $j$ following the same order and the same choices as before. 
From the point of view of $i$, all choices are consistent with the greedy policy $\mathcal{G}^+_i$. Thus, by 
Theorem \ref{thm:greedy_non-monotone_general}, 
\[\max_{t\in\{1, 2\}} f_i(S_{it})  \ge \max_{S\in \mathcal{I}_i|R_i \cup R_j}f_i(S) / (4\cdot 2+ 4p_i + 2) \ge \max_{S\in \mathcal{I}_i|R'_j}f_i(S) / (4p_i +10) \,,\] 
where the last inequality follows from the fact that $R'_j \subseteq R_i \cup R_j$.

Next, we assume the existence of an auxiliary dummy agent $i'$,
with the  trivial independence system constraint $\mathcal{I}_{i'} = \{\emptyset\}$. Then, we view $R_i$ and $M'$ as the result 
of running the Round-Robin protocol on $R_i \cup M'$ with just  $i$ and $i'$ following  the same choices for $i$ as before. 
From $i$'s point of view, these choices are consistent with the policy $\mathcal{G}^+_i$. Now, however, the sets $Y_{(r)}$ in the proof
of Lemma \ref{lem:mapping-non-monotone-general} are all empty \textit{or singletons} (as $i'$ never picks any items but $Y_{(r)}$ may contain $x^i_r$) rather than having cardinality up to $2$. 
Thus, part \textit{2.}~of Lemma \ref{lem:mapping-non-monotone-general}, for $n = 2$ and $a=1$, becomes $|\delta^{-1}(x_r^i)| \le  p_i +1$ instead of $|\delta^{-1}(x_r^i)| \le p_i + 2$, improving the overall factor of Theorem \ref{thm:greedy_non-monotone_general} by $4$ in this case. 
That is, by Theorem \ref{thm:greedy_non-monotone_general},  
\[\max_{t\in\{1, 2\}} f_i(S_{it})  \ge \max_{S\in \mathcal{I}_i|R_i \cup M'}f_i(S) / (4 + 4p_i + 2) \ge \max_{S\in \mathcal{I}_i|M'}f_i(S) / (4p_i +6) \,,\] 
where again the last inequality is due to $M' \subseteq R_i \cup M'$.
\end{proof}

\begin{theorem}\label{thm:greedy_nonmonotone_cardinality_ef1}
	Let $i$ be an agent with a non-monotone objective and a cardinality constraint who chooses greedily in the Round-Robin protocol. Also, let $j$ be any agent, $R_j$ be the set of all items $j$ got, and $g$ be the first item added to $R_j$. Then, 
	$\max_{t\in\{1, 2\}} f_i(S_{it})
	\ge \max_{S\in \mathcal{I}_i|R'_j}f_i(S) / 10$, where $R'_j = R_j$ if $i<j$, and $R'_j = R_j\setminus \{g\}$ otherwise.
\end{theorem}

\begin{proof}
	The proof closely follows the first parts of the proofs of Theorems \ref{thm:greedy_monotone_ef1} and \ref{thm:greedy_nonmonotone_ef1}.
	The setup and the notation are the same, and the only difference is the following: when we compare the best set among $S_{i1}$ and $S_{i2}$ to the best feasible subset of $R_i \cup R_j$ from $i$'s point of view,  we invoke Theorem \ref{thm:greedy_non-monotone_cardinality} (for $n=2$) and get 
\[\max_{t\in\{1, 2\}} f_i(S_{it})  \ge \max_{S\in \mathcal{I}_i|R_i \cup R_j}f_i(S) / (4\cdot 2+ 2) \ge \max_{S\in \mathcal{I}_i|R'_j}f_i(S) / 10 \,,\] 
	where  the second inequality is due to $R'_j \subseteq R_i \cup R_j$.
\end{proof}

One would hope that by simulating a run of Round-Robin$(\bm{\mathcal{G}^+})$, i.e., our Protocol \ref{alg:MRR} with greedy policies for all agents, we would get the counterparts of Corollaries \ref{cor:fairness_p-system} and \ref{cor:fairness_cardinality}. However, the corresponding results here are with respect to the functions $\hat{f}_i(T):= \max_{S\in \mathcal{I}_i|T}f_i(S)$, for $i\in [n]$, rather than the original $f_i$s. An alternative way to see these results is to weaken the guarantee towards the unallocated items, by ``throwing away'' the least favorite among $S_{i1}$ and $S_{i2}$ for each agent $i$.

\begin{corollary}\label{cor:fairness_p-system_nm}
    For agents with non-monotone submodular valuation functions and $p$-system constraints, we can efficiently find a $\frac{1}{4p+10}$-FEF1 and $\frac{1}{4p+6}$-FEFu maximal allocation with respect to the functions $\hat{f}_i$, $i\in [n]$. This implies a $\frac{1}{4p+10}$-FEF1 and $\frac{1}{(n+1)(4p+10)}$-FEFu allocation with respect to the original objective functions.
\end{corollary}

\begin{corollary}\label{cor:fairness_cardinality_nm}
    For agents with non-monotone submodular valuation functions and cardinality constraints, we can efficiently find a $0.1$-FEF1 and $0.1$-FEFu maximal allocation with respect to the functions $\hat{f}_i$, $i\in [n]$. This implies a $\frac{1}{10}$-FEF1 and $\frac{1}{10(n+1)}$-FEFu allocation with respect to the original objective functions.
\end{corollary}

The $(n+1)$ factor in the FEFu guarantees for the original functions in Corollaries \ref{cor:fairness_p-system_nm} and \ref{cor:fairness_cardinality_nm} arises because when each agent $i$ discards the weaker of $S_{i1}, S_{i2}$, the  pool of unallocated items grows by the discarded sets (which are $n$ in total).

\subsection{Improved Guarantees for Robust Instances}
\label{sec:robust_non-monotone}
Like in Section \ref{sec:robust_monotone} for the monotone case, here we explore what is possible for the special case of robust instances. We show that the linear factors of Theorems \ref{thm:greedy_non-monotone_general} and \ref{thm:greedy_non-monotone_cardinality} can still be removed for instances that are $(\Omega(n), O(1))$-robust with respect to everyone. We begin with the analog of Theorem \ref{thm:greedy_monotone_general_robust}.

\begin{theorem}\label{thm:greedy_non-monotone_general_robust}
    Assume that right before agent $i$ is allowed to pick their first item, the remaining instance 
    is $(\gamma n, \beta)$-robust with respect to $i$ (who has a non-monotone objective and a $p_i$-system constraint). 
    By choosing greedily in the Round-Robin protocol, $i$  
    builds two solutions $S_{i1}, S_{i2}$ such that 
    $\max_{t\in\{1, 2\}} f_i(S_{it}) \ge {\opt_i}/{ 2\beta( 2p_i + 1 +2/\gamma)}$.
\end{theorem}

\begin{proof}
Our notation will be consistent with the one introduced in the proof of Theorem \ref{thm:greedy_non-monotone_general}. 
In particular, the set $M_i$ contains the items still available right before agent $i$ chooses their very first item, 
$x_j^{\ell}$ is the $j$-th item that agent $\ell$ chooses from the moment when agent $i$ is about to start choosing, 
$S_{it}^{(r)}, t\in\{1,2\}$ are the solutions of agent $i$ right before item $x_r^i$ is added to one of them, 
$\tau(r)$ denotes the index of the solution augmented with $x^i_r$ for any $r\le s \vcentcolon= |S_{i1}| +|S_{i2}|$, 
and $S_i\vcentcolon= S_{i\tau}$ with $\tau \in \argmax_{t\in[2]} f_i(S_{it})$.

Similarly to the proof of Theorem \ref{thm:greedy_monotone_general_robust}, since the instance is 
$(\gamma n, \beta)$-robust with respect to $i$ right before $i$ gets to pick their first item, 
at that time there are at least $a \vcentcolon= \gamma n$ disjoint subsets of $M_i$, say $Q_{i1}, \ldots, Q_{i a}$, 
such that $Q_{ij}\in \mathcal{I}_i$ and $\beta \sum_{j\in[a]} f_i(Q_{ij}) \ge a\,\opt_i$.

We assume the existence of a mapping $\delta:  P \to S_{i1} \cup S_{i2}$ as in the statement of Lemma \ref{lem:mapping-non-monotone-general}; recall that $P = \{x\in \bigcup_{j\in[a]} Q_{ij} :\, \text{\ there is\ } t \in \{1, 2\} \text{\ such that\ } f_i(x\,|\, S_{it}) \ge 0\}$.  Further, we apply the first part of Theorem \ref{thm:SM-monotone} for each pair $Q_{ij} \cup S_{it}$ and $S_{it}$, for $j\in [a]$ and $t\in\{1,2\}$; we add up these $2a$ inequalities: 
\begin{align*}
 \sum_{j=1}^{a} \sum_{t=1}^{2}  f_i(Q_{ij} \cup S_{it})  &\le \sum_{j=1}^{a}\sum_{t=1}^{2}  f_i(S_{it}) + \sum_{j=1}^{a}\sum_{t=1}^{2} \sum_{x \in Q_{ij} \setminus S_{it}}  f_i(x\,|\,S_{it})   \\
            & = 2a  f_i(S_i) + \sum_{j=1}^{a} \sum_{t=1}^{2} \sum_{x \in Q_{ij}}  f_i(x\,|\,S_{it}) \\ 
            & \le 2a  f_i(S_i) + \sum_{t=1}^{2} \sum_{x \in P}  f_i(x\,|\,S_{it}) \\ 
            &\le 2a f_i(S_{i}) + \sum_{t=1}^{2} \sum_{r=1}^{s}\sum_{x \in \delta^{-1}(x_r^i)} \!\! f_i(x\,|\,S_{it}) \\
            & \le 2a f_i(S_{i}) +  \sum_{r=1}^{s}\sum_{x \in \delta^{-1}(x_r^i)} \sum_{t=1}^{2} f_i(x\,|\,S^{(r)}_{it}) \\
            &\le 2a f_i(S_{i}) + \sum_{r=1}^{s} 2 (n + a p_i ) f_i\big(x_r^i \,|\, S_{i,\tau(r)}^{(r)}\big) \\ 
            & = 2a f_i(S_{i}) + 2 (n + a p_i ) (f_i(S_{i1}) +f_i(S_{i2}))  \\[5pt]
            &\le 2 \gamma n f_i(S_i) + 4(n + \gamma n p_i) f_i(S_i) \\[5pt]
            &\le 2n (\gamma + 2+ 2\gamma p_i)  f_i(S_i) \,,
\end{align*}
where for the first equality we potentially just added some terms of zero value to the triple sum,
the second inequality follows by the definition of $P$ (namely, we possibly removed some negative summands), the third inequality follows from recalling that $\bigcup_{r=1}^{s} \delta^{-1}(x_r^i) = P$, the fourth follows from submodularity, and the fifth from applying both parts of Lemma \ref{lem:mapping-non-monotone-general}.
We conclude that 
\[f_i(S_i) \ge \sum_{j=1}^{a} \sum_{t=1}^{2} f_i(Q_{ij} \cup S_{it}) / 2n (\gamma + 2+ 2\gamma p_i)\,.\]

We still need to relate $\sum_{j\in[a]}\sum_{t\in[2]} f_i(Q_{ij} \cup S_{it})$ to $\opt_i$ via Lemma \ref{lem:sampling}.

For this, we are going to apply the lemma for each $g_j(S) = f_i(S\cup Q_{ij})$, where the random set $X(p)$ is one of $S_{i1}$ and $S_{i2}$, uniformly at random, i.e., $p=0.5$. Thus, Lemma \ref{lem:sampling} gives 
\begin{equation*}
f_i(Q_{ij}) =  g_j(\emptyset) 
            \le 2  \mathbb{E} \left[ g_j(X(p)) \right] =  \sum_{t=1}^{2} f_i(Q_{ij} \cup S_{it})\,.
\end{equation*}
By summing up for all values of $j$ and using $(\gamma n, \beta)$-robustness (recall $a = \gamma n$), we get
\begin{equation*}
\gamma n\,\opt_i  \le \beta \sum_{j=1}^{a} f_i(Q_{ij}) \le   \beta \sum_{j=1}^{a} \sum_{t=1}^{2} f_i(Q_{ij} \cup S_{it}) \le  2n \beta (\gamma + 2+ 2\gamma p_i) f_i(S_i)\,,
\end{equation*}
and we conclude that $ \max_{t\in\{1, 2\}} f_i(S_{it}) = f_i(S_i) \ge {\opt_i}/{ 2\beta( 2p_i + 1 +2/\gamma)}$. 
\end{proof}

The next corollary, follows from Theorem \ref{thm:greedy_non-monotone_general_robust} in exactly the same way that 
Corollary \ref{cor:greedy_monotone_general_robust_1} follows from Theorem \ref{thm:greedy_monotone_general_robust}.

\begin{corollary}\label{cor:greedy_non-monotone_general_robust_1}
    Assume that an instance is $(\gamma n +i -1, \beta)$-robust with respect to agent $i$ (who has a non-monotone objective and a $p_i$-system constraint). By choosing greedily in the Round-Robin protocol, $i$  
    builds two solutions $S_{i1}, S_{i2}$ such that 
    $\max_{t\in\{1, 2\}} f_i(S_{it}) \ge {\opt_i}/{ 2\beta( 2p_i + 1 +2/\gamma)}$.
\end{corollary}

Through Corollary \ref{cor:greedy_non-monotone_general_robust_1} 
we obtain strong guarantees with respect to all agents for universally robust instances. 

\begin{corollary}\label{cor:greedy_non-monotone_general_robust_2}
    Assume that an instance is $(\lceil(1+\gamma) n\rceil , \beta)$-robust with respect to every agent for constant $\beta, \gamma \in \mathbb{R}_+$. Any agent $i$ with a non-monotone objective and a $p_i$-system constraint 
    who chooses greedily in
    the Round-Robin protocol, achieves a $1/\Theta(p_i)$ fraction of\, $\opt_i$. In particular, for $\beta = 1, \gamma  = 2$, any agent $i$ achieves $\opt_i/(4p_i + 4)$ for a $p_i$-system constraint (hence, $\opt_i/8$ for a cardinality constraint).
\end{corollary}

\begin{proof}
Like in the proof of Corollary \ref{cor:greedy_monotone_general_robust_2},  
$\lceil(1+\gamma) n\rceil  = n + n\zeta(n) $, with $\gamma - 1/n < \zeta(n) \le \gamma$.
If an instance is $((1+\zeta(n)) n , \beta)$-robust with respect to agent $i$, it is 
also $(\zeta(n) n +i -1, \beta)$-robust with respect to $i$. 
By Corollary \ref{cor:greedy_non-monotone_general_robust_1}, we get 
$\opt_i \le  2\beta( 2p_i + 1 +2/\zeta(n)) f_i(S_i)$, and since $\beta$ and 
$\gamma$ are constant, $2\beta( 2p_i + 1 +2/\zeta(n)) = \Theta(p_i)$. 
In particular, $2\beta( 2p_i + 1 +2/\zeta(n)) = 4p_i + 4$, for $\beta = 1, \gamma  = 2$. 
\end{proof}

\section{Maximin Share Guarantees via High-Valued Singletons}
\label{sec:mms}
As we mentioned earlier, it is possible to obtain MMS guarantees by adding a short extra phase to Protocol \ref{alg:MRR}. This is a standard trick done 
to obtain approximate MMS allocations \citep{AMNS17,BarmanK20},  although here there are some additional complications. 
In particular, the Augmented Round-Robin protocol (Protocol \ref{alg:AugRR}) initially gives each agent the option to pick a single high-value item in Phase~1 and exit, before running Protocol~\ref{alg:MRR} on the remaining agents and items in Phase~2.

\makeatletter
\renewcommand{\ALG@name}{Protocol}
\makeatother

\setcounter{algorithm}{1}

\begin{algorithm}[ht]
		\caption{Augmented Round-Robin$({\mathcal{A}}_1, \ldots, {\mathcal{A}}_n)$ \\{\small {(For $i\in [n]$, ${\mathcal{A}}_i$ is the policy of   agent $i$.)}}}
		\begin{algorithmic}[1]
			\State $Q=M$\textbf{;} $R = [n]$
			\For{$i = 1, \dots, n$}   {\small\hfill \textbf{Phase 1}}
				\State $j = \mathcal{A}_i(\emptyset\,; Q\,;1)$ {\small\hfill (where $j$ could be a \textit{dummy} item)} 
				\If{$j\in Q$}{\small\hfill i.e., if $j$ is not a dummy item} 
					\State $Q = Q\setminus \{j\}$ {\small\hfill update remaining items}
					\State $R = R\setminus \{i\}$ {\small\hfill update remaining agents}
				\EndIf
			\EndFor
			\State Run Protocol \ref{alg:MRR} on the instance induced by $R$ and $Q$. {\small\hfill \textbf{Phase 2}}
		\end{algorithmic}
		\label{alg:AugRR}
\end{algorithm}

\makeatletter
\renewcommand{\ALG@name}{Policy}
\makeatother

\setcounter{algorithm}{2}

\begin{algorithm}[ht]
		\caption{ Augmented Greedy policy $\mathcal{H}_i(S_i\,; Q\,; b)$ of agent $i$ (assuming access to an MMS oracle). \\ {\small {($S_i$\,: current solution of agent $i$ (initially $S_{i} =\emptyset$)\,; $Q$\,: the current set of available items\,; $b\in \{1, 2\}$\,: phase number)}}}
		\begin{algorithmic}[1]
			\vspace{2pt}\State $\theta = {\small \begin{cases}
                3\ , & \text{for a monotone objective and a cardinality constraint}  \\
                p_i + 3\ , & \text{for a monotone objective and other $p_i$-system constraints}  \\
                4p_i + 7\ , & \text{for a non-monotone objective and a $p_i$-system constraint} 
                \end{cases}}$ \label{line:theta} \vspace{2pt}
            \State $A =\{x\in Q\,:\, S_i \cup \{x\} \in \mathcal{I}_i\}$
            \If{$b = 1$} 
                \If{$A \neq \emptyset$ \textbf{and} $\max_{z\in A} f_i(z\,|\,S_i) \ge \fmms_i / \theta$ \label{line:mms_single}}
                    \State $j\in \argmax_{z\in A} f_i(z\,|\,S_i)$ 
                    \State $S_i = S_{i} \cup \{j\}$ \label{line:mms_j}
                \Else
                    \State $j$ is a dummy item
                \EndIf
            \State \Return{$j$}
			\Else
			\State Follow Greedy policy $\mathcal{G}_i(S_i\,; Q)$ (or $\mathcal{G}^+_i(S_{i1}, S_{i2}\,; Q)$ if $f_i$ is non-monotone)
			\EndIf
		\end{algorithmic}
		\label{alg:aug_greedy_policy}
\end{algorithm}\medskip

\begin{theorem}\label{thm:greedy_monotone_general_MMS-oracle}
Any agent $i$ with a monotone objective and a $p_i$-system (resp.~cardinality) constraint, who has access to an $\fmms$ oracle and 
chooses according to the augmented greedy policy $\mathcal{H}_i$ in the Augmented Round-Robin protocol, 
builds a solution $S_i$ such that $f_i(S_i) \ge {\fmms_i}/{(p_i + 3)}$ (resp.~$f_i(S_i) \ge {\fmms_i}/{3}$).
\end{theorem}

\begin{proof}
It suffices to show that, in any case, $f_i(S_i) \ge {\fmms_i}/{\theta}$, where $\theta$ is equal to $p_i + 3$ for general $p_i$-system constraints but only $3$ for cardinality constraints (see line \ref{line:theta} of Policy \ref{alg:aug_greedy_policy}).

If agent $i$ gets an item $j$ in Phase $1$, then $i$ is not allowed to get another item, i.e., $S_i = \{j\}$.
However, in this case, by lines \ref{line:mms_single}-\ref{line:mms_j} of policy $\mathcal{H}_i$, we directly have that 
$f_i(S_i) = \max_{z\in A} f_i(z\,|\,\emptyset) \ge \fmms_i / \theta$.

Now suppose that agent $i$ did not get an item in Phase $1$ but rather participates in Protocol \ref{alg:MRR} that runs on the 
instance induced by $R$ and $Q$ in Phase $2$. If this happened because $A=\emptyset$ in line \ref{line:mms_single}, then $\fmms_i = 0$ and the 
statement trivially holds for $S_i = \emptyset$. So, assume that this is not the case. Let $\opt'_i:= \max_{S\in \mathcal{I}_i|Q}f_i(S)$ and $r:=|R|$. 
We first observe that $\fmms_i\le \fmms_i(r, Q)$, just by applying 
Lemma \ref{lem:monotonicity} $n-r$ times.

Next, we claim that right before agent $i$ is allowed to pick their first item 
in Phase 2, i.e., in Protocol \ref{alg:MRR} on $R$ and $Q$, the remaining instance is $\big(r, \frac{\theta\, \opt'_i}{(\theta-1)\fmms_i(r, Q)}\big)$-robust 
with respect to $i$. Equivalently, we will show that at that moment there are $r$ sets $O_1, O_2, \ldots, O_{r}$ of average 
value $(\theta-1)\fmms_i(r, Q) / \theta$. To this end, consider an $\fmms$-defining $r$-partition of $Q$ with respect to $i$, i.e., a partition 
$(Q_1, Q_2, \ldots, Q_{r})$ of $Q$, such that $f_i(Q_t) \ge \fmms_i(r, Q)$, for any $t\in [r]$. 
At the time right before agent $i$ is allowed to pick their first item, at most $r$ items of $Q$ have been allocated to agents in $R\cap [i-1]$,
say items $x_1, \ldots, x_{\ell}$, where $\ell \le r$. Since the condition of line \ref{line:mms_single} was false, $f_i(x_t) < \fmms_i / \theta$,
for any $t\in [\ell]$. If we define $O_t := Q_t \setminus \{x_1, \ldots, x_{\ell}\}$, for $t\in [r]$, then the total value of these sets is
\[\sum_{t=1}^{r} f_i(O_t) = \sum_{t=1}^{r} f_i(Q_t) - \sum_{t=1}^{\ell} f_i(x_t) \ge r\cdot \fmms_i(r,Q) - \ell\cdot\frac{\fmms_i}{\theta} \ge r\cdot \frac{\theta-1}{\theta}\,\fmms_i(r,Q)\,, \]
and, hence, their average value is $(\theta-1)\fmms_i(r, Q) / \theta$, as claimed.

Finally, we can apply Theorem \ref{thm:greedy_monotone_general_robust} with $\gamma = 1$ and $\beta = \frac{\theta\, \opt'_i}{(\theta-1)\fmms_i(r, Q)}$ for agent $i$ on the execution of Protocol \ref{alg:MRR} on the 
instance induced by $R$ and $Q$, to get that
\[f_i(S_i) \ge \frac{\opt'_i}{\beta(\theta - 2 + 1/\gamma)} \ge \frac{(\theta-1)\fmms_i(r, Q)}{\theta(\theta-1)} \ge \frac{\fmms_i}{\theta}\,,\]
where the first inequality follows from expressing the approximation guarantees of Theorem \ref{thm:greedy_monotone_general_robust} in terms of $\theta$.
\end{proof}

The counterpart of Theorem \ref{thm:greedy_monotone_general_MMS-oracle} for non-monotone objective below is, to the best of our knowledge, the first result on maximin share guarantees for non-monotone objectives. For the sake of presentation, we do not add separate results for cardinality constraints, although one could improve the approximation factor in that case to $7$ rather than $11$.

\begin{theorem}\label{thm:greedy_non-monotone_general_MMS-oracle}
Any agent $i$ with a non-monotone objective and a $p_i$-system constraint, who has access to an $\fmms$ oracle and 
chooses according to the augmented greedy policy $\mathcal{H}_i$ in the Augmented Round-Robin protocol, 
builds two solutions $S_{i1}, S_{i2}$ such that $\max_{t\in\{1, 2\}} f_i(S_{it}) \ge {\fmms_i}/{(4p_i + 7)}$.
\end{theorem}

\begin{proof}
The proof is very similar to the proof of Theorem \ref{thm:greedy_monotone_general_MMS-oracle}, so here we focus mostly  on the differences.
If agent $i$ gets an item $j$ in Phase $1$, then $S_{i1} = S_i = \{j\}$ and $S_{i2} = \emptyset$.
In this case, by lines \ref{line:mms_single}-\ref{line:mms_j} of policy $\mathcal{H}_i$, we directly have 
$f_i(S_i) \ge \fmms_i / \theta$.

Next suppose that agent $i$ participates in Protocol \ref{alg:MRR} that runs on the instance induced by $R$ and $Q$ in Phase $2$. If this happened because $A=\emptyset$ in line \ref{line:mms_single}, then the statement trivially holds for $S_i = \emptyset$. So, assume that this is not the case. Let $\opt'_i:= \max_{S\in \mathcal{I}_i|Q}f_i(S)$ and $r:=|R|$. 
Like in the proof of Theorem \ref{thm:greedy_monotone_general_MMS-oracle}, right before $i$ is allowed to pick their first item, the remaining instance is $\big(r, \frac{\theta\, \opt'_i}{(\theta-1)\fmms_i(r, Q)}\big)$-robust 
with respect to $i$. 

Then we may apply Theorem \ref{thm:greedy_non-monotone_general_robust} with $\gamma = 1$ and $\beta = \frac{\theta\, \opt'_i}{(\theta-1)\fmms_i(r, Q)}$ for agent $i$ on the execution of Protocol \ref{alg:MRR} on the 
instance induced by $R$ and $Q$, to get that
\[\max_{t\in\{1, 2\}} f_i(S_{it}) \ge \frac{\opt'_i}{2\beta(2p_i + 1 + 2/\gamma)} \ge \frac{(\theta-1)\fmms_i(r, Q)}{\theta(\theta-1)} \ge \frac{\fmms_i}{\theta}\,.\qedhere\]
\end{proof}

Theorems \ref{thm:greedy_monotone_general_MMS-oracle} and \ref{thm:greedy_non-monotone_general_MMS-oracle} have an obvious shortcoming. Unlike the unconstrained additive case, where $i$ could use a PTAS to approximate $\mms_i$ \citep{Woeginger97} and, thus, assuming an (almost exact) $\fmms$ oracle is reasonable, there are no known approximation algorithms for computing $\fmms_i$ subject to combinatorial constraints. 
Next, we show that having an oracle is not really necessary; each agent $i$ can efficiently find an estimate of $\fmms_i$  by simulating a number of executions of Protocol \ref{alg:MRR} with copies of themselves who all follow a greedy policy.

\makeatletter
\renewcommand{\ALG@name}{Algorithm}
\makeatother

\setcounter{algorithm}{0}

\begin{algorithm}[ht]
		\caption{ Maximin Share Proxy$(n, M, f_i)$ }
		\begin{algorithmic}[1]
			\vspace{2pt}\State $\hat{\mu}_i = 0$; $\ell = n$; $Q = M$
            \While{$\max_{z\in Q} f_i(z\,|\,\emptyset) > \hat{\mu}_i$}
                \vspace{1pt}\State $j \in  \argmax_{z\in Q} f_i(z\,|\,\emptyset)$
                \State $\ell = \ell - 1$
                \State $Q = Q \setminus \{j\}$
                \If{$f_i$ is monotone}
                    \State Let $\mathcal{S} = (S_1,\ldots,S_{\ell})$ be the allocation produced by running Round-Robin$(\mathcal{G}_i, \mathcal{G}_i, \ldots,\mathcal{G}_i)$  
                    \State $\hat{\mu}_i = \min_{r\in[\ell]} f_i(S_r)$
                \Else
                    \State Let $\mathcal{S} = ((S_{11}, S_{12}),\ldots,(S_{\ell1},S_{\ell2}))$ be the allocation produced by Round-Robin$(\mathcal{G}^+_i, \ldots,\mathcal{G}^+_i)$  
                    \State $\hat{\mu}_i = \min_{r\in[\ell]} \max_{s\in[2]} f_i(S_{rs})$
                \EndIf
            \EndWhile
            \If{$\min_{z\in M\setminus Q} f_i(z\,|\,\emptyset) < \hat{\mu}_i$}
                \vspace{2pt}\State $\hat{\mu}_i = \min_{z\in M\setminus Q} f_i(z\,|\,\emptyset)$
            \EndIf
            \State \Return{$\hat{\mu}_i$}
		\end{algorithmic}
		\label{alg:maximin_proxy}
\end{algorithm}\medskip

\begin{lemma}\label{lem:maximin_proxy}
If an agent $i$ runs Maximin Share Proxy$(n, M, f_i)$, then the output $\hat{\mu}_i$ they get, lies in the interval $[\fmms_i / \theta, \fmms_i]$, where $\theta$ is as in line \ref{line:theta} of Policy \ref{alg:aug_greedy_policy}.
\end{lemma}

\begin{proof}
First we claim that every item ever removed from $Q$ (i.e., the final items of $M\setminus Q$) has value at least $\fmms_i / \theta$. To see this, suppose that this is not the case and call $Q^{(t)}, \ell^{(t)}, j^{(t)}, \hat{\mu}^{(t)}_i$ the relevant quantities at the beginning of the iteration of the \textit{while} loop during which $j^{(t)}$ was removed. If $f_i(j^{(t)}) < \fmms_i / \theta$, then by Theorems \ref{thm:greedy_monotone_general_MMS-oracle} and \ref{thm:greedy_non-monotone_general_MMS-oracle}, we have that the value $\hat{\mu}^{(t)}_i$, computed during the previous iteration must be at least $\fmms_i / \theta$. In this case, however, the condition of the while loop is false, contradicting the fact that the iteration happened and $j^{(t)}$ was removed.

Similarly, at the end of the last iteration of the while loop, either $\hat{\mu}_i$ was computed on an instance without any items of value at least $\fmms_i / \theta$ from $i$'s perspective (and thus, by Theorems \ref{thm:greedy_monotone_general_MMS-oracle} and \ref{thm:greedy_non-monotone_general_MMS-oracle}, $\hat{\mu}_i$ must be at least $\fmms_i / \theta$) or there is still at least an item of value at least $\fmms_i / \theta$  in the final $Q$ (and thus, the fact that the condition of the while loop is false right after, directly implies $\hat{\mu}_i\ge \fmms_i / \theta$). 

In either case, after the while loop, $\hat{\mu}_i$ is updated to be the minimum over a number of values all of which are at least $\fmms_i / \theta$ and the lower bound on the final value of $\hat{\mu}_i$ follows.

In order to see that $\hat{\mu}_i\le \fmms_i$, let $z_1, z_2, \ldots, z_{n-\ell}$ be the items of $M\setminus Q$ and $S_1,\ldots,S_{\ell}$ be the last allocation produced in the while loop (where here $S_j = \argmax_{T\in\{S_{j1}, S_{j2}\}}f_i(T)$ in the case of a non-monotone $f_i$). Then, by the definition of $\hat{\mu}_i$, each one of the $n$ disjoint feasible (for $i$) subsets  $\{z_1\}, \{z_2\}, \ldots, \{z_{n-\ell}\}, S_1, \ldots,S_{\ell}$ has value \emph{at least} $\hat{\mu}_i$. By the definition of the $n$-feasible maximin share, this immediately implies that $\hat{\mu}_i\le \fmms_i$.
\end{proof}

We may modify the augmented greedy policy ${\mathcal{H}}_i$ to use the proxy value $\hat{\mu}_i$ instead of $\fmms_i$; we will call this new augmented greedy policy $\hat{\mathcal{H}}_i$. As $\hat{\mathcal{H}}_i$ will have minimal differences from ${\mathcal{H}}_i$, we do not give its pseudocode description but rather explain these differences: \textit{(i)} instead of $\theta$ in line \ref{line:theta} of Policy \ref{alg:aug_greedy_policy}, in $\hat{\mathcal{H}}_i$ we have 
\[\hat{\theta} = \frac{2\theta - 1}{\theta}  = {\begin{cases}
                \frac{5}{3}\ , & \text{for a monotone objective and a cardinality constraint}  \\
                \frac{2p_i + 5}{p_i + 3}\ , & \text{for a monotone objective and other $p_i$-system constraints}  \\
                \frac{8p_i + 13}{4p_i + 7}\ , & \text{for a non-monotone objective and a $p_i$-system constraint} 
                \end{cases}} \]
and \textit{(ii)} instead of comparing the value of the best available singleton to $\fmms_i / \theta$ in line \ref{line:mms_single} of Policy \ref{alg:aug_greedy_policy}, in $\hat{\mathcal{H}}_i$ we compare it to $\hat{\mu}_i / \hat{\theta}$.

We are finally ready to state the analogs of Theorems \ref{thm:greedy_monotone_general_MMS-oracle} and \ref{thm:greedy_non-monotone_general_MMS-oracle}, without any assumptions about access to $\fmms$ oracles. Note that bypassing the computationally hard problem of agent $i$ computing their $\fmms_i$, comes at the expense of a factor that is roughly $2$.

\begin{theorem}\label{thm:greedy_monotone_general_MMS}
Any agent $i$ with a monotone objective and a $p_i$-system (resp.~cardinality) constraint, who 
chooses according to the augmented greedy policy $\hat{\mathcal{H}}_i$ in the Augmented Round-Robin protocol, 
builds a solution $S_i$ such that $f_i(S_i) \ge {\fmms_i}/{(2p_i + 5)}$ (resp.~$f_i(S_i) \ge {\fmms_i}/{5}$), even without access to an $\fmms$ oracle.
\end{theorem}

\begin{proof}
It suffices to show that, in any case, $f_i(S_i) \ge {\fmms_i}/{(2\theta-1)}$. The proof is similar to the proof of Theorem \ref{thm:greedy_monotone_general_MMS-oracle}, so we mainly focus on the differences.

If agent $i$ gets an item $j$ in Phase $1$, then $S_i = \{j\}$ and  we  have 
\[f_i(S_i) \ge \frac{\hat{\mu}_i}{\hat{\theta}} \ge \frac{\fmms_i}{\theta\,\hat{\theta}} =  \frac{\fmms_i}{2\theta-1}\,,\]
where the second inequality follows by Lemma \ref{lem:maximin_proxy}, and the equality after that by the definition of $\hat{\theta}$.

Next suppose that agent $i$ participates in Protocol \ref{alg:MRR} that runs on the instance induced by $R$ and $Q$ in Phase $2$. If this happened because $A=\emptyset$ in line \ref{line:mms_single}, then the statement trivially holds for $S_i = \emptyset$. Assume that this is not the case and let $\opt'_i:= \max_{S\in \mathcal{I}_i|Q}f_i(S)$ and $r:=|R|$. 
Similarly to the proof of Theorem \ref{thm:greedy_monotone_general_MMS-oracle}, and using the fact that if $f_i(x) <\hat{\mu}_i / \hat{\theta}$ then $f_i(x) < \fmms_i / \hat{\theta}$, we have that  right before $i$ is allowed to pick their first item, the remaining instance is $\big(r, \frac{\hat{\theta}\, \opt'_i}{(\hat{\theta}-1)\fmms_i(r, Q)}\big)$-robust 
with respect to $i$.

Finally, we  apply Theorem \ref{thm:greedy_monotone_general_robust} with $\gamma = 1$ and $\beta = \frac{\hat{\theta}\, \opt'_i}{(\hat{\theta}-1)\fmms_i(r, Q)}$ for agent $i$ on the execution of Protocol \ref{alg:MRR} on the 
instance induced by $R$ and $Q$, to get that
\[f_i(S_i) \ge \frac{\opt'_i}{\beta(\theta - 2 + 1/\gamma)} \ge \frac{(\hat{\theta}-1)\fmms_i(r, Q)}{\hat{\theta}(\theta-1)}  =  \frac{\frac{\theta-1}{\theta}\fmms_i(r, Q)}{\frac{2\theta-1}{\theta}(\theta-1)}\ge \frac{\fmms_i}{2\theta - 1}\,.\qedhere\]
\end{proof}

\begin{theorem}\label{thm:greedy_non-monotone_general_MMS}
Any agent $i$ with a non-monotone objective and a $p_i$-system constraint, who 
chooses according to the augmented greedy policy $\hat{\mathcal{H}}_i$ in the Augmented Round-Robin protocol, 
builds two solutions $S_{i1}, S_{i2}$ such that $\max_{t\in\{1, 2\}} f_i(S_{it}) \ge {\fmms_i}/{(8p_i + 13)}$, even without access to an $\fmms$ oracle.
\end{theorem}

\begin{proof}
It suffices to show that, in any case, $\max_{t\in\{1, 2\}} f_i(S_{it}) \ge {\fmms_i}/{(2\theta-1)}$. The proof is very similar to the proofs of Theorems \ref{thm:greedy_monotone_general_MMS-oracle}, \ref{thm:greedy_non-monotone_general_MMS-oracle} and \ref{thm:greedy_monotone_general_MMS}, so here we focus only on the differences.
If agent $i$ gets an item $j$ in Phase $1$, then $S_{i1} = S_i = \{j\}$ and $S_{i2} = \emptyset$.
In this case, by lines \ref{line:mms_single}-\ref{line:mms_j} of policy $\mathcal{H}_i$, we directly have 
$f_i(S_i) \ge {\hat{\mu}_i}/{\hat{\theta}} \ge {\fmms_i}/{\theta\hat{\theta}} =  {\fmms_i}/{(2\theta-1)}$.

Exactly like in the proof of Theorem \ref{thm:greedy_monotone_general_MMS}, if $A\neq\emptyset$ in line \ref{line:mms_single} and agent $i$ participates in  Protocol \ref{alg:MRR} on the instance induced by $R$ and $Q$ in Phase $2$, then right before $i$ is allowed to pick their first item, the remaining instance is $\big(r, \frac{\hat{\theta}\, \opt'_i}{(\hat{\theta}-1)\fmms_i(r, Q)}\big)$-robust 
with respect to $i$. 

Then we apply Theorem \ref{thm:greedy_non-monotone_general_robust} with $\gamma = 1$ and $\beta = \frac{\hat{\theta}\, \opt'_i}{(\hat{\theta}-1)\fmms_i(r, Q)}$ for agent $i$ on the execution of Protocol \ref{alg:MRR} on the 
instance induced by $R$ and $Q$, to get that
\[\max_{t\in\{1, 2\}} f_i(S_{it}) \ge \frac{\opt'_i}{2\beta(2p_i + 1 + 2/\gamma)} \ge \frac{(\hat{\theta}-1)\fmms_i(r, Q)}{\hat{\theta}(\theta-1)}  \ge \frac{\fmms_i}{2\theta - 1}\,.\qedhere\]
\end{proof}

By simulating a run of Augmented Round-Robin$(\hat{\bm{\mathcal{H}}})$, i.e., our Protocol \ref{alg:AugRR} with the polynomial-time augmented greedy policies for all agents, we directly get corollaries about $\fmms$ guarantees in fair division.

\begin{corollary}\label{cor:fmms_p-system}
For agents with monotone submodular valuation functions and $p$-system constraints, we can efficiently find a $\frac{1}{2p+5}$-FMMS allocation.
\end{corollary}

\begin{corollary}\label{cor:fmms_cardinality}
For agents with monotone submodular valuation functions and cardinality constraints, we can efficiently find a $0.2$-FMMS allocation. 
\end{corollary}

\begin{corollary}\label{cor:fmms_p-system_nm}
For agents with non-monotone submodular valuation functions and $p$-system constraints, we can efficiently find a $\frac{1}{8p+13}$-FMMS allocation.
\end{corollary}

\section{Improving Algorithmic Fairness and Guarantees via Randomness}
\label{sec:randomizedRR}

In this section we rectify an obvious shortcoming of Protocol \ref{alg:MRR}, namely that not all agents are treated equally due to their fixed order. Unfortunately, this inequality issue is inherent to any deterministic protocol which is agnostic to the objective functions (like Protocol \ref{alg:MRR} is) and it heavily affects agents in the presence of a small number of highly valued contested items. A natural remedy, which we apply here, is to randomize over the initial ordering of the agents before running the main part of Protocol \ref{alg:MRR}. This \textit{Randomized Round-Robin} protocol is formally described in Protocol \ref{alg:RMRR} below.

\makeatletter
\renewcommand{\ALG@name}{Protocol}
\makeatother

\setcounter{algorithm}{2}

\begin{algorithm}[ht]
		\caption{Randomized Round-Robin$({\mathcal{A}}_1, \ldots, {\mathcal{A}}_n)$ \\{\small {(For $i\in [n]$, ${\mathcal{A}}_i$ is the policy of   agent $i$.)}}}
		\begin{algorithmic}[1]
            \State Let $\pi:[n]\to[n]$ be a random permutation on $[n]$
			\State $Q=M$\textbf{;} $k = \lceil m/n\rceil$ 
			\For{$r = 1, \dots, k$} 
			\For{$i = 1, \dots, n$} 
			\State $j = \mathcal{A}_{\pi(i)}(S_{\pi(i)}\,; Q)$ {\small\hfill (where $j$ could be a \textit{dummy} item)} \label{line:rrr4}
			
			\State $Q = Q\setminus \{j\}$ \label{line:rrr6}
			\EndFor
			\EndFor
		\end{algorithmic}
		\label{alg:RMRR}
\end{algorithm}

Of course, given a permutation $\pi$, all the guarantees of Theorems \ref{thm:greedy_monotone_general}, \ref{thm:greedy_monotone_cardinality}, \ref{thm:greedy_non-monotone_general}, and \ref{thm:greedy_non-monotone_cardinality} still hold ex-post, albeit properly restated. That is, now the guarantee for an agent $i\in [n]$ is with respect to the value of an optimal solution available to $i$, given that $\pi^{-1}(i)-1$ items have been lost  
to agents $\pi(1), \pi(2), \ldots, \pi(\pi^{-1}(i)-1)$ before $i$ gets to pick their first item.

The next theorem states that, in every worst-case scenario we studied in Sections \ref{sec:positive_monotone} and \ref{sec:non-monotone}, all agents who choose greedily obtain a set of expected value within a constant factor of the best possible worst-case guarantee of $\opt_i / n$ (recall the example from the Introduction).

\begin{theorem}\label{thm:randomized}
Assume agent $i$ chooses greedily in the Randomized Round-Robin protocol. Then, the expected value that  $i$  
obtains (from the best solution they build) is at least ${\opt_i}/{\gamma\,n}$, where 
\begin{itemize}
    \item $\gamma = 2+\frac{p_i}{n}$, if $i$ has a $p_i$-system constraint and a monotone submodular objective;
    \item $\gamma = 2$, if $i$ has a cardinality constraint and a monotone submodular objective;
    \item $\gamma = 5+\frac{4p_i +2}{n}$, if $i$ has a $p_i$-system constraint and a non-monotone submodular objective;
    \item $\gamma = 5+\frac{2}{n}$, if $i$ has a cardinality constraint and a non-monotone submodular objective.
\end{itemize}
\end{theorem}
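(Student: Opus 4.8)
The plan is to leverage the ex-post guarantees directly: once the random permutation $\pi$ is drawn, Protocol \ref{alg:RMRR} is just Protocol \ref{alg:MRR} with that order, so for every realization the corresponding one of Theorems \ref{thm:greedy_monotone_general}, \ref{thm:greedy_monotone_cardinality}, \ref{thm:greedy_non-monotone_general}, \ref{thm:greedy_non-monotone_cardinality} gives $\optm_i \le D\cdot v_i$, where $v_i$ denotes the value of the best solution $i$ builds and $D\in\{n+p_i,\,n,\,4n+4p_i+2,\,4n+2\}$ is the matching denominator. The whole difficulty is that the benchmark there is the shrunk optimum $\optm_i$, whereas the target is the full optimum $\opt_i$. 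I would \emph{not} try to compare $\mathbb{E}[\optm_i]$ to $\opt_i$ multiplicatively (this is too lossy: a single very valuable contested item forces $\mathbb{E}[\optm_i]$ down to $\opt_i/n$). Instead I would bridge the two \emph{additively} and pay for the gap with a separately charged term.

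Concretely, let $O$ be optimal for $i$ with $f_i(O)=\opt_i$, and let $L$ be the (random) set of items grabbed by $i$'s predecessors before $i$'s first turn, so $|L|=\pi^{-1}(i)-1$. Since $O\setminus L$ is feasible and still available when $i$ starts, $f_i(O\setminus L)\le\optm_i$. Applying the inequality of Theorem \ref{thm:SM-general} with $T=O$, $S=O\setminus L$ (here $S\subseteq T$, so the subtracted sum vanishes) and then submodularity of the marginals yields, for every realization, $\opt_i=f_i(O)\le f_i(O\setminus L)+\sum_{x\in O\cap L}f_i(x\,|\,O\setminus L)\le \optm_i + W$, where $W\vcentcolon=\sum_{x\in O\cap L}f_i(\{x\})$ is the total singleton value of optimal items lost before $i$ moves. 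Combining with the ex-post bound gives the clean per-permutation inequality $\opt_i\le D\,v_i+W$. Taking expectations, it then suffices to show $\mathbb{E}[W]\le n\,\mathbb{E}[v_i]$, which would yield $\opt_i\le (D+n)\,\mathbb{E}[v_i]=\beta n\,\mathbb{E}[v_i]$ in each of the four cases.

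The heart of the argument is this bound on $\mathbb{E}[W]$, which I would prove by an exchangeability/insertion argument. Condition on the relative order $\sigma$ of the other $n-1$ agents and insert $i$ into a uniformly random position $\ell$; this reproduces the uniform permutation. Crucially, for fixed $\sigma$ the first-round picks $a_1,a_2,\dots$ of the non-$i$ agents are determined independently of $\ell$ (a predecessor's choice depends only on earlier predecessors' choices, never on $i$), so when $i$ sits at position $\ell$ the lost set is exactly $L_\ell=\{a_1,\dots,a_{\ell-1}\}$. For each index $j$ with $a_j\in O$, the item $a_j$ is still available (and, lying in $O$, feasible) when $i$ takes its first turn at position $j$; hence $i$'s greedy first pick has singleton value at least $f_i(\{a_j\})$, giving $v_i(j)\ge f_i(\{a_j\})$. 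Writing $\sum_{\ell}W_\ell=\sum_{j:\,a_j\in O}(n-j)\,f_i(\{a_j\})\le (n-1)\sum_{j:\,a_j\in O}v_i(j)\le (n-1)\sum_\ell v_i(\ell)$, averaging over $\ell$ and then over $\sigma$ gives $\mathbb{E}[W]\le (n-1)\,\mathbb{E}[v_i]\le n\,\mathbb{E}[v_i]$, as needed.

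For the two non-monotone cases the same scheme applies: after deriving $\opt_i\le \optm_i+W$, the bound $\optm_i\le D\,v_i$ is the one supplied by Theorems \ref{thm:greedy_non-monotone_general}/\ref{thm:greedy_non-monotone_cardinality}, whose proofs already pass through the sampling Lemma \ref{lem:sampling}; nothing new is needed there, since those theorems are invoked as an ex-post black box. The step I expect to be the main obstacle is the insertion argument, and specifically the inequality $v_i(j)\ge f_i(\{a_j\})$: in the monotone case it is immediate from monotonicity, but in the non-monotone case one must verify that the value of the \emph{better} of $i$'s two greedy solutions dominates the value of $i$'s first pick, i.e.\ that the subsequent (possibly negative-marginal) additions of Policy \ref{alg:simultaneous_greedy_policy} do not erase the first pick's contribution. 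Reconciling the two-solution construction with this charging is where the real care lies.
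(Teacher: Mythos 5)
Your route is genuinely different from the paper's and, for the two monotone cases, it is correct. The paper also relies on the additive bridge $\opt_i \le \optm_i(\pi) + W(\pi)$ (its ``observation (b)'' is exactly this, with $W$ further bounded deterministically by the sum of $i$'s top $\ell-1$ singleton values $f_i(j_1\,|\,\emptyset),\ldots,f_i(j_{\ell-1}\,|\,\emptyset)$), but then it splits into two cases according to whether $\sum_{k=1}^{n} f_i(j_k\,|\,\emptyset)$ exceeds $\opt_i/\beta$: if it does, the expected value of $i$'s first pick alone is already $\opt_i/(\beta n)$; if it does not, then $\optm_i(\pi) \ge (1-\frac{1}{\beta})\opt_i$ holds for \emph{every} permutation and the ex-post theorems finish the job. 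You instead keep $W$ random and prove $\E[W]\le (n-1)\E[v_i]$ by the insertion coupling, charging each lost optimal item $a_j$ to the run in which $i$ sits at slot $j$ and could have grabbed $a_j$ itself. This avoids the case split and even yields the marginally better constant $D+n-1$. Its one extra requirement is that the predecessors' first-round picks be determined by their relative order $\sigma$ alone; this holds because policies are functions of the current state and an agent at slot $j<\ell$ sees an identical state for every $\ell>j$, whereas the paper's worst-case bound on $W$ is immune even to permutation-dependent behavior of the others.

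The step you flag as open---$v_i(j)\ge f_i(\{a_j\})$ for the simultaneous greedy policy---is a genuine concern with Policy \ref{alg:simultaneous_greedy_policy} as literally written: once one of the two solutions can no longer accept items (e.g., a saturated cardinality constraint), the policy will add an item of negative marginal to the other, currently best, solution, and $\max_t f_i(S_{it})$ can drop below the value of the first pick. Note, however, that the paper's own proof needs the very same fact: its first case concludes $\E[f_i(S)]\ge \sum_{k}\frac{1}{n}f_i(j_k\,|\,\emptyset)$ from ``an item of value $f_i(j_\ell\,|\,\emptyset)$ is available and feasible at $i$'s first turn,'' which presupposes that the best final solution dominates the first pick. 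So this is a shared implicit assumption rather than a defect specific to your decomposition, and it is repaired by the standard convention that a simultaneous-greedy agent returns a dummy whenever all feasible marginals are negative; with that convention both partial solutions have nondecreasing value, $\max_t f_i(S_{it}) \ge f_i(\{x_1^i\}) \ge f_i(\{a_j\})$ follows, and your argument closes in all four cases.
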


\begin{proof}
Fix some agent $i\in [n]$ and consider the execution of Protocol \ref{alg:RMRR}. We will need a more refined notation for $\optm_i$, which so far denoted the value of an optimal solution available to agent $i$, given that $i-1$ items have been lost to agents $1$ through $i-1$ before $i$ gets to pick their first item. Let $\optm_i(\pi)$ be the value of an optimal solution available to agent $i$, given that $\pi^{-1}(i)-1$ items have been lost to agents $\pi(1), \pi(2), \ldots, \pi(\pi^{-1}(i)-1)$ before $i$ gets to pick their first item.

Let $j_1, \ldots, j_n$ be $i$'s top $n$ feasible singletons, breaking ties arbitrarily, i.e., $j_1 \in \argmax_{j\in M: \{j\}\in \mathcal{I}_i} f_i(j \,|\, \emptyset)$, and $j_{k+1} \in \argmax_{j\in M\setminus\{j_1, \ldots, j_k\}: \{j\}\in \mathcal{I}_i} f_i(j \,|\, \emptyset)$, for any $k\in [n-1]$. 
Suppose that $\pi^{-1}(i) = \ell$, i.e., $i$ is the $\ell$-th agent to pick an item in each round (which happens with probability $1/n$). 
We make two straightforward observations about the time when $i$ is about to choose an item: \textit{(a)} there is an available feasible item of value at least $f_i(j_{\ell} \,|\, \emptyset)$, and \textit{(b)} the maximum possible lost value from $i$'s perspective is at most as much as the sum of values of items $j_1, \ldots, j_{\ell-1}$, i.e., $\optm_i(\pi) \ge \opt_i -\sum_{k=1}^{\ell-1} f_i(j_{k} \,|\, \emptyset)$.

Let $S$ be the random set (or the best among the two random sets in the non-monotone case) that agent $i$ builds. Of course, the randomness is over the permutation $\pi$. We are going to distinguish two cases, depending on how large $\sum_{k=1}^{n} f_i(j_{k} \,|\, \emptyset)$ is.
Note that this is not a random event and, therefore, we do not need to take conditional expectations.
First, assume that $\sum_{k=1}^{n} f_i(j_{k} \,|\, \emptyset) \ge  \opt_i / \gamma$. Then, according to observation (a) above, we directly have
\[\E[f_i(S)] \ge \sum_{k=1}^{n} \frac{1}{n}f_i(j_{k} \,|\, \emptyset) \ge \frac{\opt_i}{\gamma\, n} \,.\]
Next, assume that $\sum_{k=1}^{n} f_i(j_{k} \,|\, \emptyset) <  \opt_i / \gamma$. Then, for any random permutation $\pi$, according to observation (b) above, we have 
\[\optm_i(\pi) \ge \opt_i - \!\!\sum_{k=1}^{\pi^{-1}(i)-1} \!\! f_i(j_{k} \,|\, \emptyset) \ge \opt_i -\sum_{k=1}^{n} f_i(j_{k} \,|\, \emptyset) \ge \Big( 1 - \frac{1}{\gamma} \Big) \opt_i \,,\]
and thus, we can apply the ex-post guarantees of Theorems \ref{thm:greedy_monotone_general}, \ref{thm:greedy_monotone_cardinality}, \ref{thm:greedy_non-monotone_general}, and \ref{thm:greedy_non-monotone_cardinality}. 

For instance, if $i$ has a $p_i$-system constraint and a monotone submodular objective, Theorem \ref{thm:greedy_monotone_general} now implies that no matter what $\pi$ is, 
\[f_i(S) \ge \frac{1}{(n + p_i)} \Big( 1 - \frac{1}{\gamma} \Big) \opt_i \,.\]
Taking $\gamma = 2+{p_i}/{n}$ in this case, we have 
\[\E[f_i(S)] \ge \frac{\gamma - 1}{\gamma(n + p_i)}  \opt_i  = \frac{1+{p_i}/{n}}{(2+{p_i}/{n})(n + p_i)}\opt_i = \frac{1}{2n+p_i}\opt_i = \frac{\opt_i}{\gamma\, n}\,,\]
thus concluding the first case of the theorem. Verifying that the stated values of $\gamma$ work for the other three cases is just a matter of simple calculations.
\end{proof}

In the second result of this section, we show that the assumptions needed (in terms of robustness) to obtain $O(p_i)$ approximation guarantees with respect to $\opt_i$, are significantly weaker. In particular, Theorem \ref{thm:randomized_robust} provides asymptotically best-possible guarantees for \textit{any} instances that are $(\Omega(n), O(1))$-robust with respect to everyone, no matter what the constants hidden in the asymptotic notation are.

\begin{theorem}\label{thm:randomized_robust}
Assume that an instance is $(\lceil \delta n \rceil , \beta)$-robust with respect to every agent for constant $\beta, \delta \in \mathbb{R}_+$. Also, assume that any agent $i$, who has a $p_i$-system constraint and a (monotone or non-monotone) submodular objective, chooses greedily in the Randomized Round-Robin protocol. Then, in expectation, $i$ achieves a $1/\Theta(p_i)$ fraction of\, $\opt_i$. 
\end{theorem}

\begin{proof}
We are going to prove the statement for the monotone case, using Corollary \ref{cor:greedy_monotone_general_robust_1}. 
For the non-monotone case (using Corollary \ref{cor:greedy_non-monotone_general_robust_1}) there are 
trivial differences which  will be briefly discussed at the end of the proof.

We assume $\delta n \ge 1$; if not, $n$ is constant and the statement follows from Theorem \ref{thm:randomized}. Fix an agent $i\in [n]$ and consider a run of Protocol \ref{alg:RMRR}. 
Suppose that $\pi^{-1}(i) = \ell$, i.e., $i$ is the $\ell$-th agent to pick an item in each round (which happens with probability $1/n$). We consider the case where $\ell \le \lceil \delta n \rceil$. We begin with the easy observation that  the given instance is $((\lceil \delta n \rceil-\ell +1) + \ell -1, \beta)$-robust with respect to  $i$, or equivalently,  $(\gamma n + \ell -1, \beta)$-robust with respect to  $i$, for $\gamma = \gamma(n) = (\lceil \delta n \rceil-\ell +1)/n$. Given that $\pi^{-1}(i) = \ell$, Corollary \ref{cor:greedy_monotone_general_robust_1} guarantees that in this run of Protocol \ref{alg:RMRR} agent $i$ builds a solution of value at least 
\[\frac{\opt_i}{\beta(p_i  + 1 + \frac{1}{\gamma})} = 
\frac{(\lceil \delta n \rceil-\ell +1) \opt_i}{\beta((p_i  + 1)(\lceil \delta n \rceil-\ell +1) + n)} 
\ge \frac{(\lceil \delta n \rceil-\ell +1) \opt_i}{\beta  ((p_i  + 1) (\delta n +1)  + n)}
\ge \frac{(\lceil \delta n \rceil-\ell +1) \opt_i}{\beta n ( 2 \delta (p_i  + 1)  + 1)}\,.\]
Like in the proof of Theorem \ref{thm:randomized}, let $S$ be the random set agent $i$ builds, where the randomness is over the permutation $\pi$. The above observation directly implies
\[\E[f_i(S) \,|\, \pi^{-1}(i) = \ell] \ge \frac{(\lceil \delta n \rceil-\ell +1) \opt_i}{\beta n ( 2 \delta (p_i  + 1)  + 1)}\,.\]
By the law of total expectation, we have
\begin{align*}
 \E[f_i(S)] &= \sum_{\ell=1}^{n}\E[f_i(S) \,|\, \pi^{-1}(i) = \ell] \Pr(\pi^{-1}(i) = \ell )  
            \ge \frac{1}{n} \sum_{\ell=1}^{\lceil \delta n \rceil} \E[f_i(S) \,|\, \pi^{-1}(i) = \ell]  \\ 
            & \ge \frac{1}{n}\, \frac{\opt_i}{\beta n ( 2 \delta (p_i  + 1)  + 1)}  \sum_{\ell=1}^{\lceil \delta n \rceil}  (\lceil \delta n \rceil-\ell +1)
            = \frac{\opt_i}{\beta n^2 ( 2 \delta (p_i  + 1)  + 1)}  \sum_{j=1}^{\lceil \delta n \rceil}  j \\
            & = \frac{\opt_i}{\beta n^2 ( 2 \delta (p_i  + 1)  + 1)} \,  \frac{\lceil \delta n \rceil(\lceil \delta n \rceil +1)}{2} 
            \ge \frac{\delta^2 n^2 \opt_i}{2\beta n^2 ( 2 \delta (p_i  + 1)  + 1)}  \\ 
            & = \frac{\delta^2 }{   4 \beta \delta (p_i  + 1)  + 2\beta}\, \opt_i\,,
\end{align*}
and as $\beta$ and $\delta$ are constant, $( 4 \beta \delta (p_i  + 1)  + 2\beta) / \delta^2 = \Theta(p_i)$, as claimed.

When the objective of agent $i$ is non-monotone, then $S$ denotes the best among the two random sets that agent $i$ builds. The proof is the same with slightly different calculations, due to the guarantee of Corollary \ref{cor:greedy_non-monotone_general_robust_1}. In particular, we would get 
\[\E[f_i(S) \,|\, \pi^{-1}(i) = \ell] \ge \frac{(\lceil \delta n \rceil-\ell +1) \opt_i}{4 \beta n ( \delta (2p_i  + 1)  + 1)}\,,\]
and, applying the law of total expectation,
\[\E[f_i(S)] \ge \frac{\delta^2 }{   4 \beta \delta (2p_i  + 1)  + 4\beta}\, \opt_i\,,\]
that again  is $(4 \beta \delta (2p_i  + 1)  + 4\beta) / \delta^2  = \Theta(p_i)$.
\end{proof}

\section{Discussion and Open Questions}
\label{sec:discussion}
In this work we studied a combinatorial optimization and fair division problem in which $n$ agents, each endowed with a submodular objective function and a combinatorial constraint, seek to maximize their respective objectives simultaneously over a common ground set, with the requirement that all solutions be feasible and disjoint.
We analyzed a family of Round-Robin protocols and showed that agents following simple greedy policies enjoy strong guarantees.
For monotone objectives, an agent $i$ with a $p_i$-system constraint achieves a $1/(n+p_i)$ fraction of the best value available when they first get to choose, improving to a $1/\Theta(p_i)$ fraction on instances that are $(\Omega(n),O(1))$-robust with respect to competition---which is asymptotically best-possible in polynomial time.
For non-monotone objectives, analogous results hold with a constant-factor loss.
Randomizing over the agent ordering translates these ex-post guarantees into ex-ante guarantees with respect to the unconstrained optimum $\opt_i$.
We also establish formal fairness guarantees: greedy policies in our Round-Robin protocol yield approximate FEF1 and FEFu allocations for both monotone and non-monotone objectives.
Moreover, via our Augmented Round-Robin protocol and a self-contained proxy algorithm, we obtain the first constant-factor feasible maximin share (FMMS) guarantees for agents with submodular valuations and general combinatorial constraints, for both monotone and non-monotone objectives---results that are, to the best of our knowledge, entirely new to the literature.

There are several natural directions for future work.

\noindent\textit{Tightening the approximation factors.}
The FEF1 factors we achieve---$1/(p+2)$ for a $p$-system constraint and $0.5$ for a cardinality constraint in the monotone case---and our FMMS factors---$1/(2p+5)$ and $1/5$ respectively---are likely not tight.
Improving these factors or providing matching hardness results, would significantly better our understanding of constrained fair division with submodular valuation functions.

\noindent\textit{Adaptive and richer protocols.}
Our protocol is fully non-adaptive: the order in which agents act is fixed in advance, and the protocol does not use any information about the agents' valuations or constraints.
A natural question is whether mild adaptivity can substantially improve both the optimization and fairness guarantees.
For example, one could imagine a protocol that, after each round, re-orders the remaining agents according to the value of their current solution.
Understanding and characterizing the trade-off between protocol simplicity, transparency, and guarantee quality is an intriguing direction.

\noindent\textit{Richer classes of objective functions.}
Our results rely crucially on submodularity.
Nevertheless, many real-world objectives---influence in networks with complex activation rules, coverage with correlated events, or buyer valuations in combinatorial auctions---are  better modeled by \emph{subadditive} functions. 
Extending our simple protocol framework  to subadditive objectives, even in the single-agent setting with complex constraints, is an interesting challenge that  seems to require new techniques.

\section*{Acknowledgments}
This work was supported by the ERC Advanced 
Grant 788893 AMDROMA ``Algorithmic and Mechanism Design Research in 
Online Markets'', the MIUR PRIN project ALGADIMAR ``Algorithms, Games, 
and Digital Markets'', the NWO Veni project No.~VI.Veni.192.153, and the project MIS 5154714 of the National Recovery and Resilience Plan Greece $2.0$ funded by the European Union under the NextGenerationEU Program.


\bibliographystyle{plainnat}
\bibliography{multi_submod_refs}

\end{document}